\documentclass[runningheads]{llncs}
\usepackage[T1]{fontenc}
\usepackage{graphicx}
\usepackage[dvipsnames]{xcolor}
\usepackage{amsmath,amssymb,amsfonts}%
\usepackage{xargs}
\usepackage{mathtools}
\usepackage{turnstile}
\usepackage{stmaryrd}
\usepackage{listings}
\usepackage{proof}
\usepackage{xfrac}
\usepackage{paralist}
\usepackage{thmtools}
\usepackage{algorithm}
\usepackage{algpseudocodex}
\usepackage{atlas-paper}
\usepackage{code}

\usepackage{booktabs}
\usepackage{tabularx}
\usepackage{threeparttable}
\usepackage{multirow}
\usepackage{makecell}
\usepackage{subcaption}
\usepackage{mathpartir}
\usepackage{mdframed}
\usepackage[shortlabels,inline]{enumitem}

\usepackage[backend=bibtex,bibstyle=lncs]{biblatex}
\usepackage[disable]{todonotes}

\newif\ifextended
\extendedtrue

\begin{document}
\title{Automated Amortised Analysis \\ of Skew Heaps and Leftist Heaps}

\todo{add double affiliation for Florian}
\author{Armin Walch\inst{1}\orcidID{0000-0002-2298-9353} \and
Georg Moser\inst{1}\orcidID{0000-0001-9240-6128} \and
Berry Schoenmakers\inst{2}\orcidID{0000-0001-6273-8930} \and
Florian Zuleger\inst{3,4}\orcidID{0000-0003-1468-8398}}
\authorrunning{A. Walch et al.}
%
\institute{
  Department of Computer Science, University of Innsbruck, Austria\\
  \email{\{armin.walch,georg.moser\}@uibk.ac.at} \and
  Department of Mathematics and Computer Science, Eindhoven University of Technology, Netherlands\\
  \email{berry@win.tue.nl}
  \and
  Institute of Logic and Computation, Vienna University of Technology, Austria\\
  \email{florian.zuleger@tuwien.ac.at}
  \and
  TUM School of Computation, Information and Technology, TU München, Germany
}

\maketitle              
\begin{abstract}
We study the fully automated amortised analysis of purely functional data structures like \emph{skew heaps}, as well as \emph{weight}- and \emph{rank-biased} leftist heaps. For that we generalise earlier works on automated amortised resource analysis by developing a type inference based approach with a generic type system. This allows for modular reasoning and the inference of precise and optimal cost bounds.

More specifically, we extend the work on the ATLAS system by Leutgeb et al.\, which was developed to cover the analysis of splay trees and some closely related data structures. To enable the analysis of skew heaps, however, and the even more challenging (amortised) analysis of leftist heaps, we have developed a range of new techniques for type-based automated analysis. By introducing a generic type system we allow for arbitrary (classes of) potential functions, compared to the use of hard-coded potential functions in ATLAS, which we have implemented in Haskell in an entirely modular way. We have also greatly enhanced the existing type inference algorithm by extensions in multiple directions, including path-sensitive reasoning, data structure invariants, and template parameters for piecewise defined potential functions.
We show how our newly developed system supports the use of all known potential functions for analysing skew heaps and leftist heaps, confirming the known bounds.
\keywords{Automatic Amortized Resource Analysis (AARA) \and type inference \and functional data structures \and skew heaps \and leftist heaps}
\end{abstract}

\section{Introduction}

\emph{Amortised analysis}~\cite{sleator1985self,tarjan1985amortized} determines worst-case runtime bounds by averaging costs over sequences of operations rather than analysing each operation in isolation. Using the potential method, due to Tarjan and Sleator, one assigns a \emph{potential function} to a data structure whose stored potential can “pay for’’ occasional expensive operations. This perspective has been particularly influential in the design of \emph{self-adjusting} data structures such as splay trees~\cite{sleator1985self,tarjan1985amortized}, which avoid maintaining structural invariants while still achieving good amortised performance.

\emph{Automated amortised resource analysis} (AARA), originating with Hofmann and Jost~\cite{hofmann2003static}, aims to automate such proofs by inferring resource-annotated types. Each typing rule generates constraints, which are solved by an SMT solver to obtain amortised costs; the resulting type signature yields the resource bound.

Verifying textbook amortised data structures remains a key benchmark for modern analysis tools. Nipkow and Brinkop~\cite{nipkow2019amortized} provided formal proofs of classical examples, and Leutgeb et al.~\cite{leutgeb2021atlas,leutgeb2022automated} introduced the \atlas\ prototype, which automatically analyses \emph{splay trees}, \emph{splay heaps}, and \emph{pairing heaps}. However, \atlas\ does not support \emph{skew heaps} or \emph{leftist heap variants}---structures that present new challenges and, in the case of leftist heaps, admit strong amortised bounds, as recently shown by Schoenmakers~\cite{schoenmakers2024amortized}. That work also posed the open problem of establishing a better bound for rank-biased heaps and explicitly pointed to innovations such as \atlas\ as possible enablers.

We found that the original \atlas\ analysis is insufficient for these structures. It cannot express the piecewise potential used in the classical skew-heap analysis~\cite{sleator1986self}, nor the sum-of-logs potential of Kaldewaij and Schoenmakers~\cite{kaldewaij1991derivation}, which requires negative terms. Moreover, \atlas\ lacks support for data-structure invariants and path-sensitive reasoning, both essential for analysing weight-biased and rank-biased leftist heaps.

To overcome these limitations, we introduce several extensions to the original analysis. First, we add new potential functions with Iverson terms, allowing us to model the original piecewise analysis. Second, we generalise \atlas’s sum-of-logs potential function. Third, for leftist heaps, we introduce data-structure invariants and enable path sensitivity. Rather than relying on ad hoc extensions, we develop a more abstract type system that cleanly separates the core reasoning from potential-function families; we realise this design in a modular Haskell prototype that decouples these families from the type system, enabling the support of fundamentally different classes of potential functions. The new type system applies to arbitrary data structures and constitutes a contribution in its own right; we instantiate it for leftist heaps and skew heaps as particularly challenging benchmarks.

\definecolor{tabbg}{gray}{0.94}
\begin{figure}[t]

  \begin{subfigure}{\textwidth}
    \renewcommand{\arraystretch}{1.1}
    \setlength{\fboxsep}{0pt}
    \colorbox{tabbg}{
      \hspace{-2.1mm}
      \begin{minipage}{0.65\textwidth}
        \begin{tabular}{lr}
          \toprule
          $\Phi$ & $\Phi(\tree{t}{a}{u})$\\
          \midrule
          $\Phi_{[t < u]}$ & $\Phi_{[t < u]}(t) + [t < u] + \Phi_{[t < u]}(u)$\\
          $\Phi_{u}$ & $\Phi_{u}(t) + \sfrac{1}{2}\log\abs{u} + \Phi_{u}(u)$\\
          $\Phi_{-t}$ &$\Phi_{-t}(t) + \sfrac{1}{2}\log(\abs{t} + \abs{u}) - \sfrac{1}{2}\log\abs{t} + \Phi_{-t}(u)$\\
          $\Phi_{\phi}$ & $\Phi_{\phi}(t) + \frac{105}{163}\log(\abs{t} + \abs{u}) - \frac{105}{163}\log\abs{t} + \Phi_{\phi}(u)$\\
          \bottomrule
        \end{tabular}
      \end{minipage}
      \hspace{9.2mm}
      \begin{minipage}{.252\textwidth}
        \renewcommand{\arraystretch}{1.49}
        \begin{tabular}{ll}
          \toprule
          $[\ ]$ & Iverson brackets\\
          $\curlyvee$ &path-sensitivity\\
          $\Box$ & invariant\\
          $^{\ddag}$ & worst case bounds\\
          \bottomrule
        \end{tabular}
      \end{minipage}}
    \caption{Legend}
  \end{subfigure}%
  \medskip

  \begin{subfigure}{\textwidth}
    \begin{tabularx}{\textwidth}{l>{\centering}p{3.5cm}>{\centering}p{5mm}>{\centering}p{5mm}>{\centering}p{5mm}Xlr}
      \toprule
      $\Phi$ & Source & \multicolumn{3}{r}{Type System} && \multicolumn{2}{c}{Inferred Bounds}\\
          & &$[\ ]$ & $\curlyvee$ & $\Box$ && \texttt{meld} & \texttt{del\_min} \\
      \midrule
      \multicolumn{8}{c}{skew heap} \\
      \addlinespace[0.4em]
      $\Phi_{[t < u]}$ &~\cite{sleator1986self,nipkow2019amortized} &  $\checkmark$& $\times$&$\times$ && $3\log(\abs{x} + \abs{y})$ &  $3\log\abs{x}$ \\
    $\Phi_{u}$ &  &$\times$ &$\times$ &$\times$ && $2\log(\abs{x} + \abs{y})$& $\sfrac{3}{2}\log\abs{x}$ \\
    $\Phi_{-t}$ & ~\cite{kaldewaij1991derivation}&  $\times$& $\times$& $\times$&& $\sfrac{3}{2}\log(\abs{x} + \abs{y})$ & $\sfrac{3}{2}\log\abs{x}$\\
    $\Phi_{\phi}$ & ~\cite{kaldewaij1991derivation}&  $\times$& $\times$& $\times$&& $1.44\log(\abs{x} + \abs{y})$ & $1.44\log\abs{x}$\\
    \midrule
      \multicolumn{8}{c}{weight-biased leftist heap}\\
      \addlinespace[0.4em]
    n.a.&~\cite{cho1998weight} & $\times$  & $\checkmark$ & $\checkmark$&& $2\log(\abs{x} + \abs{y})^\ddag$ & $2\log\abs{x}^\ddag$ \\
    $\Phi_{\phi}$ &~\cite{schoenmakers2024amortized} & $\times$& $\checkmark$ & $\times$&&$1.44\log(\abs{x} + \abs{y})$ & $1.44\log\abs{x}$ \\
    \midrule
     \multicolumn{8}{c}{rank-biased leftist heap}\\
      \addlinespace[0.4em]
    n.a.& ~\cite{crane1972linear} & $\times$&  $\checkmark$ & $\checkmark$ && $2\log(\abs{x} + \abs{y})^\ddag$ & $2\log\abs{x}^\ddag$ \\
      $\dag x$ & ~\cite{schoenmakers2024amortized}& $\times$& $\times$ & $\times$&&$\log(\abs{x} + \abs{y})$ & $2\log\abs{x}$ \\
    \bottomrule
    \end{tabularx}%
    \caption{Bounds}
  \end{subfigure}
  \caption{Amortised cost of \texttt{meld} and \texttt{delete\_min} operations for leftist heaps, for different potential functions $\Phi$, inferred by our prototype implementation.}
\label{fig:results}
\end{figure}
Figure~\ref{fig:results} summarises our results, highlighting the type-system features required for the different data structures and potential functions. The bounds match those reported in the literature (with references given in the table), but—crucially—are obtained fully automatically, replacing previously tedious pen-and-paper analyses and mechanised proofs.
For ease of comparison, Figure~\ref{fig:results} reports clean upper bounds in the style of prior work, while the exact bounds computed by our prototype are listed in \ifextended Appendix~\ref{sec:exact-bounds}. \else Table~\ref{tab:bounds} in Section~\ref{sec:case-studies}.\fi
\begin{itemize}
\item We have developed a generic type system for amortised cost analysis that on the one hand can be instantiated to obtain the previous \atlas\ tool and on the other hand easily generalizes to new classes of potential functions. 
\item We extend the existing type inference algorithm for amortised
  analysis by 
  \begin{enumerate*}[(i)]
  \item path-sensitive reasoning;
  \item a form of refinement types, to incorporate data structure
    invariants;
  \item template parameters, which are analogous to ghost variables in Hoare logic. 
  \end{enumerate*}
\item Building on these extensions, we present the first automated amortised analysis of skew heaps, as well as weight- and rank-biased leftist heaps, achieving the best known bounds—optimal for skew and weight-biased heaps~\cite{schoenmakers1997tight, schoenmakers2024amortized}.
\item The analysis has been implemented in a novel Haskell prototype that generalises $\atlas$ and relies on SMT solving via Z3. The prototype implementation will be submitted to the artefact evaluation.
\end{itemize}

\paragraph{Outline.} The remainder of this paper is structured as follows. Section~\ref{sec:overview} provides a high-level overview of skew heaps, leftist heaps, and our technique. Sections~\ref{sec:semantics} and~\ref{sec:aara} present our core language, cost semantics, and type rules, including the main soundness result. Section~\ref{sec:inference} describes our type inference algorithm, and Section~\ref{sec:case-studies} instantiates the analysis for skew heaps and leftist heaps. Related work is discussed in Section~\ref{sec:related-work}, and Section~\ref{sec:conclusion} concludes the paper.


\section{Overview}
\label{sec:overview}
\subsection{Skew Heaps and Leftist Heaps}
We analyse three different meldable heap data structures. Meldable heaps are binary-heap data structures used to implement priority queues, where operation \texttt{meld} merges two heaps while preserving the heap property~\cite{okasaki1998purely, cormen2022introduction}. It forms the core of the primary operations \texttt{insert} and \texttt{delete\_min}, whose complexity therefore depends on an efficient implementation of \texttt{meld}. Many variants of meldable heaps aim to keep the merge path short to limit the cost of this operation~\cite{okasaki1998purely, cormen2022introduction}.

\paragraph{Rank-biased leftist heaps} introduced by Crane in 1972~\cite{crane1972linear} and further developed by Knuth~\cite{knuth1998art}, maintain, for every node in the tree, the invariant that the rank of the left subtree, i.e., the length of its rightmost path, dominates the rank of the right subtree. This results in the left subtree usually being larger than the right subtree, giving the data structure its name.

\paragraph{Weight-biased leftist heaps} utilise the number of nodes of a tree---its weight---as an alternative bias. They were introduced by Cho and Sahni~\cite{cho1998weight}. Both variants merge along the rightmost path and, if the corresponding invariant is violated afterwards, swap the result with the left subtree. They also share the same worst-case cost for \texttt{meld}, namely $2\log(\abs{x} + \abs{y})$ for heaps $x$ and $y$.

\paragraph{Skew heaps} were presented by Sleator and Tarjan as the self-adjusting counter-part of leftist heaps~\cite{sleator1986self}. They support \texttt{meld} with amortised logarithmic complexity in the size of the input heaps, while requiring no invariants, which greatly simplifies their implementation. More specifically, the exact bound for \texttt{meld} has been shown to be $\oldlog_{\phi}(\abs{x} + \abs{y})$, where $\phi$ denotes the golden ratio~\cite{kaldewaij1991derivation}. More recently, Schoenmakers demonstrated that the same amortised upper bound also holds for weight-biased leftist heaps~\cite{schoenmakers2024amortized}.

\paragraph{Implementation.} Figure~\ref{fig:leftist-code} shows a purely functional implementation of a generalised leftist heap in our first-order core language, which is introduced formally in Section~\ref{sec:semantics}. The variants differ only in their balancing strategy, i.e., the implementation of \texttt{bal}. We do not store the measures $\#$ and $\dag$ as tree fields, as this would require proving correctness of the bookkeeping, which lies outside the scope of this work; however it could be achieved semi-automatically, using liquid types~\cite{rondon2008liquid,vazou2013abstract}. Instead, we represent them via the built-in functions \texttt{weight} and \texttt{rank}, assumed to be constant-time, following the approach of Schoenmakers~\cite{schoenmakers2024amortized}.

\begin{figure}[t]
  \centering
  \begin{minipage}[t]{0.48\textwidth}

\begin{lstlisting}
delete_min :: Tree Base -> Tree Base
delete_min x = match x with
  | leaf       -> leaf
  | node t a u -> meld t u

meld :: (Tree Base * Tree Base) -> Tree Base
meld x y = match x with
  | leaf       -> y
  | node t a u -> match y with
    | leaf       -> (node t a u)
    | node v b w -> if a <= b
      then bal t a
        (~ meld (node v b w) u)
      else bal v b
        (~ meld (node t a u) w)
\end{lstlisting}
  \end{minipage}\hfill
  \begin{minipage}[t]{0.48\textwidth}
\begin{lstlisting}
(* Skew Heap *)
bal :: (Tree Base  * Base * Tree Base) -> Tree Base
bal t a u = node u a t

(* Weight-Biased Leftist Heap *)
bal t a u =
  if weight t <= weight u
  then (node u a t)
  else (node t a u)

(* Rank-Biased Leftist Heap *)
bal t a u =
  if rank t <= rank u
  then (node u a t)
  else (node t a u)
\end{lstlisting}
  \end{minipage}
  \caption{Purely functional leftist heap implementation with three different balancing strategies implemented in our core language~(\lstinline{\~ e} stands for $\tick[1]{e}$).}
  \label{fig:leftist-code}
\end{figure}

\subsection{Automated Amortised Resource Analysis}

In the following, we introduce and motivate our formulation of AARA, which generalises the approach of Leutgeb et al.~\cite{hofmann2022type, leutgeb2021atlas, leutgeb2022automated}. The central idea of AARA is to infer, for each function $f$ a signature of form
\begin{equation*}
  \typed{f}{\atypdcl{(\typed{\vec{x}}{\vec{\alpha}})}{\Psi(\vec{x})}{\beta}{\Phi(\val)}} \tpkt
\end{equation*}
Uppercase Greek letters range over \emph{type annotations}. Such a signature expresses that $f$ given arguments $\vec{x}$ with potential $\Psi(\vec{x})$, produces the result $\val$ with potential $\Phi(\val)$. The function $\Psi$ is not the same as $\Phi$ since it additionally includes the amortised costs of $f$. We use the function \texttt{swap} as a running example to introduce our formalism. It captures the fundamental operation of skew heaps: swapping the left and right subtrees along the rightmost path. Its implementation is shown in the following listing, where the tick operator \lstinline{~} denotes one unit of cost per recursive function call.
\begin{lstlisting}
swap :: Tree Base -> Tree Base
swap x = match x with
| leaf       -> leaf
| node t a u -> let u' = ~ swap u in node u' a t
\end{lstlisting}
Our analysis infers the following type signature:
\begin{equation*}
  \typed{\texttt{swap}}{\atypdcl{\typed{x}{\Tree}}{\log\abs{x} + \phi(x)}{\Tree}{\phi(\val)}}
\end{equation*}
This corresponds to the standard function signature $\typed{\texttt{swap}}{\typed{x}{\Tree} \rightarrow \typed{\val}{\Tree}}$. Here, $\Tree$ denotes a tree data type over some element type. We write $\abs{x}$ for the number of leaves in tree $x$. The signature states that \texttt{swap} requires potential $\phi(\vec{x})$ (defined in Section~\ref{sec:aara}) and incurs amortised cost $\log\abs{x}$. While its worst-case cost is linear—since it descends the right spine—this amortised behaviour captures a similar structural effect that underlies the efficiency of skew heaps. Function \texttt{swap} and its tight analysis appeared previously as Proposition~3 accompanying~\cite{Sch92}. 

To realise this idea formally, we define the following form of type judgements, whose invariant reflects a generalisation of the classical potential method~\cite{tarjan1985amortized} to a pair of functions $\Psi$ and $\Phi$, following~\cite{hofmann2022type}:
\begin{equation*}
 \text{Given}\quad \eval{\sigma}{c}{e}{v},\quad\text{if}\quad \tjudge{\Gamma}{\Psi}{e}{\alpha}{\Phi} \quad\text{then}\quad \potAlt{\Gamma\sigma} \geqslant c + \pot{v} .
\end{equation*}
In words, given a substitution $\sigma$ from variables to values, if expression $e$ evaluates to value $v$, and we derive the type annotations $\Psi$ and $\Phi$, the amortised cost and potential $\Psi(\Gamma\sigma)$ of the typing environment $\Gamma$ are sufficient to cover both the evaluation cost $c$, and the resulting potential $\Phi(v)$.

To analyse a program we derive a signature $\typed{\vec{x}}{\vec{\alpha}}|\Psi \rightarrow \beta|\Phi$ for every function $f$, from a judgement $\tjudge{\typed{\vec{x}}{\vec{\alpha}}}{\Psi}{e}{\beta}{\Phi}$ for the function body $e$. The type derivation proceeds by applying type rules, according to our mostly syntax directed inference algorithm. For every rule application, the algorithm collects constraints involving type annotations, resulting in a linear real arithmetic constraint system, to be solved by Z3.

\paragraph{Generic Type System.} The analysis of leftist heaps requires the consideration of different families of potential functions.
Thus, in contrast to previous approaches, where the potential function is hard-coded into the type system, we present a generic type system from which we instantiate a type inference algorithm for specific choices of potential functions.
Based on this design, our implementation is compositional and modular: it recovers the expressive power of $\atlas$~\cite{leutgeb2021atlas} while enabling additional families of potential functions with minimal effort.
We describe our potential functions for different leftist heap variants in Section~\ref{sec:case-studies}.

\paragraph{Data Structure Invariants.} Reasoning about leftist heaps, it is essential to keep track of their invariants. To that end, we refine the existing base data types---particularly tree types---by replacing their conventional inductive definitions with refined inductive predicates that explicitly encode the invariant. These refinements permit the type system to track and enforce the properties required by leftist heap invariants throughout the analysis. For example a weight-biased tree will be represented by the refinement $W(\Tree) := \{ \leaf \} \mid \{\tree{t}{a}{u} \mid \#{t} \geq \# u\}$, where the weight $\#x$ of tree $x$ is defined as $\#\leaf = 0$ and $\#(\tree{t}{a}{u}) = \#t + \#u + 1$. A rank-biased tree is expressed analogously, where the rank of tree $x$ is defined as $\dag \leaf=0$ and $\dag(\tree{t}{a}{u}) = \dag u + 1$.

\paragraph{Path-Sensitivity.} The type system presented in~\cite{hofmann2022type, leutgeb2021atlas} includes the rule $\ruleite$, which applies to if-then-else statements. Their analysis is path-insensitive: it disregards the branching condition and therefore treats both branches symmetrically. We refine this rule by adding the branch condition to the typing context. This enhancement allows the analysis to exploit the additional information provided by the condition and to perform more precise, path-sensitive reasoning.

As an example consider the \texttt{bal} function for weight-biased leftist heaps shown in Figure~\ref{fig:leftist-code}. Suppose that the potentials of $\tree{u}{a}{t}$ and $\tree{t}{a}{u}$ differ --- that is, the potential function is asymmetric. In particular, assume that we are given potential $\log\abs{u}$, but in the first branch we require potential $\log\abs{t}$ due to the rotation. Crucially, the branch condition provides additional information: when \lstinline{weight t <= weight u} holds, we can derive $\log\abs{t} \leqslant \log\abs{u}$, ensuring that the available potential suffices to cover the potential of the result.

\paragraph{Template Parameters.}
So far, we have only seen type annotations involving program variables, such as~$\log\abs{x}$ for $\typed{x}{\Tree} \in \Gamma$. We now additionally allow free variables, denoted by uppercase letters. They are analogous to the \emph{ghost variables} known from Hoare logic~\cite{reynolds1998theories}. This extension enables the analysis to derive universally quantified statements—i.e., function signatures—that can later be instantiated with concrete program variables when needed. As an example, consider the signature $\atypdcl{\typed{x}{\Tree}}{\log(\abs{x} + \abs{X})}{\Tree}{\log(\abs{\val} + \abs{X})}$, corresponding to the property $\forall X.\ \log(\abs{x} + \abs{X}) \geq \log(\abs{\val} + \abs{X})$, which can be instantiated as $\log(\abs{x} + \abs{z}) \geq \log(\abs{\val} + \abs{z})$ for instance. We require template parameters to support function composition for the piecewise potential function of skew heaps, as we will demonstrate in Section~\ref{sec:case-studies}.


\section{Language and Cost Semantics}
\label{sec:semantics}
\begin{figure}[t!]
\centering
\begin{align*}
  v & \Coloneqq \false \mid \true && \mid \const\ \vec{v} \\ 
  \circ & \Coloneqq \textup{\lstinline{<}}
      \mid \textup{\lstinline{<=}}
      \mid \textup{\lstinline{>}}
      \mid \textup{\lstinline{>=}}
	\mid \textup{\lstinline{=}}
  \\
  e & \Coloneqq x 
                               && \mid v \\
    & \const\ \vec{x}
                               && \mid \error\\
    & \mid f~x_1~\dots~x_n
    && \mid \tick[$\color{black}a$]{e}\\
    & \mid e_1 \circ e_2
    && \mid \cif\ e_1\ \cthen\ e_2\ \celse\ e_3\\
    & \mid \match\ x\ \with
       \{\textup{\lstinline{|}} v \arrow e\}
    && \mid \vlet\ x~\equal~e_1\ \vin\ e_2\\
\end{align*}
\vspace{-0.9cm}
\caption{Abstract syntax of our (first-order) core language. Expressions are denoted by $e$, variables by $x$, and values by $v$.}
\label{fig:syntax}
\end{figure}
We define a small core language used to implement our benchmarks and its corresponding underlying cost semantics. The language shown in Figure~\ref{fig:syntax} is a pure first order functional language, based on the specification given by Leutgeb et al.~\cite{leutgeb2021atlas}.
In addition to the usual ML-like language primitives, like pattern matching and let bindings, we include a tick operator $\tick[a]{e}$ that does not change the value of the wrapped expression, but rather attributes a non-negative rational cost of $a \in \Qplus$ to the evaluation of this expression. We assume that each constructor $\const$ is associated with a fixed type of the form $\vec{\alpha} \to \alpha$. The expression $\error$ halts evaluation and inhabits every type, to allow partial implementations. 

We consider the following types. First, booleans ($\Bool$) with values $\true$ and $\false$. Second, an abstract base type $\m{Base}$, abbreviated as $\Base$, which carries no potential. Finally, trees ($\m{Tree}\ \m{Base}$), abbreviated $\Tree$, given by the constructors $\typed{\leaf}{\Tree}$ and $\typed{\tree{t}{a}{u}}{\Tree \times \Base \times \Tree \rightarrow \Tree}$. 

Our cost semantics is given in the style of big-step operational semantics with a call-by-value evaluation strategy. We define judgements $\eval{\sigma}{c}{e}{v}$, expressing that, given an environment (substitution) $\sigma$ mapping variable to values, the expression $e$ evaluates to value $v$ with costs $c$ (see Figure~\ref{fig:semantics}).
\begin{figure}[t]
\def\MathparLineskip{\lineskip=3.5pt}
\begin{mathpar}
\input{rules/sem/var.tex}
\and
\input{rules/sem/const.tex}
\and
\input{rules/sem/tick.tex}
\and
\input{rules/sem/app.tex}
\and
\input{rules/sem/let.tex}
\and
\input{rules/sem/itetrue.tex}
\and
\input{rules/sem/itefalse.tex}
\and
\input{rules/sem/match.tex}
\end{mathpar}
Here, $\sigma[x \mapsto w]$ represents the environment $\sigma$ updated so that $\sigma[x \mapsto w](x) = w$, while all other variable mappings remain unchanged. For function application, we define $\sigma'$ as $\sigma' := \{y_1 \mapsto x_1\sigma, \dots, y_n \mapsto x_n\sigma\}$ and for the rule covering \lstinline{match} $\sigma'' = \sigma \uplus \{x_1 \mapsto v_1,\dots, x_n \mapsto v_n\}$.
\caption{Evaluation rules for the big-step cost semantics.}%
\label{fig:semantics}
\end{figure}

\section{Declarative Typing for Amortised Analysis}
\label{sec:aara}
\subsection{Resource Type Judgements}
Our analysis is formulated using type judgements of the form $\tjudge{\Gamma;\guard}{\Psi}{e}{\alpha}{\Phi}$. The context $\Gamma$ is a standard typing environment that maps program variables to their types. The component $\guard$ is a collection of \emph{guard predicates} associated with the context; these predicates capture assumptions about the program state and will be defined precisely when we introduce data-structure invariants. The annotations $\Psi$ and $\Phi$ denote \emph{resource functions}, which map valuations of variables to real numbers.
 Formally, the left-hand side $\Gamma;\guard{\mid}\Psi$ should be read as $\Gamma;\guard{\mid}\Psi(\dom(\Gamma))$, where the resource function $\Psi$ is applied to the variables in the typing context to compute their potential.
For brevity, we write $\Psi(\Gamma)$ to denote $\Psi(\dom(\Gamma))$. The right-hand side of the judgement is annotated with the resource function $\Phi$. The judgement asserts that the expression $e$ has type $\alpha$ and that, given initial potential $\Psi(\Gamma)$, its evaluation yields a result value, represented by the variable $\val$ with potential $\Phi(\val)$.

\subsection{Well-Typed Programs} A program $P$ is a finite set of function definitions. Our analysis computes, for each function $f(\vec{x}) = e \in P$, two sets of signatures: $\mathcal{F}(f)$ and $\mathcal{F}^{\text{cf}}(f)$. We write $\typed{\vec{x}}{\vec{\alpha}}$ as shorthand for the sequence of bindings $\typed{x_1}{\alpha_1},\dots,\typed{x_n}{\alpha_n}$. For each \emph{costed signature} $\atypdcl{\typed{\vec{x}}{\vec{\alpha}}}{\Psi}{\beta}{\Phi} \in \mathcal{F}(f)$, we require the type judgement $\tjudge{\typed{\vec{x}}{\vec{\alpha}}}{\Psi}{e}{\beta}{\Phi}$ to hold.
These signatures account for tick statements and are required to compute amortised costs. In addition we compute \emph{cost-free signatures} $\atypdcl{\typed{\vec{x}}{\vec{\alpha}}}{\Psi}{\beta}{\Phi} \in \mathcal{F}^{\text{cf}}(f)$, such that the type judgement $\tjudgenacf{\typed{\vec{x}}{\vec{\alpha}}|\Psi}{e}{\beta}{\Phi}$ holds. The cost-free type judgement $\vdash^{\text{cf}}$ does not account for costs. Formally, it coincides with the ordinary typing judgement applied to the expression obtained by erasing all $\tick{}$ annotations. This judgement can therefore be used to establish auxiliary properties required by the analysis, such as emulating a size analysis; we illustrate this use in our case studies. A program satisfying these conditions is called \emph{well-typed}. 

\begin{figure}[t]
  \begin{mathpar}
    \input{rules/type/var.tex}
    \and
    \input{rules/type/const.tex}
    \and
    \input{rules/type/match.tex}
    \and
    \input{rules/type/ite.tex}
    \and
    \input{rules/type/let.tex}
    \and
    \input{rules/type/app.tex}
    \and
    \input{rules/type/tickd.tex}
\end{mathpar}
\caption{Syntax-directed type rules.}
\label{fig:type-syntax}
\end{figure}

\subsection{Resource Type System} Next, we present the type system that gives rise to the typing judgements introduced above. We distinguish between the syntax directed rules given in Figure~\ref{fig:type-syntax} and the structural rules given in Figure~\ref{fig:type-struct}. These rules provide a generic abstraction of the type system from~\cite{hofmann2022type, leutgeb2021atlas} that applies uniformly to arbitrary data types and potential functions; our extensions are \changed{\text{highlighted in grey}}. While the syntax-directed rules involve only equality constraints—ensuring that potential and cost are always preserved—we also require a set of structural rules. In contrast they can modify or discard potential and may, in principle, be applied at any point in the type derivation.

\paragraph{Example Derivation.} To develop an intuition for the type rules, we give a step-by-step explanation of the type derivation of $\texttt{swap}$, based on the potential function $\phi$ defined as $\phi(\leaf) = 1$ and $\phi(\tree{t}{a}{u}) = \phi(t) + \sfrac{1}{2}\log\abs{u} + \phi(u)$.

\paragraph{Deconstructing Potential.} Since the function body consists of a match expression, we first apply the corresponding rule $\rulematch$, that ensures the potential of the matched variable is distributed to the pattern variables.
\begin{equation*}
  \inferrule*[right=\rulematch]{
     \tjudgeml{\typed{t}{\Tree}, \typed{u}{\Tree} \mid \log(\abs{t} + \abs{u})
       + \phi(t) + \sfrac{1}{2}\log\abs{u} + \phi(u)}{\vlet\ u' = \text{\lstinline{\~ swap u}}\ \vin \dots}{\Tree}{\phi(\val)} \\
     \dots
  }
  {
    \tjudgeml{\typed{x}{\Tree} \mid \log\abs{x} + \phi(x) }{
      \match\ x\
      \with \text{\lstinline{|}} \tree{t}{a}{u} \arrow\ \dots\
      \text{\lstinline{|}} \leaf \arrow\ \dots
    }{\Tree}{\phi(\val)}
  }
\end{equation*}
For brevity we only consider the $\flstc{node}$ case, which has the obligation $\Psi(\typed{x}{\Tree}) = \Psi_{\flstc{node}}(\typed{t}{\Tree}, \typed{a}{\Base},\typed{u}{\Tree})$. In this instance we can easily verify that the equation $\log\abs{x} + \phi(x) = \log(\abs{t} + \abs{u}) + \phi(t) + \sfrac{1}{2}\log\abs{u} + \phi(u)$ holds.

\begin{figure}[t]
\begin{mathpar}
\input{rules/type/wvar.tex}
\and
\input{rules/type/shift.tex}
\and
\input{rules/type/share.tex}
\and
\input{rules/type/w.tex}
\end{mathpar}
\caption{Structural type rules.}\label{fig:type-struct}
\end{figure}

\paragraph{Expert Knowledge.}
\label{sec:aara-expert}
We apply the weakening rule $\rulew$ to continue the type derivation of \texttt{swap}, reducing the left-hand side potential of the judgement.
\begin{equation*}
  \inferrule*[right=\rulew]{
    \tjudgena{\typed{t}{\Tree}, \typed{u}{\Tree} \mid \sfrac{1}{2}\log\abs{t} + \log\abs{u} + 1 + \phi(t) + \phi(u)}{\vlet\ u' = \dots}{\Tree}{\phi(\val)}
  }
  {
    \tjudgena{\typed{t}{\Tree}, \typed{u}{\Tree} \mid \log(\abs{t} + \abs{u})
       + \phi(t) + \sfrac{1}{2}\log\abs{u} + \phi(u)}{\vlet\ u' =  \dots}{\Tree}{\phi(\val)}
   }
 \end{equation*}
 This step relies on the inequality $\log(\abs{t} + \abs{u}) \geq \sfrac{1}{2}\log\abs{t} + \sfrac{1}{2}\log\abs{u} + 1$, where the constant term $1$ is crucial to cover the tick cost of the recursive call later on. The weakening rule is the cornerstone of this analysis technique, as it allows incorporation of \emph{expert knowledge}~\cite{hofmann2022type, leutgeb2021atlas}, i.e., axioms about resource functions. 

 For example, the above inequality is a special case of the following lemma—an instance of Jensen’s inequality—also essential in the analysis of splay trees~\cite{leutgeb2021atlas}.
\begin{lemma}
\label{lem:favorite-log-lemma}
 Let $x,y \geq 1$ then $\log(x + y) \geq \sfrac{1}{2}\log x + \sfrac{1}{2}\log y + 1$.
\end{lemma}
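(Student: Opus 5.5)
The plan is to reduce the statement to the arithmetic–geometric mean inequality. Throughout, $\log$ denotes the binary logarithm, as is standard in this line of work (the constant $1$ equals $\log 2$). Since $\log$ is monotone, it suffices to compare the arguments after exponentiating.

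First, I rewrite the right-hand side as a single logarithm:
\[
  \sfrac{1}{2}\log x + \sfrac{1}{2}\log y + 1 = \log\bigl(2\sqrt{xy}\bigr).
\]
This is valid because $x,y \geq 1 > 0$, so every logarithm involved is defined. By monotonicity of $\log$ on the positive reals, the claim is then equivalent to
\[
  x + y \geq 2\sqrt{xy}.
\]
This is exactly AM--GM, and it follows from
\[
  x + y - 2\sqrt{xy} = \bigl(\sqrt{x}-\sqrt{y}\bigr)^2 \geq 0 .
\]

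I could alternatively phrase the argument as Jensen's inequality for the concave function $\log$: apply it at the midpoint $\frac{x+y}{2}$ and add $\log 2 = 1$. This matches the remark in the text that the lemma is an instance of Jensen's inequality.

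There is no real obstacle. The one point needing care is the base of the logarithm: the additive constant $1$ is $\log 2$ only for base $2$. The hypothesis $x,y\ge 1$ is stronger than necessary, since positivity suffices; it merely reflects that tree sizes, counted as numbers of leaves, are at least $1$.
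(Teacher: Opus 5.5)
Your proof is correct and takes essentially the paper's route: the paper justifies this lemma as an instance of Jensen's inequality (equivalently AM--GM). Your argument is also exactly the case $a=b=\sfrac{1}{2}$ of the paper's proof of Lemma~\ref{lem:new-favorite-log-lemma}, which exponentiates to $(x+y)^{a+b}/(x^a y^b)\ge 2$, here $(x+y)/\sqrt{xy}\ge 2$; your remarks that $\log$ must be binary and that positivity of $x,y$ suffices are accurate.
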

\paragraph{Composition.} The next step in the type derivation of $\texttt{swap}$ is an application of the rule $\rulelet$.
\begin{equation*}
  \inferrule*[right=\rlabel{\rulelet}]{
    \tjudgeml{\typed{u}{\Tree} \mid \log\abs{u} + \phi(u) + 1}{\text{\lstinline{\~ swap u}}}{\Tree}{\phi(\val)}
    \\
    \tjudgeml{\typed{l}{\Tree}, \typed{x}{\Tree} \mid \phi(u') + \sfrac{1}{2}\log\abs{t} + \phi(t)}{\tree{u'}{a}{t}}{\Tree}{\phi(\val)}
  }{
    \tjudgeml{\typed{t}{\Tree}, \typed{u}{\Tree} \mid
      \sfrac{1}{2}\log\abs{t} + \log\abs{u} + 1 + \phi(t) + \phi(u)}{\vlet\ u' = \text{\lstinline{\~ swap u}}\ \vin \dots}{\Tree}{\phi(\val)}
  }
\end{equation*}
 
Our language requires the result of each function call to be bound via a let-binding, which is essential for correctly distributing potential during function composition. The rule splits the potential between the binding expression and the body according to the variables in the contexts $\Gamma$ and $\Delta$, by decomposing the potential as $\Psi(\Gamma,\Delta) = \Psi_1(\Gamma) + \Psi_{\Delta}(\Delta) + \Psi_{\text{mix}}(\Gamma,\Delta)$ and $\Psi_2(\typed{x}{\alpha},\Delta) = \Phi_1(e_1) + \Psi_{\Delta}(\Delta) + \Omega(\typed{x}{\alpha},\Delta)$, where $\Psi_{\text{mix}}(\Gamma,\Delta)$ contains terms involving both $\Gamma$ and $\Delta$. Since this mixed potential can not be passed through the binding directly, we relate it to the body potential by requiring $\Psi_{\text{mix}}(\Gamma,\Delta) \geq \Omega(\typed{x}{\alpha},\Delta)$. This constraint is discharged by a set of cost-free typings.

For \texttt{swap} the binding expression only contains the variable $u$, so all potential terms involving $u$ are assigned to the type judgement for the binding. The body is then typed with the potential of the remaining variables and the newly introduced potential of variable $u'$.


The type judgement for the binding is resolved by an application of $\ruletick$, which adds the cost to the right-hand side of the judgement, followed by $\ruleshift$ which removes the costs from both sides and finally an application of the rule $\ruleapp$, governing the function call. In its simplest form $\ruleapp$ ensures that both the costs and potential match the signature of the called function.
\begin{equation*}
  \inferrule*[right=\rlabel{\ruletick}]{
    \inferrule*[right=\rlabel{\ruleshift}]{
      \inferrule*[right=\rlabel{\ruleapp}]{
        \atypdcl{\typed{x}{\Tree}}{\log\abs{x} + \phi(x)}{\Tree}{\phi(\val)} \in \mathcal{F}(\texttt{swap})
      }{
        \tjudge{\typed{u}{\Tree}}{\log\abs{u} + \phi(u)}{\text{\lstinline{swap u}}}{\Tree}{\phi(\val)}
      }
    }{
      \tjudge{\typed{u}{\Tree}}{\log\abs{u} + \phi(u) + 1}{\text{\lstinline{swap u}}}{\Tree}{\phi(\val) + 1}
    }
  }{
    \tjudge{\typed{u}{\Tree}}{\log\abs{u} + \phi(u) + 1}{\text{\lstinline{\~ swap u}}}{\Tree}{\phi(\val)}
  }
\end{equation*}
The body of the \texttt{let} expression is a single \lstinline{node} expression that constructs the result tree. The derivation therefore concludes with an application of rule $\ruleconst$, which ensures that the constructor arguments carry sufficient potential, that is, $\phi(u') + \sfrac{1}{2}\log\abs{t} + \phi(t) = \phi(\tree{u'}{a}{t})$. 
\begin{equation*}
  \inferrule*[right=\ruleconst]{ }{
        \tjudge{\typed{t}{\Tree}, \typed{u'}{\Tree}}{\phi(u')
            + \sfrac{1}{2}\log\abs{t}
            + \phi(t)}{\tree{u'}{a}{t}}{\Tree}{\phi(\val)}
        }
      \end{equation*}

\paragraph{Data Structure Invariants.} 
Beyond their primary role, the rules $\ruleconst$ and $\rulematch$ enforce data structure invariants, which are supported as refinements of base data types by predicates. We define a refinement of type $\alpha$ as
\begin{equation*}
  R(\alpha) = \{ \const_1 \vec{x}_1 \mid p_{\alpha,\const_1}(\vec{x}_1)\} \mid \dots \mid \{\const_n \vec{x}_n \mid p_{\alpha,\const_n}(\vec{x}_n)\}
\end{equation*}
where each constructor $\const_i$, with parameters $\vec{x}_i$ is associated with a predicate $p_{\alpha,\const_i}(\vec{x}_i)$ describing its invariant. We write $R$ for $R(\alpha)$ when the base type is clear from the context. Predicates are defined by the following grammar:
\begin{mathpar}
  p(\vec{x}) \Coloneq \top \mid m\,x_i\; \circ\; m\,x_j \quad (x_i,x_j \in \vec{x})\and
  m \Coloneq \# \;\mid\; \dag 
\end{mathpar}
While this definition suffices for the predicates required in our analysis, the approach could be straightforwardly generalised to cover a wider range of predicates. We write $p(\vec{x})$ for the instantiation of a predicate with the variables of a matched pattern. The rule $\rulematch$ extends the typing context with the predicate $p_{\alpha,\const_i}(\vec{x_i})$ for each matched pattern $\const_i \vec{x_i}$. Unrefined constructors are treated as if they were refined with the trivial predicate $\top$, which is omitted from the context, i.e. $\guard, \top = \guard$. Conversely, $\ruleconst$ requires the predicate associated with the applied constructor to be present in the type context, thereby restricting construction to refined values.

\paragraph{Path-Sensitive Typing of Conditionals.} We write $\wfguard{p}$ to denote $p$ if it is a well-formed predicate generated by the grammar above, and $\top$ otherwise. If the branching condition $e_1$ of an if-then-else statement is a well-formed predicate, we allow the rule $\ruleite$ to add it to the typing context, so it can be used by the rule $\rulew$ for path-sensitive reasoning. The potential of expression $e_1$ is zero, since we do not assign potential to Boolean expressions. We do, however, allow potential to be spent when evaluating $e_1$, as it may, for instance, be a function call with a potential-bearing argument.

\subsection{Soundness} Our main soundness theorem\ifextended, which we prove in Appendix~\ref{sec:soundness-proof},\fi\ guarantees that any valid typing respects the basic inequality underlying the potential method, given that the set of guard predicates $\mathcal{G}$ holds. Since the theorem relates typing judgements to concrete executions via substitutions for program variables, we extend substitutions pointwise to environments, writing $\Gamma\sigma$ for the environment obtained by applying $\sigma$ to all types in $\Gamma$.
\begin{restatable}{theorem}{soundnessTheorem}{Soundness Theorem.} Let $P$ be well-typed and $\sigma$ be a substitution for program variables. Suppose $\tjudge{\Gamma; \mathcal{G}}{\Psi}{e}{\alpha}{\Phi}$ and $\eval{\sigma}{c}{e}{v}$. Then, we have $\bigwedge \guard \Rightarrow \Psi(\Gamma\sigma) \geq c + \Phi(v)$. Further, if $\tjudgecf{\Gamma}{\Psi}{e}{\alpha}{\Phi}$, then $\bigwedge \guard \Rightarrow \Psi(\Gamma\sigma) \geq \Phi(v)$.%
  \label{t:soundness}
\end{restatable}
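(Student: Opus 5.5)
We prove the theorem by a main induction on the length of the evaluation derivation $\eval{\sigma}{c}{e}{v}$, with a side induction on the structure of the typing derivation $\tjudge{\Gamma;\guard}{\Psi}{e}{\alpha}{\Phi}$. The side induction is needed because the structural rules ($\rulew$, $\ruleshift$, $\rulewvar$, the sharing rule) can be applied any number of times without changing $e$. Both claims, costed and cost-free, are proved simultaneously. The cost-free judgement is the ordinary judgement applied to the tick-erased expression, and erasing ticks gives an evaluation with cost $0$ and the same value. So the second claim follows from the first once we check that an evaluation of $e$ induces one of the erased expression with cost $0$. Throughout, we assume $\bigwedge\guard$ holds under $\sigma$ and show $\Psi(\Gamma\sigma)\geq c+\Phi(v)$.

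\textbf{Structural rules.}
\begin{itemize}
\item For $\rulew$, the premise gives $\Psi'(\Gamma\sigma)\geq c+\Phi'(v)$. The side condition is an expert-knowledge inequality $\Psi\geq\Psi'$ and $\Phi'\geq\Phi$, valid on all valuations satisfying $\guard$. This is exactly where path sensitivity and invariants enter: the inequality only needs to hold under $\bigwedge\guard$. Chaining the inequalities closes the case.
\item $\ruleshift$ adds the same constant on both sides and is immediate.
\item Variable weakening and sharing amount to identities on resource functions once $\sigma$ is applied.
\end{itemize}

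\textbf{Syntax-directed rules.}
\begin{itemize}
\item Rules $\rulevar$ and $\ruleconst$ have cost $0$, and their equality constraints give $\Psi(\Gamma\sigma)=\Phi(v)$. For $\ruleconst$ we must also check that the required predicate is in $\guard$, so the constructed value is a genuine refined value.
\item For $\ruletick$, the premise gives $\Psi\geq c'+(\Phi+a)$, and the cost of the ticked expression is $c'+a$.
\item For $\rulematch$, the value $x\sigma=\const_i\,\vec v$ selects a branch evaluated under $\sigma''$. The constraint $\Psi(x)=\Psi_{\const_i}(\vec x_i)$ transfers potential, and the induction hypothesis applies with the extended guard set $\guard,p_{\alpha,\const_i}(\vec x_i)$. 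That guard holds because $x\sigma$ is a well-formed element of the refined type. Establishing this requires an auxiliary invariant that every value of refined type satisfies its constructor predicates, proved by the same induction via $\ruleconst$.
\item For $\ruleite$, we first apply the induction hypothesis to $e_1$, which has zero result potential. We then apply it to the taken branch. In the true branch the added guard $\wfguard{e_1}$ holds because $e_1$ evaluated to $\true$, and in the false branch the dual guard holds analogously.
\item For $\ruleapp$, the costed signature in $\mathcal F(f)$ comes with a judgement for the body $e_f$. Its evaluation is a strict subderivation, so the main induction hypothesis applies to the body under $\sigma'$. Template parameters are handled by noting that signatures are universally quantified and instantiation is just a substitution, so potential functions commute with it.
\end{itemize}

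\textbf{The let rule (main obstacle).} We decompose $\Psi(\Gamma,\Delta)=\Psi_1(\Gamma)+\Psi_\Delta(\Delta)+\Psi_{\text{mix}}(\Gamma,\Delta)$. Applying the induction hypothesis to $e_1$ gives $\Psi_1(\Gamma\sigma)\geq c_1+\Phi_1(w)$. The mixed part is handled by the cost-free typings: they give $\Psi_{\text{mix}}(\Gamma\sigma,\Delta\sigma)\geq\Omega(w,\Delta\sigma)$. This uses the cost-free half of the induction hypothesis applied to the \emph{same} evaluation of $e_1$ with erased ticks, a derivation of no greater height. This is why the two claims must be proved together and why the induction measure is the evaluation height rather than the expression. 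Summing the pieces gives $\Psi_2(w,\Delta\sigma)$. The induction hypothesis for $e_2$ under $\sigma[x\mapsto w]$ then gives $\Psi_2\geq c_2+\Phi(v)$, and adding $c_1$ finishes. The delicate bookkeeping is in aligning the template parameters of the cost-free signatures with the concrete variables of $\Delta$, and in ensuring that the guards carry over to the subderivations.
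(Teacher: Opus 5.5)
Your proposal is correct and follows essentially the same route as the paper's proof. Both use a main induction on the evaluation derivation with a side induction on the typing derivation, discharge the structural rules by chaining the (guarded) inequalities, and close the let case by combining three things: the hypothesis for $e_1$, the cost-free half of the hypothesis applied to the cost-free typings of $e_1$ (giving $\Psi_{\text{mix}}\geq\Omega$), and the hypothesis for $e_2$ under $\sigma[x\mapsto w]$.

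You are in places more careful than the paper: you charge the cost of $e_1$ in $\ruleite$, and you notice that the guard $p_{\alpha,\const_i}(\vec{x}_i)$ introduced by $\rulematch$ must actually be discharged. That discharge needs an explicit hypothesis that $\sigma$ maps refined-type variables to values satisfying their invariants, because the induction via $\ruleconst$ only propagates this property and cannot establish it for inputs. You should also add that $\ruleapp$ combines the costed signature with $k$ times a cost-free one; your simultaneous cost-free hypothesis on the body handles this directly.
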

\ifextended \else
  \paragraph{Proof Sketch.} A full proof is available in the extended version~\cite{walch2026preprint}. The proof is conducted via main induction on $\Pi: \eval{\sigma}{c}{e}{v}$ with nested induction on $\Xi: \tjudge{\Gamma; \mathcal{G}}{\Psi}{e}{\alpha}{\Phi}$. Structural rules follow directly from the induction hypothesis and monotonicity of potentials.

    For syntax-directed rules, most cases are straightforward: variables and constants are immediate, while pattern matching and conditionals follow by applying the induction hypothesis to the selected branch. The key case is let-binding, where the induction hypotheses for the bound expression and the body are combined; the constraints ensure that the available potential suffices after substitution. Function application follows from the well-typedness of function bodies, and tick expressions are handled by simple arithmetic adjustment of the potential.

In all cases, the constraints guarantee that the initial potential covers both evaluation cost and the remaining potential of the result.
\fi

\section{Algorithmic Type System and Resource Inference}
\label{sec:inference}

The declarative type system we have seen in the previous section established the validity of concrete resource functions. Next we turn our attention to the problem of reconstructing these resource functions from scratch, i.e.~inferring the required type annotations. Instead of employing type variables that can stand for any function, we restrict ourselves to \emph{resource templates}, that characterise a family of resource functions by undetermined coefficients. 

\subsection{Resource Templates} 
Let $\tlang{}(\vec{x})$ (the \emph{template language}) denote a generating function for a set of non-negative \emph{template terms}, i.e.~syntactic arithmetic expressions over the set of variables $\vec{x}$. Examples of template terms include $\log |x|$, $\log(|x|+1)$, $\log \abs{\val}$ or $\log(|x|+|y| + \abs{X})$. For each term $t \in \tlang{}(\vec{x})$, we associate a fresh symbolic coefficient $q_t$. These coefficients are then combined linearly to form symbolic \emph{resource templates}.

\begin{definition}[Resource template]
  Let $\vec{x}$ be a set of variables, called the arguments of the resource template. The resource template \(Q(\vec{x})\) is defined as
  \begin{equation*}
    Q_{\tlang{}}(\vec{x}) \coloneq \sum_{t \in \tlang{}(\vec{x})} q_t \cdot t 
  \end{equation*}
  where each \(q_t\) is a symbolic coefficient; we omit the subscript \(\tlang{}\) when the underlying template language is clear from the context.
\end{definition}

\begin{example}\label{ex:sum-of-logs} We instantiate the abstract notion of a template language by defining a concrete \emph{sum-of-logs} template language, which generalises the templates used in the analysis of splay trees~\cite{leutgeb2021atlas,leutgeb2022automated}. 
\begin{equation*}
  \tlangSol(\vec{x}) \coloneq \tlangLog(\vec{x}) \cup \{\phiSol(x) \mid x \in \vec{x} \}
\end{equation*}
Let $\params = (a,b,c)$ then $\phiSol(\leaf) = 1$, $\phiSol(\tree{t}{a}{u}) = \phiSol(t) + \psiSol(t,u)  + \phiSol(u)$
where $\psiSol(t,u) = a\log\abs{t} + b\log\abs{u} + c\log(\abs{t} + \abs{u})$ and
\begin{equation*}
  \tlangLog(\vec{x}) \coloneq \{\log(\vec{a}\abs{\vec{x}} + c) \mid \vec{a} \in \{0,1\}^{\abss{\vec{x}}}, c \in \{-1,0,1,2\}, \textstyle\sum\vec{a} + c \geq 1\}
\end{equation*}
We define $\vec{a}\abs{\vec{x}} = \sum_i a_i\abs{x_i}$, where $\abs{x}$ denotes number of leaves for tree $x$ and we write $\abss{\vec{x}}$ for the cardinality of $\vec{x}$. 
In the following we abbreviate $q_{\phiSol(x)}$ as $q_x$ and $q_{\log(\vec{a}\abs{\vec{x}} + c)}$ as $q_{(\vec{a}\vec{x},c)}$. 
\end{example}
  Note that the terms of $\tlangSol$ are only valid when applied to arguments of type $\Tree$. Each template language is associated with an argument type; unless stated otherwise, we assume this type to be $\Tree$. The analysis of \texttt{swap} is based on $\tlangSol$ where we choose $\params = (0,\sfrac{1}{2},0)$, i.e. $\phiSol(\tree{t}{a}{u}) = \phiSol(t) + \sfrac{1}{2}\log\abs{u} + \phiSol(u)$. To express the cost and potential $\log\abs{x} + \phiSol(x)$ with template function $\Psi(\typed{x}{\Tree})$, we set $q_{(x)} = q_{x} = 1$ and all other coefficients to zero.

\subsection{Algorithmic Type Rules}


We let $C$ range over sets of linear real arithmetic constraints. The constraint typing relation is written $\tjudgect{\Gamma;\guard}{Q}{e}{\alpha}{Q'}{C}$, where in contrast to the declarative relation annotations $Q$ and $Q'$ are resource templates rather than concrete resource functions. Again $\Gamma;\guard \mid Q$ should be read as $\Gamma;\guard \mid Q(\dom(\Gamma))$. As a convention we assume that coefficients not mentioned in $C$ are constrained to be zero.

A model $M$ for a constraint system is an assignment that maps its variables to real numbers. Given a model $M$, we instantiate a template by substituting each symbolic coefficient according to $M$, obtaining a concrete resource function.
\begin{definition}[Model instantiation of a resource function]
Let $\vec{x}$ be a set of variables, \(Q(\vec{x}) = \sum_{t \in \tlang{}(\vec{x})} q_t \cdot t\) be a resource template, $M$ be a model and $M(q_t) \in \Q$, we then define \emph{model instantiation} as a substitution
  \begin{equation*}
    Q^M(\vec{x}) \coloneq \sum_{t \in \tlang{}(\vec{x})} M(q_t) \cdot t \tpkt
  \end{equation*}
\end{definition}

Instead of writing out the algorithmic type rules, we give a construction, parameterised in the template language $\tlang{}$, that allows to transform any declarative rule, into the corresponding algorithmic rule.

We let $R$ range over declarative rules and construct for every rule $R$ an algorithmic rule $\ctrule{R}$:
\begin{equation*}
  \inferrule*[right=\rlabel{\ctrule{R}}]{
    \tjudgect{\Gamma_1}{Q_1}{e_1}{\alpha}{Q'_1}{C_1}
    \dots
    \\
    \tjudgect{\Gamma_n}{Q_n}{e_n}{\alpha}{Q'_n}{C_n}
  }{
    \tjudgect{\Gamma}{Q}{e}{\alpha}{Q'}{\bigcup_i C_i \cup \ctr(Q,Q',\vec{Q},\vec{Q'}, \vec{\Gamma}) }
  }
\end{equation*}
The algorithm chooses the annotations for the sub-derivations in $\vec{Q}$ and $\vec{Q'}$ fresh, such that they are distinct from the annotations $Q$ and $Q'$ and their arguments match the given context. The construction is fully characterised by the set $\cgen$ of constraint generators $\ctr(Q,Q',\vec{Q},\vec{Q'},\vec{\Gamma})$. For each rule $R$, these functions produce constraints over the coefficients of the supplied templates that guarantee the obligations on resource functions from the declarative rule; this set $\cgen$ thus defines a concrete instance of the algorithmic type system for the template language $\tlang{}$. For the definition of $\ctr(Q,Q',\vec{Q},\vec{Q'},\vec{\Gamma})$ we use an inference-rule presentation. The function application in the conclusion is defined by the set of all constraints listed in the premises, implicitly quantified over any side conditions. Parameters written as $\_$ are ignored by the corresponding constraint generator and impose no constraints on the associated annotations.
\begin{example}
  For the sum-of-logs template language $\tlangSol$, $\rulematch$ has the following algorithmic form.
  \begin{equation*}
    \inferrule*[right=\rlabel{\ctrule{Match}}]{
      \tjudgectgen{\Gamma}{P}{e_1}{\alpha}{Q'}{C_{2}}{\tlangSol}
      \\
      \tjudgectgen{\Gamma,\typed{t}{\Tree},\typed{u}{\Tree}}{R}{e_2}{\alpha}{Q'}{C_1}{\tlangSol}
    }{
      {\begin{array}{ll}
        \Gamma,\typed{x}{\Tree}{\mid}Q\,\vdash_{\tlangSol}\,\flstk{match }x\flstk{ with } &\flst{| }\leaf\flst{ -> }e_1\\
        &\flst{| }\tree{t}{a}{u}\flst{ -> }e_2\colon\!{\alpha}{\mid}Q';C
      \end{array}}
  }
\end{equation*}
where $C = C_{1} \cup C_{2} \cup \ctr*{\tlangSol}{\labMatch}(Q,Q',P,R,(\Gamma,\typed{x}{\Tree}), (\Gamma,\typed{t}{\Tree},\typed{u}{\Tree}))$.
Note that we omit $\typed{a}{\Base}$ from the typing context in the node case because $\tlangSol$ is only defined on trees --- the elements themselves do not carry any potential. The constraints defining $\ctr*{\tlangSol}{\labMatch}$ are given in Figure~\ref{fig:sol-match-constraints} and guarantee that $P^M(\Gamma) = Q^M(\Gamma, \leaf)$ and $R^M(\Gamma, \typed{t}{\Tree}, \typed{u}{\Tree})) = Q^M(\Gamma, \tree{t}{a}{u})$ in accordance with the corresponding declarative rule.
\begin{figure}[t]
  \centering
  \begin{equation*}
  \inferrule*{
    p_{(\vec{a}\vec{x}, d)} = \displaystyle{\sum}_{b+c=d} q_{(\vec{a}\vec{x}, bx,c)} + q_x[\vec{a} = \vec{0}, d = 2] \\
    r_{(\vec{a}\vec{x}, bt, bu, c)} = q_{(\vec{a}\vec{x}, bx, c)} + f \cdot q_x[\vec{a} = \vec{0}, b = 1, c = 0] \\
    p_{x_i} = r_{x_i} = q_{x_i}\\
    r_t = d \cdot q_x \\ r_u = e \cdot q_x \\
    r_{(t)} = r_{(u)} = q_{(x)} \\
  }{
    \ctr*{\mathcal{L}_{\mathrm{sol},(d,e,f)}}{\labMatch}(Q,\_,P,R,(\Gamma,\typed{x}{\Tree}), (\Gamma,\typed{t}{\Tree},\typed{u}{\Tree}))
  }  
  \end{equation*}
  \caption{Constraints defining $\ctr*{\tlangSol}{\labMatch}(Q,Q',P,R,(\Gamma,\typed{x}{\Tree}), (\Gamma,\typed{t}{\Tree},\typed{u}{\Tree}))$, where $\vec{x} = \dom(\Gamma)$, $x_i \in \dom(\Gamma)$ and $[p]$ denotes the Iverson bracket for predicate $p$, i.e. $[p] = 1$ if $p$ holds and $[p] = 0$.}
  ~\label{fig:sol-match-constraints}
\end{figure}
\end{example}

Most of the functions in $\cgen$ can be defined generically, i.e.~independent of the given template language. In the rule $\rulevar$ for instance, we need to establish $\Psi(\typed{x}{\alpha}) = \Phi(\typed{x}{\alpha})$, which can be accomplished by coefficient-wise equality constraints. \ifextended The definition of these generators is given in Figure~\ref{fig:templ-agnostic-constraints} in Appendix~\ref{sec:constraints}.\else The definition of these generators is given in Figure~\ref{fig:templ-agnostic-constraints}.

\begin{figure}
  \begin{mathpar}
    \inferrule*{
      q_i = q'_{i[x \mapsto \val]}
    }{
      \ctr*{\tlang{}}{\labVar}(Q,Q', \typed{x}{\alpha}) 
    }
    \and
    \inferrule*{
      q_i = p_i  + r_i
    }{
      \ctr*{\tlang{}}{\labIte}(Q,\_, P, R, \Gamma)
    }
    \and
    \inferrule*{
      p_{1} = q_{1} - k \\  p_{i \neq 1} = q_{i \neq 1}\\\\
      p'_{1} = q'_{1} - k \\  p'_{i \neq 1} = q'_{i \neq 1}\\
      k \geq 0
    }{
      \ctr*{\tlang{}}{\labShift}(k, Q, Q', P, P', \Gamma)
    }
    \and
    \inferrule*{
      \vec{y} = \m{args}(P) = \m{args}(R)\\ k \in \N\\\\
      q_{i[\vec{x} \mapsto \vec{y}]} = p_i  + k \cdot r_i 
      \\
      q'_{i} = p'_i  + k \cdot r'_i
    }{
      \ctr*{\tlang{}}{\labApp}(k, Q,Q', P, P', R, R', \typed{\vec{x}}{\vec{\alpha}})
    }
    \and 
    \inferrule*{
      p'_{1} = q'_{1} + a \\ p'_{i \neq 1} = q'_{i \neq 1}
    }{
      \ctr*{\tlang{}}{\labTick}(a,\_,Q', P')
    }
    \and
    \inferrule*{
      p_{i \notin I_x} = q_{i \notin I_x} \\ q_{i \in I_x} \geq 0
      \\\\
      I_x = \{i \mid i\text{ contains variable } x\}
    }{
      \ctr*{\tlang{}}{\labWVar}(Q,\_, P, (\Gamma, \typed{x}{\alpha}))
    }
    \and
    \inferrule*{
      p_{i[z \mapsto x]} + p_{i[z \mapsto y]} = q_{i}
    }{
      \ctr*{\tlang{}}{\labShare}(Q,\_, P, (\Gamma, \typed{z}{\alpha}), (\Gamma, \typed{x}{\alpha}, \typed{y}{\alpha}))
    }
  \end{mathpar}
  \caption{Template language agnostic constraint generating functions.}
  \label{fig:templ-agnostic-constraints}
\end{figure}

\fi

In fact only the constraints for $\ctrule{Const}$, $\ctrule{Match}$ and $\ctrule{Let}$ crucially depend on the choice of template language, as they encode how potential is modified by construction, deconstruction, and function composition. \ifextended We provide the complete set of constraints for $\tlangSol$ in Figure~\ref{fig:sol-constraints} in Appendix~\ref{sec:constraints}, where we also establish their soundness. They generalise the constraints presented in~\cite{hofmann2022type, leutgeb2021atlas} and allow to recover their type rules. \else The constraints for these rules will be presented in Section~\ref{sec:case-studies}. We omit the full definition of the constraints for $\tlangSol$ due to space limitations and refer the interested reader to~\cite{hofmann2022type, leutgeb2021atlas}, whose constraints we generalise. \fi Note that the constraint-generating function $\ctr(c,Q,Q')$ for $\ruleconst$ additionally takes the constructor $c$ as a parameter, effectively yielding a distinct algorithmic rule for each constructor.

\paragraph{Weakening.} To generate the constraints for $\ctrule{W}$, we follow the approach of~\cite{hofmann2022type, leutgeb2021atlas}: inequalities of the form $\Psi(\vec{y}) \geq \Psi'(\vec{y})$ cannot be checked symbolically and are transformed into a set of existentially quantified linear constraints. To this end, the technique of~\cite{hofmann2022type} employs a variant of Farkas’ Lemma.
\begin{lemma}[Farkas' Lemma]%
\label{lem:farkas}
Suppose  $A\vec{x} \leqslant \vec{b}, \vec{x} \geqslant 0 $ is solvable. Then the following assertions are equivalent.\begin{inparaenum}[(i)]
\item $\forall \vec{x} \geqslant 0. \ A\vec{x} \leqslant \vec{b} \Rightarrow \vec{u}^T\vec{x} \leqslant \lambda$ and
\item $\exists \vec{f} \geqslant 0. \  \vec{u}^T \leqslant \vec{f}^T A \land \vec{f}^T \vec{b} \leqslant \lambda$.
\end{inparaenum}
\end{lemma}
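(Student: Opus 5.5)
The direction (ii)$\Rightarrow$(i) is elementary and I would do it first. Fix $\vec{f} \geqslant 0$ with $\vec{u}^T \leqslant \vec{f}^T A$ and $\vec{f}^T\vec{b} \leqslant \lambda$, and take any $\vec{x} \geqslant 0$ with $A\vec{x} \leqslant \vec{b}$. Then
\begin{equation*}
\vec{u}^T\vec{x} \leqslant \vec{f}^T A \vec{x} \leqslant \vec{f}^T\vec{b} \leqslant \lambda .
\end{equation*}
The first inequality holds because $\vec{x} \geqslant 0$ and $\vec{u}^T \leqslant \vec{f}^T A$ componentwise. The second holds because $\vec{f} \geqslant 0$ and $A\vec{x} \leqslant \vec{b}$. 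This direction does not use the solvability hypothesis.

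For (i)$\Rightarrow$(ii) I would use linear programming duality. Consider the primal LP $\max\{\vec{u}^T\vec{x} \mid A\vec{x} \leqslant \vec{b},\ \vec{x} \geqslant 0\}$. By hypothesis it is feasible, and by (i) its objective is bounded above by $\lambda$. So it has a finite optimum $z^\ast \leqslant \lambda$. The dual LP is $\min\{\vec{f}^T\vec{b} \mid \vec{f}^T A \geqslant \vec{u}^T,\ \vec{f} \geqslant 0\}$. Strong duality says that if the primal is feasible with finite optimum, then the dual is feasible and attains the same optimum. Hence some dual feasible $\vec{f} \geqslant 0$ has $\vec{u}^T \leqslant \vec{f}^T A$ and $\vec{f}^T\vec{b} = z^\ast \leqslant \lambda$, which is (ii).

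The main obstacle is relying on strong duality. If that should not be assumed, I would derive it from the standard Farkas alternative: $A'\vec{y} = \vec{c}$ with $\vec{y} \geqslant 0$ is solvable if and only if there is no $\vec{w}$ with $\vec{w}^T A' \geqslant 0$ and $\vec{w}^T \vec{c} < 0$. That alternative in turn comes from separating $\vec{c}$ from the closed convex cone generated by the columns of $A'$. Closedness holds because finitely generated cones are closed, for instance via Carathéodory's theorem.

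The concrete reduction goes as follows. Suppose no $\vec{f}$ as in (ii) exists. Then the homogenised system in $(\vec{f}, t) \geqslant 0$ given by $\vec{f}^T A - \vec{s}^T = \vec{u}^T$ and $\vec{f}^T\vec{b} + t = \lambda$, with slack $\vec{s} \geqslant 0$, is infeasible. The alternative then produces $(\vec{x}, \mu)$ with $\vec{x} \geqslant 0$, $\mu \geqslant 0$, $A\vec{x} \leqslant \mu\vec{b}$ and $\vec{u}^T\vec{x} > \mu\lambda$.

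If $\mu > 0$, rescale by $\mu$ to get a direct counterexample to (i). If $\mu = 0$, then $\vec{x}$ is a recession direction with $A\vec{x} \leqslant 0$ and $\vec{u}^T\vec{x} > 0$. Adding large multiples of $\vec{x}$ to a feasible point, which exists by the solvability hypothesis, makes $\vec{u}^T$ unbounded on the feasible set. This again contradicts (i). The $\mu = 0$ case is exactly where the solvability assumption is needed.
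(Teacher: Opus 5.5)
Your proof is correct. The paper gives no proof of this lemma to compare it with: it quotes this affine form of Farkas' Lemma as a standard fact, following the technique of Hofmann et al. It then uses the lemma as a black box, both to turn weakening obligations into existentially quantified linear constraints and in the appendix for the guard-aware monotonicity certificate. Your argument therefore supplies something the paper omits.

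Of your two routes for (i)$\Rightarrow$(ii), strong duality is the quickest. If you want a first-principles proof, though, it is mildly circular, because LP strong duality is normally derived from Farkas' Lemma itself.

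Your second route avoids this by reducing to the homogeneous alternative for $A'\vec{y} = \vec{c}$, $\vec{y} \geqslant 0$, and the translation checks out. Write the certificate as $(\vec{w}_1, w_0)$. The slack columns force $\vec{w}_1 \leqslant 0$, and the $t$-column forces $w_0 \geqslant 0$. Setting $\vec{x} = -\vec{w}_1$ and $\mu = w_0$ then gives exactly $A\vec{x} \leqslant \mu\vec{b}$ and $\vec{u}^T\vec{x} > \mu\lambda$.

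The second route has a further advantage. The case split on $\mu$ shows exactly where the solvability hypothesis is needed, namely the recession-direction case $\mu = 0$. The duality route hides this inside the ``feasible and bounded'' theorem.

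One terminological nit: the slack-form system you write down is not homogenised. The homogenising variable is the $\mu$ that appears on the alternative side.
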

In our setting, the vector $\vec{x}$ represents template terms. The constraints $A\vec{x} \leq \vec{b}$ encode template language specific expert knowledge. Besides Lemma~\ref{lem:favorite-log-lemma}, it also includes a partial order over logarithmic terms induced by the monotonicity of the logarithm\ifextended. We describe this order formally in Appendix~\ref{sec:guard-mono}, where we also explain how guard predicates are used to compute it. \else, which takes guard predicates into account. \fi The inequality $u^T\vec{x} \leq \lambda$ corresponds to the inequality of interest, i.e. $\Psi'(\vec{y}) -  \Psi(\vec{y}) \leq 0$.

\subsection{Soundness} We establish the soundness of the algorithmic type system in Theorem~\ref{t:soundness-algo}. To prove it, we rely on Lemma~\ref{lem:algo} that links the two type systems. Specifically, any solution to the constraints obtained by the algorithmic typing relation corresponds to concrete resource functions typable in the declarative system, allowing us to apply the soundness of the declarative system (Theorem~\ref{t:soundness}).

\begin{lemma}
  \label{lem:algo}
  Let $M$ be a satisfying model for a set of constraints $C$. If $\tjudgect{\Gamma;\guard}{Q}{e}{\alpha}{Q'}{C}$ then $\tjudge{\Gamma;\guard}{Q^M}{e}{\alpha}{Q'^M}$.
\end{lemma}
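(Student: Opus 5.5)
We prove Lemma~\ref{lem:algo} by induction on the derivation of $\tjudgect{\Gamma;\guard}{Q}{e}{\alpha}{Q'}{C}$. The algorithmic rules are produced uniformly by the construction $\ctrule{R}$ from the declarative rules. It therefore suffices to establish one \emph{local soundness} property for every constraint generator in $\cgen$. The property reads: whenever a model $M$ satisfies $\ctr(Q,Q',\vec{Q},\vec{Q'},\vec{\Gamma})$, the instantiated resource functions $Q^M,Q'^M,\vec{Q}^M,\vec{Q'}^M$ meet the side conditions of the declarative rule $R$. Given this, the inductive step runs as follows. The constraint set of the conclusion is $\bigcup_i C_i \cup \ctr(\dots)$, and $M$ satisfies this union, so $M$ satisfies each $C_i$. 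The induction hypothesis then gives the declarative judgements $\tjudge{\Gamma_i;\guard_i}{Q_i^M}{e_i}{\alpha}{Q_i'^M}$. Local soundness supplies the side conditions, and applying $R$ yields $\tjudge{\Gamma;\guard}{Q^M}{e}{\alpha}{Q'^M}$. The guard contexts are threaded through unchanged, since $\rulematch$ and $\ruleite$ extend $\guard$ identically in both systems. Axiom-like leaves such as $\rulevar$, $\ruleconst$ and $\ruleapp$ are the base cases of the same argument.

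Local soundness splits into two kinds of cases. The first kind is the template-agnostic generators of Figure~\ref{fig:templ-agnostic-constraints}. For these we use linearity of model instantiation, $(\sum_t q_t\, t)^M = \sum_t M(q_t)\, t$. Coefficient-wise equations such as $q_i = p_i + r_i$ then become equalities of resource functions. The renaming equations for $\rulevar$, $\ruleshare$ and $\ruleapp$ need one extra observation: substituting variables commutes with instantiation, i.e.\ $Q^M(\vec{x})[\vec{x}\mapsto\vec{y}] = (Q[\vec{x}\mapsto\vec{y}])^M(\vec{y})$.
\begin{itemize}
\item In $\ruleapp$, $k\in\N$, so $P^M + k\cdot R^M$ is a legitimate combination of a costed signature from $\mathcal{F}(f)$ with $k$ copies of a cost-free signature from $\mathcal{F}^{\text{cf}}(f)$.
\item For $\rulewvar$, the constraint $q_{i\in I_x}\geq 0$ makes the dropped terms non-negative, since template terms are non-negative by definition of $\tlang{}$. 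This gives $Q^M \geq P^M$.
\item For $\ruleshift$ and $\ruletick$, only the constant coefficient (index $1$) changes, and $k\geq 0$ is enforced.
\end{itemize}

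The second kind is the template-specific generators for $\ruleconst$, $\rulematch$ and $\rulelet$, which must be checked for each concrete language. For $\tlangSol$ this is an algebraic identity, illustrated by Figure~\ref{fig:sol-match-constraints}. We unfold $\phiSol(\tree{t}{a}{u}) = \phiSol(t)+\psiSol(t,u)+\phiSol(u)$ and $\abs{\tree{t}{a}{u}} = \abs{t}+\abs{u}$, and $\abs{\leaf}=1$ together with $\phiSol(\leaf)=1$. We then verify that the listed equations redistribute every term of $Q^M(\Gamma,\tree{t}{a}{u})$ exactly onto the terms of $R^M$. The leaf case collapses $\log(\vec{a}\abs{\vec{x}}+b\cdot 1+c)$ onto offset $d=b+c$, and $\phiSol(\leaf)$ contributes to the constant $\log 2$ term. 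For $\rulelet$ we must show that the decomposition $\Psi_1+\Psi_\Delta+\Psi_{\text{mix}}$ is exhaustive. We must also show that the cost-free typings used to discharge $\Psi_{\text{mix}}\geq\Omega$ are themselves algorithmic sub-derivations, so the induction hypothesis covers them.

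The main obstacle is $\rulew$. There we need $\bigwedge\guard \Rightarrow Q^M(\vec{y}) \geq P^M(\vec{y})$ from existential linear constraints, by the direction (ii)$\Rightarrow$(i) of Lemma~\ref{lem:farkas}. This direction needs no solvability hypothesis and is elementary: from $\vec{f}\geq 0$ and $A\vec{x}\leq\vec{b}$ we get $\vec{u}^T\vec{x} \leq \vec{f}^T A\vec{x} \leq \vec{f}^T\vec{b}\leq\lambda$. Here $\vec{x}$ must range over the actual values of the template terms. The delicate part is therefore to show that every row of $A\vec{x}\leq\vec{b}$ is a true inequality whenever $\guard$ holds. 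This covers Lemma~\ref{lem:favorite-log-lemma}, applicable because all arguments are $\geq 1$ (leaf counts), and the guard-dependent monotonicity order on logarithmic terms. We must also check that the multiplier for $\vec{u}$ is read off $M$, since $\vec{u}$ depends on $M(q_t)-M(p_t)$ and the rows of $A$ do not involve $M$. We would first justify the expert-knowledge axioms as valid inequalities under $\guard$, and then invoke Farkas.
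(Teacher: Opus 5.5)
Your proposal is correct and follows the paper's proof. Both argue by induction on the algorithmic derivation and reduce each case to the soundness of the corresponding constraint generator, which the paper delegates to its appendix lemmas; your treatment of $\rulew$ through the easy direction (ii)$\Rightarrow$(i) of Lemma~\ref{lem:farkas}, plus validity of the expert-knowledge rows under $\guard$, spells out a case those lemmas leave implicit. The one ingredient you do not name is in $\rulelet$: deriving the implication premise $\bigwedge_i \Psi^i(\Gamma)\geq\Phi^i(\typed{x}{\alpha}) \Rightarrow \Psi_{\text{mix}}\geq\Omega$ from the generated constraints requires the lifting lemma (Lemma~\ref{lem:log_cf}), and the let-generator's conditional coefficient constraints exist precisely to enforce that lemma's hypothesis.
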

\begin{proof}
  By induction over $\tjudgect{\Gamma;\guard}{Q}{e}{\alpha}{Q'}{C}$ where we distinguish the last rule $\ctrule{R}$ in the derivation. \ifextended Using the lemmas provided in Appendix~\ref{sec:constraints} we show that the template constraints, given by $\ctr(Q,Q',\vec{Q},\vec{Q'},\vec{\Gamma})$, imply the obligations of the corresponding declarative rule.\else For each rule $\ctrule{R}$ one can verify by direct calculation, that the template constraints, given by $\ctr(Q,Q',\vec{Q},\vec{Q'},\vec{\Gamma})$, imply the obligations of the corresponding declarative rule.\fi
\end{proof}

\begin{theorem}{Soundness of the algorithmic type system.}
  \label{t:soundness-algo}
  Let $P$ be well-typed, $\sigma$ be a substitution for program variables and $M$ be a satisfying model for a set of constraints $C$. If $\tjudgect{\Gamma;\guard}{Q}{e}{\alpha}{Q'}{C}$ and $\eval{\sigma}{c}{e}{v}$ then $\bigwedge \mathcal{G} \Rightarrow Q^M(\Gamma)\sigma \geq c + Q'^M(v)$.
\end{theorem}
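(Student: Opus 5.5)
The plan is to reduce the statement directly to the declarative soundness result, Theorem~\ref{t:soundness}, with Lemma~\ref{lem:algo} as the bridge. Fix $P$, $\sigma$ and a satisfying model $M$ of $C$, and assume both $\tjudgect{\Gamma;\guard}{Q}{e}{\alpha}{Q'}{C}$ and $\eval{\sigma}{c}{e}{v}$.

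\textbf{Step 1: pass to the declarative system.} Apply Lemma~\ref{lem:algo} to the algorithmic derivation and $M$. This yields the declarative judgement $\tjudge{\Gamma;\guard}{Q^M}{e}{\alpha}{Q'^M}$. Here $Q^M$ and $Q'^M$ are concrete resource functions, obtained by replacing each symbolic coefficient $q_t$ with $M(q_t)$.

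\textbf{Step 2: invoke declarative soundness.} Instantiate Theorem~\ref{t:soundness} with $\Psi := Q^M$ and $\Phi := Q'^M$, the same $\sigma$, and the evaluation $\eval{\sigma}{c}{e}{v}$. This gives $\bigwedge\guard \Rightarrow Q^M(\Gamma\sigma) \geq c + Q'^M(v)$.

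\textbf{Step 3: match the notation.} The theorem writes $Q^M(\Gamma)\sigma$, whereas Step~2 yields $Q^M(\Gamma\sigma)$. Since model instantiation only replaces coefficients, and $\sigma$ only replaces program variables by values inside the template terms, the two substitutions act on disjoint syntax and commute. Hence $Q^M(\Gamma)\sigma = Q^M(\Gamma\sigma)$, and the claim follows.

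\textbf{Where the difficulty lies.} Both substantive ingredients are results I am allowed to assume. Nonetheless, two points need care.

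\emph{Well-typedness.} The hypothesis ``$P$ is well-typed'' in Theorem~\ref{t:soundness} concerns the declarative signatures in $\mathcal{F}(f)$ and $\mathcal{F}^{\text{cf}}(f)$. If these signatures were themselves inferred as templates, they must be read under the same model $M$. I would state explicitly that the signature templates' coefficients lie among those constrained by $C$, so that instantiating them by $M$ gives a well-typed program in the declarative sense.

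\emph{The weakening rule.} Within Lemma~\ref{lem:algo}, the real work is the case $\ctrule{W}$. There one uses the (ii)$\Rightarrow$(i) direction of Lemma~\ref{lem:farkas}, together with the validity of the expert-knowledge inequalities $A\vec{x}\leq\vec{b}$ under the guards, to recover the declarative inequality $\Psi\geq\Psi'$. Since that lemma is given, however, the present theorem is a short composition of the two results.
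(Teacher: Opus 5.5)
Your proposal is correct and follows the paper's argument exactly: the paper proves this theorem by using Lemma~\ref{lem:algo} to turn the algorithmic derivation under $M$ into a declarative one, and then applying Theorem~\ref{t:soundness}. Your extra remarks, on commuting model instantiation with $\sigma$ and on reading the signature templates under the same model, are sound refinements that the paper leaves implicit.
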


\subsection{A Full Cost Analysis} The algorithmic type rules almost resemble our real type inference algorithm, except for two details\ifextended , both of which are addressed in the description of our implementation in Appendix~\ref{sec:implementation}. \else . \fi

First, the structural rules do not yield a deterministic algorithm. \ifextended We therefore apply them heuristically to keep inference tractable. \else A straightforward solution would be to apply the structural rules before every syntax-directed rule, but this leads to an intractable number of constraints~\cite{leutgeb2021atlas}. Instead, we apply them heuristically only when needed and restrict expert knowledge to parametrisations of $\ctrule{W}$, thereby reducing the number of generated constraints. To support this approach, the AST is annotated with contextual cues that trigger applications of structural rules, after which inference proceeds syntax-directed.

For example, the cue \texttt{pseudo leaf} marks an expression that corresponds to a leaf in the resulting proof tree but is not bound by a let expression. Encountering this cue triggers an application of $\ctrule{W}$ with monotonicity expert knowledge. Similarly, $\ruleshift$ is applied to ticked function applications, and $\rulewvar$ is applied before $\ctrule{Const}$ and $\ctrule{Var}$ to eliminate superfluous variables from the context. Most of these heuristics originate from Leutgeb et al.~\cite{leutgeb2021atlas}; we adapted them where necessary for the new benchmarks.

\fi
Secondly the above formulation only demands any satisfying solution, but we are interested in annotations that correspond to the minimal amortised costs for a given function. Following~\cite{leutgeb2021atlas}, we encode this notion into a cost-function which we supply to the underlying optimising SMT-solver to obtain a minimal solution. 

\section{Instantiating the Analysis for Leftist Heaps}
\label{sec:case-studies}

\begin{table}[t]
  \centering
  \caption{Analysis instances for different leftist heap variations.}
  \label{tab:instances}
  \begin{tabularx}{\textwidth}{l>{\hspace{5ex}}Xr}
    \toprule
    Data Structure & Type & $\tlang{}$\\
    \midrule
    Skew Heap & $\Tree$ & $\tlangPw, \tlangSol$\\
    Weight Biased & $\WeightTree := \{ \leaf \} \mid \{\tree{t}{a}{u} \mid \#{t} \geq \# u\}$ & $\tlangPw, \tlangSol$\\
    Rank Biased & $\RankTree := \{ \leaf \} \mid \{\tree{t}{a}{u} \mid \dag{t} \geq \dag{u}\}$ & $\tlangRank$\\
    \bottomrule
  \end{tabularx}
\end{table}
We now instantiate the inference algorithm for the various leftist heap variants. Table~\ref{tab:instances} summarises, for each data structure, the potential-carrying data type, the associated template language, its constraint generators, and expert knowledge. \ifextended Appendix~\ref{sec:additional-case-studies} provides further intuition via selected type-derivation examples.\fi

We first present the template languages $\tlangPw$ and $\tlangSol$, corresponding to potential functions for skew heaps found in the literature: a piecewise potential due to Sleator and Tarjan~\cite{sleator1986self}, and a sum-of-logs potential introduced by Kaldewaij and Schoenmakers~\cite{kaldewaij1991derivation}. We then introduce the template language $\tlangRank$ for rank-biased heaps.\ifextended The soundness of the given constraint generating functions is established in the Appendix~\ref{sec:constraints}. \fi \ifextended \else \ The exact cost bounds inferred by our prototype based on these instances are listed in Table~\ref{tab:bounds}.

  \begin{table}[t]
  \centering
  \caption{Exact cost bounds of \texttt{meld} and \texttt{delete\_min} for skew and leftist heaps.} 
  \label{tab:bounds}
  \begin{tabularx}{\textwidth}{l<{\hspace{3mm}}Xl}
    \toprule
    Potential & \texttt{meld} & \texttt{delete\_min} \\
    \midrule
    \multicolumn{3}{c}{skew heap}\\
    \addlinespace
    $\Phi_{[t < u]}$ & $\log(\abs{y} + \abs{x} - 1) + \log \abs{x} + \log\abs{y}$ &$3 \log \abs{x}$ \\ 
    \addlinespace
    $\Phi_{u}$ &$ \log\abs{x} + \log\abs{y}$ & $\sfrac{3}{2}\log\abs{x}$ \\ 
    \addlinespace
    $\Phi_{-t}$ & $\sfrac{1}{2}\log(\abs{y} + \abs{x} - 1) + \sfrac{1}{2}\log\abs{x} + \sfrac{1}{2}\log\abs{y}$ & $\sfrac{3}{2}\log\abs{x}$ \\ 
    \addlinespace
    $\Phi_{\phi}$ & $\frac{105}{163}\log(\abs{y} + \abs{x} - 1) + \frac{3115}{7824}\log\abs{x} + \frac{3115}{7824}\log\abs{y}$ & $\frac{5635}{3912}\log\abs{x}$ \\
    \midrule
    \multicolumn{3}{c}{weight-biased leftist heap}\\
    \addlinespace
    n.a. & $\log\abs{x} + \log\abs{y}$ & $2\log\abs{x}$ \\ 
    $\Phi_{\phi}$ & $\frac{105}{163}\log(\abs{y} + \abs{x} - 1)  + \frac{3115}{7824}\log\abs{x} + \frac{3115}{7824}\log\abs{y}$ & $\frac{5635}{3912}\log\abs{x}$ \\ 
    \midrule
    \multicolumn{3}{c}{rank-biased leftist heap}\\
    \addlinespace
    n.a. & $\dag x + \dag y$ & $2\log\abs{x}$ \\
    $\dag x$ & $\log(\abs{x} + \abs{y} - 1)$ &  $2\log\abs{x}$ \\
    \bottomrule
  \end{tabularx}
\end{table}


\fi

\subsection{Piecewise Potential}
\label{sec:piece-wise-templ}
We define the piecewise potential function $\phiPw(\leaf) = 0$ and $\phiPw(\tree{t}{a}{u}) = \phiPw(t) + [\abs{t} < \abs{u}] + \phiPw(u)$ using Iverson brackets, where $[p] = 1$ if the predicate $p$ holds and $[p] = 0$ otherwise. Iverson brackets are expressed directly in a new template language $\tlangPw$. This language also provides terms to model the function $\phiPw$ and logarithmic terms for costs.
\begin{align*}
  \tlangPw(\vec{x}) &\coloneq\; \tlangLog(\vec{x}) \cup \{\phiPw(x) \mid x \in \vec{x}\} \\
            &\cup \{[{\textstyle \sum_{y \in \vec{y}}} \abs{y} + c < {\textstyle \sum_{z \in \vec{z}}} \abs{z} ] \mid \vec{z},\vec{y} \subseteq \vec{x} \cup \{X\}, c \in \{-1,0,1,2\} \}
\end{align*}
$X$ is a fixed template parameter, i.e.~a variable that does not occur in the program. We use $[\vec{x} + c < \vec{y}]$ as shorthand for the Iverson bracket, involving the sum of weights of these variables and abbreviate $q_{\phiPw(x)}$ as $q_x$. The constraint generating functions for $\tlangPw$ are shown in Figure~\ref{fig:piece-constraints}.

\paragraph{Expert Knowledge.} We use the following lemma to relate Iverson brackets and logarithmic terms.
  \begin{lemma}
    \label{lem:logLemma2Iver}
    Let $x, y \geq 1$ then $\log(x + y) \geq \log y + [x \ge y]$ and \\
    \begin{enumerate*}[(i)]
    \item $\log(x + y) + [x < y] \geq \log y + 1$;
    \item $\log(x + y) \geq \log y + [x > y]$
    \end{enumerate*}.
  \end{lemma}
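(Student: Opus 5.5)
Throughout, $\log$ is base $2$, and the key fact is monotonicity of $\log$ on positive reals. Each of the three inequalities splits into two cases according to whether the Iverson bracket equals $0$ or $1$. When the bracket is $0$, the claim reduces to $\log(x+y) \geq \log y$ (or, for (i), to $\log(x+y)\geq \log y + 1$, handled below). This follows from $x \geq 1 > 0$, so $x + y > y$, together with monotonicity.

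\emph{Main claim.} If $x < y$, the bracket $[x \geq y]$ vanishes and we are done by monotonicity. If $x \geq y$, then $x + y \geq 2y$, hence $\log(x+y) \geq \log(2y) = \log y + 1 = \log y + [x \geq y]$.

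\emph{Part (i).} If $x < y$, the bracket contributes $1$, and it suffices to show $\log(x+y) \geq \log y$, which again holds by monotonicity. If $x \geq y$, the bracket is $0$, and we need $\log(x+y) \geq \log y + 1$. This is exactly the second case of the main claim.

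\emph{Part (ii).} This is a weakening of the main claim, since $[x > y] \leq [x \geq y]$ pointwise. So (ii) follows directly from the main inequality.

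No step is a genuine obstacle. The only point requiring care is the use of the hypotheses $x, y \geq 1$: they guarantee that all logarithms are defined and nonnegative, and that $x > 0$, which is needed for the strict growth $x + y > y$. Alternatively, one could derive the case $x \geq y$ from Lemma~\ref{lem:favorite-log-lemma}: using $\log x \geq \log y$, it gives $\log(x+y) \geq \sfrac{1}{2}\log x + \sfrac{1}{2}\log y + 1 \geq \log y + 1$. However, the direct doubling argument is simpler, and I would use it.
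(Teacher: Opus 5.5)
Your proof is correct and complete: splitting on the value of the Iverson bracket, you use monotonicity whenever only $\log(x+y)\ge\log y$ is needed and the doubling bound $x+y\ge 2y$ when $x\ge y$ (with $\log$ in base $2$, as the paper's Lemma~\ref{lem:favorite-log-lemma} also requires), which settles the main inequality and (i), and (ii) follows from $[x>y]\le[x\ge y]$. The paper states this lemma without proof, so there is no alternative argument to compare against; your key case $x\ge y\Rightarrow\log(x+y)\ge\log y+1$ is exactly Corollary~\ref{lem:guarded-favorite-log}, which the paper presents as a variation of this lemma rather than as a step in proving it.
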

  For the analysis of weight-biased leftist heaps we require a variation of Lemma~\ref{lem:logLemma2Iver}, where the condition $x \geq y$ takes guard predicates into account.
  \begin{corollary}
    \label{lem:guarded-favorite-log}
  Let $x, y \geq 1, x \geq y$ then $\log(x + y) \geq \log y + 1$.
\end{corollary}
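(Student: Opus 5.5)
This follows at once from the first (unnumbered) part of Lemma~\ref{lem:logLemma2Iver}, which states that for $x,y \geq 1$ we have $\log(x+y) \geq \log y + [x \ge y]$. Under the additional hypothesis $x \geq y$ of the corollary, the Iverson bracket $[x \ge y]$ evaluates to $1$. Substituting this into the inequality gives exactly $\log(x+y) \geq \log y + 1$. In the paper's setting the hypothesis $x \geq y$ is supplied by a guard predicate (for instance $\#t \geq \#u$ from the refinement $\WeightTree$). The corollary is therefore just the specialisation of the lemma to the case where the guard fixes the bracket's value.

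If one prefers not to rely on the bracketed form, there is a direct check. Since $\log$ is base $2$ and monotone, $x \geq y$ gives $x + y \geq 2y$. Hence $\log(x+y) \geq \log(2y) = \log y + 1$, and $y \geq 1$ ensures the logarithms are defined.

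There is no real obstacle here. The only point to be careful about is that $\log$ denotes $\log_2$, as elsewhere in the paper, so that $\log 2 = 1$.
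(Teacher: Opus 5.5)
Your proposal is correct and takes the paper's intended route: the paper gives no separate proof, presenting the statement as a corollary of Lemma~\ref{lem:logLemma2Iver}, and it follows exactly as you say, since the guard $x \geq y$ forces $[x \ge y] = 1$ in the lemma's first inequality. Your direct check via $x + y \geq 2y$, hence $\log(x+y) \geq \log(2y) = \log y + 1$, is a valid self-contained argument that also establishes that part of the lemma.
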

\begin{figure}[t]
  \begin{mathpar}
    \inferrule*{
      \ctr*{\tlangLog}{\labConst}(\flstc{leaf}, Q, Q')
    }{
      \ctr*{\tlangPw}{\labConst}(\flstc{leaf}, Q, Q')
    }
    \and
    \inferrule*{
      q_t = q_u = q_{[t < u]} = q'_{\val}\\
      q_{[\bar{x},t,u + c < \bar{y}]} = q'_{[\bar{x},\val + c < \bar{y}]}\\\\
      q_{[\bar{x} + c < \bar{y},t,u]} = q'_{[\bar{x} + c < \bar{y},\val]}\\
      \ctr*{\tlangLog}{\labConst}(\flstc{node}\ t\ a\ u,Q,Q')
    }{
      \ctr*{\tlangPw}{\labConst}(\flstc{node}\ t\ a\ u,Q,Q')
    }
    \and
    \inferrule*[vskip=0.5ex]{
      p_{[\bar{x} + c < \bar{y}]} = q_{[\bar{x},x + (c - 1) < \bar{y}]} +
      q_{[\bar{x} + (c + 1) < \bar{y},x]} + q_{[\bar{x} + c < \bar{y}]}\\
      r_{[\bar{x},t,u + c <\bar{y}]} = q_{[\bar{x},x + c < \bar{y}]}\\
      r_{[\bar{x} + c <\bar{y},t,u]} = q_{[\bar{x} + c < \bar{y},x]}\\
      r_{[\bar{x} + c <\bar{y}]} = q_{[\bar{x} + c < \bar{y}]}\\
      p_{x_i} = r_{x_i} = q_{x_i}\\
      \ctr*{\tlangLog}{\labMatch}
      (Q,Q',P,R,(\Gamma,\typed{x}{\Tree}),
      (\Gamma,\typed{t}{\Tree},\typed{u}{\Tree}))
    }{
      \ctr*{\tlangPw}{\labMatch}
      (Q,Q',P,R,(\Gamma,\typed{x}{\Tree})),
      (\Gamma,\typed{t}{\Tree},\typed{u}{\Tree})
    }
    \and
    \inferrule*[vskip=1ex]{
      q_{[\vec{x_1} + c < \vec{x_2}]} = p_{[\vec{x_1} + c < \vec{x_2}]}\\
      r_{i[x \mapsto \val]} = p'_i\\
      \forall c \neq 0.\ q_{[\vec{x} + \vec{y} + c < \vec{z}]} = p^{[\vec{x} + \vec{y} + c < \vec{z}]}_{[\vec{x} < \abs{X}]} +  p^{[\vec{x} + \vec{y} < \vec{z}]}_{[\vec{x} + c < \abs{X}]} \\
      r_{[x + \vec{y} + c < \vec{z}]} = p'^{[\vec{x} + \vec{y} + c < \vec{z}]}_{[\val + c < \abs{X}]} \\
      \ctr*{\tlangLog}{\labLet}
      (Q,Q',P,P',R,\vec{P},\vec{P'},\Gamma,
      (\Delta,\typed{x}{\Tree}))
    }{
      \ctr*{\tlangPw}{\labLet}(Q,Q',P,P',R,\vec{P}, \vec{P'}, \Gamma, (\Delta, \typed{x}{\Tree})) 
    }
  \end{mathpar}
  \caption{Constraint generating functions for $\tlangPw$ where $\vec{x} = \dom(\Gamma)$, $\vec{y} = \vec{z} = \dom(\Delta)$ and $\vec{x_1}, \vec{x_2} \subseteq \vec{x}$. $\bar{x},\bar{y}$ denote (possibly empty) sequences of variables.}
    \label{fig:piece-constraints}
  \end{figure}

\subsection{Sum-of-Logs Potential}
\label{sec:skew-sol} The template language \(\tlangSol\) is defined in Example~\ref{ex:sum-of-logs}\ifextended and its constraint generators are shown in Figure~\ref{fig:sol-constraints}\fi. We instantiate it for the analysis of skew heaps and weight-biased leftist heaps. The value of $\params$ is chosen manually as part of the template design. We experimented with different combinations reported in the literature, which affect the precision of the resulting bounds (see Figure~\ref{fig:results}).

\paragraph{Expert Knowledge.} To support the optimal golden-ratio bound for skew heaps and weight-biased leftist heaps, we generalise Lemma~\ref{lem:favorite-log-lemma} and obtain the following result.
\begin{lemma}
\label{lem:new-favorite-log-lemma}
  Let $a,b > 0$.
  Then, $(a+b)\log(x + y) \geq a \log x + b \log y + 1$ for all $x,y > 0$ if and only if $\frac{(a+b)^{a+b}}{a^ab^b} \ge 2$.
\end{lemma}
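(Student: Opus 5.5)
The plan is to exploit homogeneity. Write $F(x,y) = (a+b)\log(x+y) - a\log x - b\log y$; logarithms are base $2$ as elsewhere in the paper. Since $F(\lambda x,\lambda y) = (a+b)\log\lambda - a\log\lambda - b\log\lambda + F(x,y) = F(x,y)$, $F$ depends only on the ratio $s = x/(x+y) \in (0,1)$. Explicitly,
\begin{equation*}
  F(x,y) = -a\log s - b\log(1-s) =: g(s) \tpkt
\end{equation*}
The claimed inequality holds for all $x,y>0$ if and only if $\inf_{s\in(0,1)} g(s) \geq 1$. So the whole statement reduces to computing this infimum in closed form.

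Next, minimise $g$. It is strictly convex on $(0,1)$ as a sum of $-\log$ terms with positive weights, and it tends to $+\infty$ at both endpoints, so it has a unique minimiser. Setting $g'(s) = \frac{1}{\ln 2}\bigl(-\frac{a}{s} + \frac{b}{1-s}\bigr) = 0$ gives $s^* = \frac{a}{a+b}$, which corresponds to $x:y = a:b$. Substituting yields
\begin{equation*}
  g(s^*) = -a\log\tfrac{a}{a+b} - b\log\tfrac{b}{a+b} = \log\frac{(a+b)^{a+b}}{a^a b^b} \tpkt
\end{equation*}
This step could alternatively be done via the weighted AM--GM (concavity of $\log$) inequality $\frac{a}{a+b}\log\frac{x}{a} + \frac{b}{a+b}\log\frac{y}{b} \le \log\frac{x+y}{a+b}$, with equality at $x/a = y/b$. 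That route gives the same bound directly, and I would mention it as the cleaner argument for the ``if'' direction.

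Finally, conclude both directions. If $\frac{(a+b)^{a+b}}{a^ab^b} \ge 2$, then $g(s) \ge g(s^*) \ge \log 2 = 1$ for all $s$, which is the inequality. Conversely, if the inequality holds for all $x,y>0$, instantiate it at $x = a$, $y = b$ (both positive). This gives $g(s^*) \ge 1$, i.e.\ $\log\frac{(a+b)^{a+b}}{a^ab^b} \ge 1$, which is the stated condition by monotonicity of $\log$.

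There is no real obstacle. The only points needing care are the reduction to a one-variable function via scale invariance and the verification that the critical point is a global minimum, which follows from convexity. As a sanity check, $a = b = \sfrac{1}{2}$ gives ratio $2$ and recovers Lemma~\ref{lem:favorite-log-lemma}, whose constraint $x,y\ge 1$ is then seen to be unnecessary.
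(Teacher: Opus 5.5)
Your proposal is correct and follows essentially the same route as the paper. The paper rewrites the inequality as $\frac{(x+y)^{a+b}}{x^ay^b}\ge 2$, substitutes the ratio $r = y/x$, and observes that $(1+r)^a(1+\frac{1}{r})^b$ is minimised at $r = b/a$; this is your minimiser $s^* = a/(a+b)$ under a different parametrisation, and you additionally justify global minimality via convexity and make the ``only if'' direction explicit by instantiating at $x=a$, $y=b$, both of which the paper leaves implicit.
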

\begin{proof}
  Consider that $(a+b)\log(x + y) \geq a \log x + b \log y + 1$ is equivalent to $\frac{(x+y)^{a+b}}{x^ay^b} \ge 2$.
  Then, set $r = \frac{y}{x} > 0$ and observe that the function $(1+r)^a(1+\frac{1}{r})^b$ has a minimum at $r=\frac{b}{a}$.
\end{proof}
When minimising $2a+b$ in the bound for $\texttt{meld}$ under the boundary condition $\frac{(a+b)^{a+b}}{a^ab^b} \ge 2$, one obtains $a = \frac{\phi\ln{2}}{(\phi+2)\ln{\phi}} \approx 0.6441$ and $b = \frac{\ln{2}}{(\phi+2)\ln{\phi}} \approx 0.3981$, which matches the result of~\cite{kaldewaij1991derivation}. We note that verifying these optimal values for $a$ and $b$ would require a symbolic manipulation of algebraic numbers, which for the moment is outside of our approach.
Instead we contend ourselves with approximating these values by rational numbers (our choice is arbitrary and any approximation satisfying Lemma~\ref{lem:new-favorite-log-lemma}’s side condition suffices):
\begin{restatable}{lemma}{logIneqApprox}
$\left (\frac{105}{163} + \frac{3115}{7824}\right ) \log(x + y) \geq$
$\frac{105}{163} \log x  + \frac{3115}{7824} \log y + 1$ for $x,y \geq 1$.
\end{restatable}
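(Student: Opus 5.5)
This is a direct instance of Lemma~\ref{lem:new-favorite-log-lemma} with $a = \frac{105}{163}$ and $b = \frac{3115}{7824}$. Both are positive, and that lemma gives the inequality for all $x,y>0$, so in particular for $x,y \geq 1$. The whole task is therefore to verify the side condition
\begin{equation*}
\frac{(a+b)^{a+b}}{a^a b^b} \geq 2 ,
\end{equation*}
which is equivalent to
\begin{equation*}
(a+b)\log(a+b) - a\log a - b\log b \geq 1 .
\end{equation*}

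First I compute $a+b$ exactly. Since $163\cdot 48 = 7824$, we have $a = \frac{5040}{7824}$ and hence $a+b = \frac{8155}{7824} \approx 1.0423$. Numerically $a \approx 0.64417$ and $b \approx 0.39813$. These lie very close to the optimal algebraic values $0.6441$ and $0.3981$ stated after Lemma~\ref{lem:new-favorite-log-lemma}, which make the side condition an equality. So the margin is tiny. I also note that $a+b$ slightly exceeds the optimal sum, which is what should make the condition hold strictly.

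The second step is to evaluate the entropy-like expression $H = (a+b)\log(a+b) - a\log a - b\log b$. With binary logarithms the three terms are approximately $0.0623$, $0.4087$ and $0.5290$. Their sum should come out slightly above $1$, with a margin on the order of $10^{-4}$.

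Because the margin is so small, the main obstacle is making this check rigorous rather than relying on a floating-point estimate. I would first look for an exact certificate: clear denominators and reduce the side condition to an integer inequality. The exponents are rational, so raise both sides to the power $7824$ (using $b\cdot 7824 = 3115$ and $a\cdot 7824 = 5040$). The condition becomes
\begin{equation*}
8155^{8155}\cdot 7824^{\,?} \geq 2^{7824}\cdot 5040^{5040}\cdot 3115^{3115} ,
\end{equation*}
with the power of $7824$ balanced appropriately. This is a comparison of large integers that can be decided exactly with arbitrary-precision arithmetic, or equally by interval arithmetic on the logarithms with certified error bounds below $10^{-5}$.

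Failing a pure hand argument, I would present the integer comparison as a mechanically checked computation. I would add the remark that any rational pair with $a+b$ slightly above the optimum along the ratio $b/a = 1/\phi$ works, matching the paper's comment that the choice is arbitrary.
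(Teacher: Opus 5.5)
Your proposal is correct and follows the route the paper indicates. The paper gives no separate proof; it only remarks that any rational pair satisfying the side condition of Lemma~\ref{lem:new-favorite-log-lemma} suffices, and you do exactly that with $a=\frac{105}{163}$, $b=\frac{3115}{7824}$, reducing everything to $\frac{(a+b)^{a+b}}{a^a b^b}\geq 2$. Two small corrections: the exponent of $7824$ is exactly $0$ because $5040+3115=8155$, so the certificate is $8155^{8155}\geq 2^{7824}\cdot 5040^{5040}\cdot 3115^{3115}$; this holds with a natural-log difference of about $0.074$, so the side condition has margin roughly $1.4\cdot 10^{-5}$, not the $10^{-4}$ you estimated.
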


\subsection{Rank Potential} For the analysis of rank-biased leftist heaps we introduce the following  template language, whose constraint generators are given in Figure~\ref{fig:rank-constraints}.
\begin{equation*}
  \tlangRank(\vec{x}) = \mathcal{L}_{\text{log}}(\vec{x}) \cup  \{\dag x \mid x \in \vec{x}\} 
\end{equation*}
\paragraph{Expert Knowledge.} We require only the well-known inequality $\dag{x} \le \log{\abs{x}}$~\cite{schoenmakers2024amortized}.

\begin{figure}[t]
  \centering
  \begin{mathpar}
    \inferrule*{
      \ctr*{\tlangLog}{\labConst}(\flstc{leaf}, Q,Q')
    }{
      \ctr*{\tlangRank}{\labConst}(\flstc{leaf}, Q, Q')
    }
    \and
    \inferrule*[vskip=0.5ex]{
      p_{\dag x_i} = r_{\dag x_i} = q_{\dag x_i}\\
      r_{(2)} = q_{\dag x} + q_{(2)}\\
      r_{\dag u} = q_{\dag x}\\\\
      \ctr*{\tlangLog}{\labMatch}
      (Q,Q',P,R,(\Gamma,\typed{x}{\Tree}),
      (\Gamma,\typed{t}{\Tree},\typed{u}{\Tree}))
    }{
      \ctr*{\tlangRank}{\labMatch}
      (Q,Q',P,R,(\Gamma,\typed{x}{\Tree}),
      (\Gamma,\typed{t}{\Tree},\typed{u}{\Tree}))
    }
    \and
    \inferrule*[vskip=0.5ex]{
      q_{\dag t} = 0,\,q_{\dag u} = q'_{\dag \val}\\
      q_{(2)} = q'_{\dag \val} + q'_{(2)}\\\\
      \ctr*{\tlangLog}{\labConst}(\flstc{node}\ t\ a\ u,Q,Q')
    }{
      \ctr*{\tlangRank}{\labConst}(\flstc{node}\ t\ a\ u,Q,Q')
    }
    \and
    \inferrule*{
      p_{\dag x_i} = q_{\dag x_i}\\
      r_{\dag y_i} = q_{\dag y_i}\\
      r_{\dag x} = p'_{\dag \val}\\\\
      \ctr*{\tlangLog}{\labLet}
      (Q,Q',P,P',R,\vec{P},\vec{P'},\Gamma,
      (\Delta,\typed{x}{\Tree}))
    }{
      \ctr*{\tlangRank}{\labLet}
      (Q,Q',P,P',R,\vec{P},\vec{P'},\Gamma,
      (\Delta,\typed{x}{\Tree}))
    }
  \end{mathpar}
   \caption{Constraint generating functions for $\tlangRank$, where $x_i \in \dom{\Gamma}$ and $y_i \in \dom(\Delta)$.}
   \label{fig:rank-constraints}
\end{figure}
 

\section{Related Work}
\label{sec:related-work}
\paragraph{Verification.}
Nipkow and Brinkop~\cite{nipkow2019amortized} give the first mechanised amortised analyses of skew heaps, splay trees, splay heaps, and pairing heaps in Isabelle/HOL, but for skew heaps verify only the original bound $3\log(\abs{x}+\abs{y})$, not the improved bound of Kaldewaij and Schoenmakers~\cite{kaldewaij1991derivation}. Lorenz et al.~\cite{leutgeb2021atlas} subsequently verified these data structures using AARA, except for skew heaps. By confirming the improved bound, our work completes the AARA-based verification of all four structures. Unlike the pen-and-paper proof of Kaldewaij and Schoenmakers, our bounds use $\log(\abs{x}+\abs{y}-1)$ instead of $\log(\abs{\texttt{meld}\ x\ y})$, requiring the inequality $\log(\abs{x}+\abs{y}) \geq \log(\abs{\texttt{meld}\ x\ y})$ to be internalised and enabling greater automation.\ifextended\footnote{We also consider alternative bounds, motivating \emph{value variables}, described in Appendix~\ref{ap:value-variables} with an experimental comparison.}\fi

\paragraph{Automated Amortised Resource Analysis.}
Type-based automated amortised analysis originated in constant heap-space bounds~\cite{hofmann2003static} and has since been extended to higher-order functions~\cite{jost2010static}, lazy evaluation~\cite{simoes2012automatic,vasconcelos2015type}, and probabilistic programs~\cite{wang2020raising,leutgeb2022automated}. While non-linear bounds such as polynomial~\cite{hoffmann2010amortized,hoffmann2011multivariate} and exponential~\cite{kahn2020exponential} have been studied, tree-based data structures require logarithmic bounds, first supported by AARA in the analysis of splay trees~\cite{leutgeb2021atlas}, which our work extends.

\section{Conclusion}
\label{sec:conclusion}

In this work we have established a novel type-based amortised cost analysis that can fully automatically infer precise and optimal bounds for sophisticated
functional data structures. The approach has been implemented in a novel prototype implementation and experimentally validated.


Even though we were not able to identify a potential function that resolves the motivating problem of this work—namely, finding a better bound for rank-biased heaps, as posed by Schoenmakers~\cite{schoenmakers2024amortized}—our adaptations confirm existing results in a fully automated fashion and allow us to eliminate a large class of potential functions from the search space.

Since piecewise reasoning and invariants are commonplace in pen-and-paper proofs, we believe that supporting these features enables the analysis of a broad class of data structures. Promising examples include rank-balanced trees~\cite{haeupler2015rank}, such as AVL and red-black trees, as well as maxiphobic heaps~\cite{gibbons2003fun}, which are closely related to weight-biased leftist heaps.
%

\begin{credits}
\subsubsection{\ackname} This study was funded by the Austrian Science Fund (FWF) under Project Number P 36623.

\subsubsection{\discintname} The authors have no competing interests.

\end{credits}

\section{Data-Availability Statement}
A research artifact is publicly available on Zenodo at \url{https://doi.org/10.5281/zenodo.19660281}. It includes a pre-built Docker image of our prototype, provided as a command-line tool, the complete source code, and the benchmark suite used in our evaluation. The artifact also provides detailed, step-by-step instructions for reproducing the results reported in this paper, namely the automatically derived complexity bounds.

In addition, the tool is released as open-source software and is hosted at \url{https://github.com/autosard/atlas-re}, facilitating reuse, inspection, and adaptation for independent studies.

\printbibliography
\ifextended
\newpage
\appendix

\section{Exact Bounds}
\label{sec:exact-bounds}
Table~\ref{tab:bounds} lists the exact cost bounds obtained for our primary case studies, including the concrete coefficients produced by our implementation.


\section{Supplementary Constraint Generating Functions}
\label{sec:constraints}

\subsection{Template Language Agnostic Constraints} As mentioned above the constraint generators for all algorithmic type rules except $\ctrule{Const}$, $\ctrule{Match}$ and $\ctrule{Let}$ yield template language agnostic definitions, that are presented in Figure~\ref{fig:templ-agnostic-constraints}. The following lemma establishes the soundness of these constraint generators, i.e. the constraints imply the obligations of the corresponding declarative type rules.
\begin{lemma}[Soundness of Template Agnostic Constraint Generators]
Let $C$ be set of generated constraints and let $M$ be a satisfying model for $C$. 
Then the following properties hold:
\begin{enumerate}[label=\roman*., leftmargin=*]
  \setlength{\itemsep}{1ex}
  \item If $C = \ctr*{\tlang{}}{\labVar}(Q,Q', \typed{x}{\alpha})$ then $Q^M(\typed{x}{\alpha}) = Q'^M(\typed{x}{\alpha})$.
  \item If $C = \ctr*{\tlang{}}{\labIte}(Q,Q', P, R, \Gamma)$ then $Q^M(\Gamma) = P^M(\Gamma) + R^M(\Gamma)$.
    
  \item If $C = \ctr*{\tlang{}}{\labShift}(k, Q, Q', P, P', \Gamma)$ then
    
    \begin{enumerate}
      \item $Q^M(\Gamma) = P^M(\Gamma) + k$
      \item $Q'^M(\val) = P'^M(\val) + k$  
    \end{enumerate}
  \item If $C = \ctr*{\tlang{}}{\labApp}(k, Q,Q', P, P', R, R', \typed{\vec{x}}{\vec{\alpha}})$ then
    
    \begin{enumerate}[(a)]
    \item $Q^M(\vec{x}) = P^M(\vec{y}) + k \cdot R^M(\vec{y})$
    \item $Q'^M(\val) = P^M(\val) + k \cdot R'^M(\val)$
    \end{enumerate}
  \item If $C = \ctr*{\tlang{}}{\labTick}(a,Q,Q', P')$ then $Q'^M(\val) = P'^M(\val) - a$.
  \item If $C = \ctr*{\tlang{}}{\labWVar}(Q,Q', P, (\Gamma, \typed{x}{\alpha}))$ then $Q^M(\Gamma,\typed{x}{\alpha}) \geq P^M(\Gamma)$.
  \item If $C = \ctr*{\tlang{}}{\labShare}(Q,Q', P, (\Gamma, \typed{z}{\alpha}), (\Gamma, \typed{x}{\alpha}, \typed{y}{\alpha}))$ then $P^M(\Gamma,\typed{x}{\alpha}, \typed{y}{\alpha}) = Q^M(\Gamma, \typed{z}{\alpha})$. 
  \end{enumerate}
\end{lemma}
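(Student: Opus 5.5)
The lemma is a collection of seven independent claims, one for each template-agnostic generator in Figure~\ref{fig:templ-agnostic-constraints}. I would prove each by direct calculation. The common tool is that a resource template $Q(\vec{x}) = \sum_{t \in \tlang{}(\vec{x})} q_t \cdot t$ is linear in its coefficients. Hence any coefficient-wise identity among the $M(q_t)$ lifts to an identity among the instantiated resource functions: multiply both sides by the term $t$ and sum over $t \in \tlang{}(\vec{x})$. I also use the convention that coefficients not mentioned in $C$ are zero, and I write $q_1$ for the coefficient of the constant term $1$. So, for every item, I first apply $M$ to the generated equalities, then multiply by the matching template term, then sum.

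For \textsc{Var}, the constraint $q_i = q'_{i[x\mapsto\val]}$ identifies each term $t(x)$ of $Q$ with the term $t(\val)$ of $Q'$. Summing and evaluating at the same value gives $Q^M(\typed{x}{\alpha}) = Q'^M(\typed{x}{\alpha})$.

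For \textsc{Ite}, the constraint $q_i = p_i + r_i$ sums termwise to $Q^M = P^M + R^M$.

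For \textsc{Shift}, the non-constant coefficients agree and the constant coefficients differ by exactly $k$. Summing gives $Q^M(\Gamma) = P^M(\Gamma) + k$, and the same computation gives the statement for $Q'$ and $P'$.

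\textsc{Tick} is analogous to \textsc{Shift}: $p'_1 = q'_1 + a$ gives $P'^M = Q'^M + a$, i.e.\ $Q'^M(\val) = P'^M(\val) - a$.

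For \textsc{App}, I rename arguments by $\vec{x}\mapsto\vec{y}$. The map $t \mapsto t[\vec{x}\mapsto\vec{y}]$ is a bijection between $\tlang{}(\vec{x})$ and $\tlang{}(\vec{y})$, and it leaves the values of the terms unchanged once both sides are evaluated under the substitution binding $\vec{y}$ to the values of $\vec{x}$. Summing $q_{i[\vec{x}\mapsto\vec{y}]} = p_i + k\, r_i$ then yields (a). Part (b) follows the same way from $q'_i = p'_i + k\, r'_i$; I would read the $P^M(\val)$ printed in (b) as $P'^M(\val)$.

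\textsc{WVar} is the only inequality. Split the sum defining $Q(\Gamma,\typed{x}{\alpha})$ into the terms in $I_x$ and those outside it. The terms outside $I_x$ coincide with $P(\Gamma)$, because their coefficients are equal and those terms do not mention $x$. The terms in $I_x$ have coefficients $q_i \ge 0$, and template terms are non-negative by the definition of the template language. So their contribution is $\ge 0$, giving $Q^M \ge P^M$.

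For \textsc{Share}, I would check that $\tlang{}(\Gamma, z)$ contains terms $t$ such that both $t[z\mapsto x]$ and $t[z\mapsto y]$ lie in $\tlang{}(\Gamma,x,y)$. I would also check that the constraint, read in the direction from $P$ to $Q$, defines $P$ entirely, with all coefficients of $P$ not of this form set to zero by the convention. Evaluating at $x = y = z$ (the sharing semantics duplicates the value), each $q_i\, t$ splits into $p_{i[z\mapsto x]}\, t(x) + p_{i[z\mapsto y]}\, t(y)$, since $t(x) = t(y) = t(z)$. Summing gives the equality.

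I expect this bookkeeping to be the main obstacle: making the index correspondence precise, and in particular handling terms that mention both $x$ and $y$, or the same index arising from different $i$. Everything else is routine linearity.
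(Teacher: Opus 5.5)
Your approach is the same as the paper's, whose whole proof is the remark that each item follows by unfolding the corresponding generator. Your linearity calculations for \textsc{Var}, \textsc{Ite}, \textsc{Shift}, \textsc{Tick}, \textsc{App} and \textsc{WVar} are exactly that unfolding. You are also right that the $P^M(\val)$ in (iv)(b) should read $P'^M(\val)$.

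The step that fails as written is the final summation for \textsc{Share}, and the difficulty is not the one you anticipate. Distinct indices never collide. Renaming $z$ to a fresh variable is injective, so a term $i$ containing $z$ yields one term mentioning $x$ and a different one mentioning $y$. Coefficients of $P$ on terms mentioning both $x$ and $y$ are zero by the convention. The real collision is a self-collision: if $i$ does not contain $z$, then $i[z\mapsto x] = i[z\mapsto y] = i$, so the generator read literally forces $2p_i = q_i$. Your identity $q_i t = p_{i[z\mapsto x]}t(x) + p_{i[z\mapsto y]}t(y)$ is then true, but the resulting sum counts each $z$-free term of $P$ twice, so it is not $P^M(\Gamma,\typed{x}{\alpha},\typed{y}{\alpha})$. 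In fact $P^M$ recovers only half of the $z$-free part of $Q^M$: for $\Gamma = \typed{w}{\Tree}$ and $Q = \log\abs{w}$ you get $P = \tfrac{1}{2}\log\abs{w}$. So (vii) is false under the literal reading.

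To fix this, read the constraint as ranging over the set $\{i[z\mapsto x], i[z\mapsto y]\}$. That is, split only the coefficients of terms containing $z$, and copy the others as $p_i = q_i$. Under this reading every index of $P$ with a possibly nonzero coefficient arises exactly once. Evaluating at $x = y = z$ then gives (vii) by linearity, as you intended. This is also the reading the paper's one-line proof tacitly assumes.
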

\begin{proof}
  Each case follows immediately by unfolding the definition of the corresponding constraint-generating function.
\end{proof}

\subsection{Constraint Generators for the Logarithmic Template Language}
\begin{figure}
  \begin{mathpar}
    \inferrule*{
      q(2) = \sum_{a + b = 2} q'_{(a\val,b)}
    }{
      \ctr*{\tlangLog}{\labConst}(\flstc{leaf}, Q, Q')
    }
    \and
    \inferrule*{
      q_{(at,au,c)} = q'_{(a\val,c)}
    }{
      \ctr*{\tlangLog}{\labConst}(\flstc{node}\ t\ a\ u,Q,Q')
    }
    \and
    \inferrule*{
      p_{(\vec{a}\vec{x}, d)} = \sum_{b+c=d} q_{(\vec{a}\vec{x}, bx,c)}\\
      r_{(\vec{a}\vec{x}, bt, bu, c)} = q_{(\vec{a}\vec{x}, bx, c)}
    }{
      \ctr*{\tlangLog}{\labMatch}
      (Q,\_,P,R,(\Gamma,\typed{x}{\Tree}),
      (\Gamma,\typed{t}{\Tree},\typed{u}{\Tree}))
    }
    \and
    \inferrule*[vskip=1ex]{
      p_{(\vec{a}\vec{x},c)} = q_{(\vec{a}\vec{x},c)}\\
      r_{(dx, e)} = p'_{(d\val,e)}\\
      \forall \vec{b} \neq \vec{0}.\left( \ p_{(\vec{b}\vec{y})} = q_{(\vec{b}\vec{y})} \right)\\
      {
        \begin{array}{l}
          \forall \vec{a} \neq \vec{0} \lor c \neq 0, \vec{b} \neq \vec{0}.\\
          \left (q_{(\vec{a}\vec{x}, \vec{b}\vec{y}, c)} = \sum_{d \ne 0\lor e \ne 0} p^{(\vec{b},d,e)}_{(\vec{a}\vec{x},c)} \right )\\    
        \end{array}
      }\\
      {
        \begin{array}{l}
          \forall \vec{b} \neq \vec{0}, d \neq 0 \lor e \neq 0.\\
          \left ( \begin{array}{lr}
            \sum_{(\vec{a},c)} p_{(\vec{a}\vec{x},c)}^{(\vec{b},d,e)} \geq {p'}_{(d\val,e)}^{(\vec{b},d,e)} & r_{(\vec{b}\vec{y},dx,e)} = p'^{(\vec{b},d,e)}_{(d\val,e)}\\
            \forall \vec{a} \neq 0 \lor c \neq 0.\\
            \left( \ p^{(\vec{b},d,e)}_{(\vec{a}\vec{x},c)} \neq 0 \rightarrow p'^{(\vec{b},d,e)}_{(d\val,e)} \leq p^{(\vec{b},d,e)}_{(\vec{a}\vec{x}, c)}\right) \\
          \end{array}\right )  
        \end{array}
      }
    }{
      \ctr*{\tlangLog}{\labLet}
      (Q,\_,P,P',R,\vec{P},\vec{P'},\Gamma,
      (\Delta,\typed{x}{\Tree}))
    }
  \end{mathpar}
  \caption{Constraint generating functions for $\tlangLog$, where $\vec{x} = \dom(\Gamma)$ and $\vec{y} = \dom(\Delta)$}
  ~\label{fig:log-constraints}
\end{figure}

Instead of directly listing the constraint generators for $\tlangSol$, we first present the constraints for $\tlangLog$ (Figure~\ref{fig:log-constraints}). This allows us to isolate constraints that govern the logarithmic terms of resource templates and to reuse them across the template languages $\tlangSol$, $\tlangPw$, and $\tlangRank$, all of which include logarithmic terms.

Before we can establish the soundness of these constraint generators, we require the following lifting lemma for logarithmic terms. 
\begin{lemma}[Hofmann et al.~\cite{hofmann2022type}]
  \label{lem:log_cf}
  Assume $\sum_i q_i \log a_i \geq q \log b$ for some rational numbers $a_i, b > 0$ and $q_i \geq q$. Then, $\sum_i q_i \log(a_i + c) \geq q \log(b + c)$ for all $c \geq 1$.
\end{lemma}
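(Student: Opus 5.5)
The plan is to handle the degenerate cases first and then split the main case according to how $b$ compares with the $a_i$. I read the statement as intended for non-negative coefficients, i.e.\ $q \geq 0$ and hence $q_i \geq q \geq 0$, with a non-empty index set. This matches its use in the constraint generators, where all coefficients are non-negative. Two facts are used throughout:
\begin{itemize}
\item since $c \geq 1$ and $a_i, b > 0$, every term $\log(a_i + c)$ and $\log(b+c)$ is positive;
\item each $q_i \geq 0$.
\end{itemize}
If $q = 0$ the conclusion reads $\sum_i q_i \log(a_i+c) \geq 0$, which is immediate. So from now on I assume $q > 0$, divide by $q$, and set $w_i = q_i/q \geq 1$. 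The hypothesis becomes $\sum_i w_i \log a_i \geq \log b$, and the goal becomes $\sum_i w_i \log(a_i + c) \geq \log(b+c)$.

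\emph{Case 1: some $a_j \geq b$.} Drop every term except the $j$-th; this is allowed because each dropped term $w_i\log(a_i+c)$ is non-negative. Then use $w_j \geq 1$ together with $\log(a_j + c) > 0$:
\[
\sum_i w_i \log(a_i+c) \;\geq\; w_j \log(a_j + c) \;\geq\; \log(a_j+c) \;\geq\; \log(b+c).
\]
This case does not use the hypothesis at all.

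\emph{Case 2: all $a_i < b$.} Here the hypothesis is needed. First split each weight as $w_i = 1 + (w_i - 1)$, where $w_i - 1 \geq 0$. For the excess part, monotonicity of $\log$ gives $(w_i - 1)\log(a_i+c) \geq (w_i-1)\log a_i$. Then add and subtract $\log a_i$ and apply the hypothesis:
\[
\sum_i w_i\log(a_i+c) \;\geq\; \sum_i \log\Bigl(1 + \frac{c}{a_i}\Bigr) + \sum_i w_i \log a_i \;\geq\; \sum_i \log\Bigl(1 + \frac{c}{a_i}\Bigr) + \log b .
\]
Since $\log(b+c) = \log b + \log(1 + c/b)$, it remains to show $\prod_i (1 + c/a_i) \geq 1 + c/b$. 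Pick any index $j$; then $a_j < b$, so $1 + c/a_j \geq 1 + c/b$, and every other factor is at least $1$. This closes the case.

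The step I expect to be the main obstacle is pinning down the implicit side conditions rather than the inequalities themselves. The lemma is false for an empty sum with $q > 0$ and $b \leq 1$: then $0 \geq q\log b$ holds but $0 \geq q \log(b+c)$ fails. It is also doubtful when $q < 0$. So I will state explicitly that $q \geq 0$ and that the index set is non-empty, as in Hofmann et al.~\cite{hofmann2022type}. The requirement $c \geq 1$ is used only to guarantee $\log(a_i + c) \geq 0$, which is what allows dropping terms and scaling by $w_i \geq 1$.
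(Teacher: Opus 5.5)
The paper gives no proof of this lemma. It imports it from Hofmann et al.\ and uses it only in the soundness argument for the logarithmic let-constraints, so there is no in-paper argument to compare yours against.

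Your proof is correct and self-contained. After normalising to $w_i = q_i/q \geq 1$:
\begin{itemize}
\item Case~1 ($a_j \geq b$ for some $j$) needs only that every $\log(a_i+c)$ is positive.
\item In Case~2, the split $w_i = 1 + (w_i - 1)$ together with $\log(a_i+c) = \log a_i + \log(1 + c/a_i)$ reduces the claim to the hypothesis plus the single comparison $1 + c/a_j \geq 1 + c/b$.
\end{itemize}
This also explains the shape of the side conditions that the bare statement hides.

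Both conditions are genuinely needed. Your empty-sum counterexample is right. The case $q < 0$ is not merely doubtful but false. Take $q_1 = q_2 = q = -1$, $a_1 = a_2 = b = 1$ and $c = 1$. The hypothesis holds with equality ($0 \geq 0$), but the conclusion would require $-2\log 2 \geq -\log 2$.

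Your remark that $c \geq 1$ enters only through the positivity of $\log(a_i + c)$ is also sharp. Case~2 works for every $c \geq 0$, but Case~1 cannot do without the bound. Take $q_1 = 3$, $q = 1$, $a_1 = 1/2$, $b = 1/8$. The hypothesis holds with equality, yet for $c = 1/10$ the conclusion fails, since $0.6^3 = 0.216 < 0.225$.
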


\begin{lemma}[Soundness of Constraint Generators for $\tlangLog$] Let $C$ be a set of generated constraints and let $M$ be a satisfying model for $C$. 
  Then the following properties hold:
  \label{lem:soundness-tlanglog}
\begin{enumerate}[label=\roman*., leftmargin=*]
  \setlength{\itemsep}{1ex}
\item If $\ctr*{\tlangLog}{\labConst}(\leaf, Q ,Q')$ then $Q^M() = Q'^M(\leaf)$
\item If $\ctr*{\tlangLog}{\labConst}(\tree{t}{a}{u},Q,Q')$ then $Q^M(t,u) = Q'^M(\tree{t}{a}{u})$
\item If $C = \ctr*{\tlangLog}{\labMatch}(Q,Q',P,R,(\Gamma,\typed{x}{\Tree}),(\Gamma,\typed{t}{\Tree},\typed{u}{\Tree}))$ then
  \begin{enumerate}
  \item $P^M(\Gamma) = Q^M(\Gamma, \leaf)$,
  \item $R^M(\Gamma, \typed{t}{\Tree}, \typed{u}{\Tree})) = Q^M(\Gamma, \tree{t}{a}{u})$.  
  \end{enumerate}
\item If $C = \ctr*{\tlangLog}{\labLet}(Q,Q',P,P',R,\vec{P},\vec{P'},\Gamma,(\Delta,\typed{x}{\Tree}))$ then
  \begin{enumerate}
  \item $Q^M(\Gamma, \Delta) = P^M(\Gamma) + \Psi_{\Delta}(\Delta) + \Psi_{\text{mix}}(\Gamma,\Delta)$
  \item $R^M(\typed{x}{\alpha},\Delta) = P'^M(e_1) + \Psi_{\Delta}(\Delta) + \Omega(\typed{x}{\alpha},\Delta)$
  \item $(P^{i})^M(\Gamma) \geq (P'^{i})^M(\typed{x}{\alpha}) \Rightarrow \Psi_{\text{mix}}(\Gamma,\Delta) \geq \Omega(\typed{x}{\alpha},\Delta)$
  \end{enumerate}
\end{enumerate}

\end{lemma}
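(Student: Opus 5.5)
The plan is to unfold each constraint generator of Figure~\ref{fig:log-constraints} and compare the template terms coefficient by coefficient. Throughout I use two facts: $\abs{\leaf}=1$ and $\abs{\tree{t}{a}{u}}=\abs{t}+\abs{u}$. I also use the convention that coefficients not mentioned in $C$ are zero. Every term of $\tlangLog$ has the form $\log(\vec{a}\abs{\vec{x}}+c)$ with $\sum\vec{a}+c\geq 1$. Since every tree has at least one leaf, each logarithm is applied to a number $\geq 1$, so every term is well defined and non-negative.

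For (i), the template $Q()$ over the empty context only has the constant terms $\log 1 = 0$ and $\log 2 = 1$. Instantiating $Q'(\leaf)$ turns $\log(a\abs{\val}+b)$ into $\log(a+b)$. All terms with $a+b=1$ vanish, and all terms with $a+b=2$ equal $1$. The constraint $q_{(2)}=\sum_{a+b=2}q'_{(a\val,b)}$ therefore gives $Q^M()=Q'^M(\leaf)$. For (ii), substituting $\val=\tree{t}{a}{u}$ turns each term $\log(a\abs{\val}+c)$ into $\log(a\abs{t}+a\abs{u}+c)$. This is exactly the term of $Q(t,u)$ indexed by $(at,au,c)$, and the constraint equates the two coefficients. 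All other coefficients of $Q$, namely those where $t$ and $u$ carry unequal weights, are zero by convention.

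For (iii), I substitute into each term of $Q(\Gamma,x)$. In the leaf case, $\log(\vec{a}\abs{\vec{x}}+b\abs{x}+c)$ becomes $\log(\vec{a}\abs{\vec{x}}+(b+c))$. Grouping by $d=b+c$ gives precisely the constraint for $p_{(\vec{a}\vec{x},d)}$, hence $P^M(\Gamma)=Q^M(\Gamma,\leaf)$. In the node case, the term becomes $\log(\vec{a}\abs{\vec{x}}+b\abs{t}+b\abs{u}+c)$, which is the $R$-term indexed by $(\vec{a}\vec{x},bt,bu,c)$. The constraint for $r$ equates the coefficients, and the remaining coefficients of $R$ are zero.

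For (iv), I define the three pieces of the decomposition explicitly:
\begin{itemize}
\item $\Psi_\Delta$ collects the $Q$-terms $q_{(\vec{b}\vec{y})}$ that mention only $\Delta$;
\item $\Psi_{\text{mix}}$ collects the terms $q_{(\vec{a}\vec{x},\vec{b}\vec{y},c)}\log(\vec{a}\abs{\vec{x}}+\vec{b}\abs{\vec{y}}+c)$ with $\vec{b}\neq\vec{0}$ and $(\vec{a}\neq\vec{0}\lor c\neq 0)$;
\item $\Omega$ collects the $R$-terms $r_{(\vec{b}\vec{y},dx,e)}\log(\vec{b}\abs{\vec{y}}+d\abs{x}+e)$ with $\vec{b}\neq\vec{0}$.
\end{itemize}
With these definitions, (a) and (b) follow directly from the equalities $p_{(\vec{a}\vec{x},c)}=q_{(\vec{a}\vec{x},c)}$, $p_{(\vec{b}\vec{y})}=q_{(\vec{b}\vec{y})}$ and $r_{(dx,e)}=p'_{(d\val,e)}$.

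The substantive step, and the expected obstacle, is (c). Fix $\vec{b}\neq\vec{0}$ and $(d,e)\neq(0,0)$. The hypothesis for the corresponding cost-free typing, read at the evaluated values, gives
\begin{equation*}
\sum_{(\vec{a},c)}p^{(\vec{b},d,e)}_{(\vec{a}\vec{x},c)}\log(\vec{a}\abs{\vec{x}}+c)\;\geq\;p'^{(\vec{b},d,e)}_{(d\val,e)}\log(d\abs{x}+e).
\end{equation*}
I drop the summands with zero coefficient, which does not change the sum. For every remaining summand the constraints give $p^{(\vec{b},d,e)}_{(\vec{a}\vec{x},c)}\geq p'^{(\vec{b},d,e)}_{(d\val,e)}$, which is exactly the coefficient side condition of Lemma~\ref{lem:log_cf}. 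The arguments of the logarithms are $\geq 1>0$. The shift $c':=\vec{b}\abs{\vec{y}}$ satisfies $c'\geq 1$ because $\vec{b}\neq\vec{0}$ and every tree has at least one leaf. Lemma~\ref{lem:log_cf} then yields
\begin{equation*}
\sum_{(\vec{a},c)}p^{(\vec{b},d,e)}_{(\vec{a}\vec{x},c)}\log(\vec{a}\abs{\vec{x}}+\vec{b}\abs{\vec{y}}+c)\;\geq\;p'^{(\vec{b},d,e)}_{(d\val,e)}\log(\vec{b}\abs{\vec{y}}+d\abs{x}+e).
\end{equation*}

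It remains to sum this inequality over all $(\vec{b},d,e)$ and match both sides against the definitions. On the left, the constraint $q_{(\vec{a}\vec{x},\vec{b}\vec{y},c)}=\sum_{d,e}p^{(\vec{b},d,e)}_{(\vec{a}\vec{x},c)}$ reassembles the sum into $\Psi_{\text{mix}}$. On the right, the constraint $r_{(\vec{b}\vec{y},dx,e)}=p'^{(\vec{b},d,e)}_{(d\val,e)}$ reassembles it into $\Omega$. This gives $\Psi_{\text{mix}}\geq\Omega$.

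The delicate points are bookkeeping. First, the split of $q$-coefficients must cover each mixed term exactly once. Second, the pure-$\Delta$ terms with $\vec{a}=\vec{0}$ and $c=0$ must go to $\Psi_\Delta$ and not to $\Psi_{\text{mix}}$, since otherwise they would be counted twice.
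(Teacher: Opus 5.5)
Your route is the paper's: (i)--(iii) by coefficient-wise unfolding of the generators, and (iv) with the same $\Psi_{\text{mix}}$ and $\Psi_\Delta$. In (c) you apply Lemma~\ref{lem:log_cf} for each fixed index $(\vec b,d,e)$ with shift $\vec b\abs{\vec y}\geq 1$, then re-sum through $q_{(\vec a\vec x,\vec b\vec y,c)}=\sum_{d,e}p^{(\vec b,d,e)}_{(\vec a\vec x,c)}$ and $r_{(\vec b\vec y,dx,e)}={p'}^{(\vec b,d,e)}_{(d\val,e)}$. You also make explicit that zero summands must be dropped to meet the lemma's side condition, which the paper leaves implicit.

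There is one slip in your own bookkeeping. You let $\Omega$ range over all $\vec b\neq\vec 0$, which includes $(d,e)=(0,0)$, i.e.\ the pure-$\Delta$ coefficients $r_{(\vec b\vec y)}$ of $R$. Those already make up $\Psi_\Delta$ on the right-hand side of (b), so with your definition (b) counts them twice. Moreover, your proof of (c) correctly ranges only over $(d,e)\neq(0,0)$, so it proves $\Psi_{\text{mix}}\geq\Omega$ only for the restricted sum. Take $\Omega$ over $\vec b\neq\vec 0$ and $d\neq 0\lor e\neq 0$, as the paper does. Relatedly, matching the pure-$\Delta$ part of $R$ with $\Psi_\Delta$ needs $r_{(\vec b\vec y)}=q_{(\vec b\vec y)}$. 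The equation $p_{(\vec b\vec y)}=q_{(\vec b\vec y)}$ you quote from the figure cannot be meant literally, since $P$ only has variables from $\Gamma$.

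A second point, which your explicit grouping exposes and the paper's ``by inspection'' also passes over, concerns the case $b=1$, $c=2$. At a leaf, the term $\log(\vec a\abs{\vec x}+\abs{x}+2)$ becomes $\log(\vec a\abs{\vec x}+3)$. The offset $3$ lies outside the range $\{-1,0,1,2\}$ of $d$, so $\sum_{b+c=d}$ does not account for it. Likewise, in (i) $\log(\abs{\val}+2)$ becomes $\log 3$, a case your split into $a+b=1$ and $a+b=2$ omits. In (i) this is harmless: $q'_{(\val,2)}$ is not mentioned by the leaf generator, so it is zero by convention. In (iii)(a), however, $q_{(\vec a\vec x,x,2)}$ does occur in the node-case equation, so the convention does not apply. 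The claimed equality $P^M(\Gamma)=Q^M(\Gamma,\leaf)$ then needs an extra constraint forcing these coefficients to zero. Alternatively, (a) can be weakened to $Q^M(\Gamma,\leaf)\geq P^M(\Gamma)$ under nonnegativity of those coefficients, which still suffices for the soundness argument.
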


\begin{proof}
  Claims (i)-(iii) follow directly from inspection of the given constraints. For (iv) we begin by verifying that $Q(\Gamma, \Delta) = P(\Gamma) + \Psi_{\Delta}(\Delta) + \Psi_{\text{mix}}(\Gamma,\Delta)$ follows from the given template constraints. For this we choose $\Psi_{\text{mix}}(\Gamma,\Delta) = \sum_{\vec{b} \neq 0, \vec{a} \neq 0 \lor c \neq 0} q_{(\vec{a}\vec{x},\vec{b}\vec{y},c)} \log(\vec{a}\abs{\vec{x}} + \vec{b}\abs{\vec{y}} + c)$.
  \begin{align*}
    Q(\Gamma, \Delta) & = \sum_{\vec{a}, c} q_{(\vec{a}\vec{x}, c)} \log(\vec{a}\abs{\vec{x}} + c)
            + \sum_{\vec{b} \neq \vec{0}} q_{(\vec{b}\vec{y})}  \log(\vec{b} \abs{\vec{y}}) + \Psi_{\text{mix}}(\Gamma,\Delta)\\
            & = \sum_{\vec{a}, c} p_{(\vec{a}\vec{x}, c)} \log(\vec{a}\abs{\vec{x}} + c) + \sum_{\vec{b} \neq \vec{0}} q_{(\vec{b}\vec{y})}  \log(\vec{b} \abs{\vec{y}}) + \Psi_{\text{mix}}(\Gamma,\Delta)\\
            & = P(\Gamma) + \Psi_{\Delta}(\Delta) + \Psi_{\text{mix}}(\Gamma,\Delta)
  \end{align*}
  We proceed in the same way to verify $R(\typed{x}{\alpha},\Delta) = P'(e_1) + \Psi_{\Delta}(\Delta) + \Omega(\typed{x}{\alpha},\Delta)$, where we chose $\Omega(\typed{x}{\alpha},\Delta) = \sum_{\vec{b} \neq \vec{0}, d \neq 0 \lor e \neq 0}  r'_{(\vec{b}\vec{y},dx,e)}\log(\vec{b}\abs{\vec{y}} + dx + e)$.
  \begin{align*}
    R(\typed{x}{\alpha},\Delta) & = \sum_{d,e} r_{(dx,e)} \log(dx + e) 
                        + \sum_{\vec{b} \neq \vec{0}} r_{(\vec{b}\vec{y})} \log(\vec{b} \abs{\vec{y}}) 
                         + \Omega(\typed{x}{\alpha},\Delta)\\
                        & = \sum_{d,e} p'_{(d\val,e)} \log(d\val + e) 
                        + \sum_{\vec{b} \neq \vec{0}} q_{(\vec{b}\vec{y})}  \log(\vec{b} \abs{\vec{y}}) + \Omega(\typed{x}{\alpha},\Delta) \\
                        & = P'(e_1) + \Psi_{\Delta}(\Delta) + \Omega(\typed{x}{\alpha},\Delta)
  \end{align*}
  Finally we verify the constraints concerning the lifting of mixed variable potential, i.e. $\bigl( \Psi^{(\vec{b},d,e)}(\Gamma) \geq \Phi^{(\vec{b},d,e)}(\typed{x}{\alpha})\bigr)^{\ddag} \Rightarrow \Psi_{\text{mix}}(\Gamma,\Delta) \geq \Omega(\typed{x}{\alpha},\Delta)$. In Line 4 we use the template constraints together with the premise $(\ddag)$, to instantiate Lemma~\ref{lem:log_cf}. 
  \begin{align*}
    \Psi_{\text{mix}}(\Gamma,\Delta) & = \sum_{\vec{b} \neq 0, \vec{a} \neq 0 \lor c \neq 0} q_{(\vec{a}\vec{x},\vec{b}\vec{y},c)} \log(\vec{a}\abs{\vec{x}} + \vec{b}\abs{\vec{y}} + c)\\
                      & = \sum_{\vec{b} \neq 0, \vec{a} \neq 0 \lor c \neq 0} \sum_{d \ne 0\lor e \ne 0} p^{(\vec{b},d,e)}_{(\vec{a}\vec{x},c)} \log(\vec{a}\abs{\vec{x}} + \vec{b}\abs{\vec{y}} + c)\\
                        & = \sum_{\vec{b} \neq \vec{0}, d \neq 0 \lor e \neq 0}\sum_{\vec{a} \neq \vec{0} \lor c \neq 0}  p^{(\vec{b},d,e)}_{(\vec{a}\vec{x},c)} \log(\vec{a}\abs{\vec{x}} + \vec{b}\abs{\vec{y}} + c)\\
                        & \geq \sum_{\vec{b} \neq \vec{0}, d \neq 0 \lor e \neq 0} p'^{(\vec{b},d,e)}_{(dx,e)}\log(\vec{b}\abs{\vec{y}} + dx + e) & \text{Lemma~\ref{lem:log_cf}}, (\ddag)\\
                        & = \sum_{\vec{b} \neq \vec{0}, d \neq 0 \lor e \neq 0} r'_{(\vec{b}\vec{y},dx,e)}\log(\vec{b}\abs{\vec{y}} + dx + e) \\
                        & = \Omega(\typed{x}{\alpha},\Delta)
  \end{align*}
\end{proof}  

\subsection{Constraint Generators for the Sum-Of-Logs Template Language}

In Figure 10 we present the complete constraint generators for $\tlangLog$. To support a more compact presentation when including the constraints for $\tlangLog$ in the generators for $\tlangSol$, we use the convention that a specific coefficient is never constrained more than once, i.e. the definition of $\tlangSol$ overrides $\tlangLog$, wherever there are conflicting constraints. This is relevant for example when constraining $r_{(t,u)}$ for $\rulematch$ in the $\node$ case, which is already constrained in the premises. 

\begin{figure}[t]
  \begin{mathpar}
    \inferrule*{
      q(2) = \sum_{a + b = 2} q'_{(a\val,b)} + q'_{\val}\\\\
      \ctr*{\tlangLog}{\labConst}(\flstc{leaf},Q,Q')
    }{
      \ctr*{\tlangSol}{\labConst}(\flstc{leaf},Q,Q')
    }
    \and
    \inferrule*{
      q_{(t)} = d \cdot q'_{\val}\\
      q_{(u)} = e \cdot q'_{\val}\\\\
      q_{(t,u)} = f \cdot q'_{\val} + q'_{(\val)}\\
      \ctr*{\tlangLog}{\labConst}(\flstc{node}\ t\ a\ u,Q,Q')
    }{
      \ctr*{\mathcal{L}_{\mathrm{sol},(d,e,f)}}{\labConst}(\flstc{node}\ t\ a\ u,Q,Q')
    }
    \and
    \inferrule*{
      p_{(\vec{a}\vec{x}, 2)} = \sum_{a + b = 2} q_{(\vec{a}\vec{x},ax,b)} + q_x[\vec{a} = \vec{0}]\\
      \ctr*{\tlangLog}{\labMatch}
      (Q,Q',P,R,(\Gamma,\typed{x}{\Tree}),
      (\Gamma,\typed{t}{\Tree},\typed{u}{\Tree}))\\
      p_{x_i} = r_{x_i} = q_{x_i}\\
      r_{(t)} = d \cdot q_x\\
      r_{(u)} = e \cdot q_x\\
      r_{(t,u)} = f \cdot q_{x} + q_{(t,u)}
    }{
      \ctr*{\mathcal{L}_{\mathrm{sol},(d,e,f)}}{\labMatch}
      (Q,Q',P,R,(\Gamma,\typed{x}{\Tree}),
      (\Gamma,\typed{t}{\Tree},\typed{u}{\Tree}))
    }
    \and
    \inferrule*{
      p_{x_i} = q_{x_i}\\
      r_{y_i} = q_{y_i}\\
      r_{x} = p'_{\val}\\\\
      \ctr*{\tlangLog}{\labLet}
      (Q,Q',P,P',R,\vec{P},\vec{P'},\Gamma,
      (\Delta,\typed{x}{\Tree}))\\
    }{
      \ctr*{\tlangSol}{\labLet}
      (Q,Q',P,P',R,\vec{P},\vec{P'},\Gamma,
      (\Delta,\typed{x}{\Tree}))
    }
  \end{mathpar}
  \caption{Constraint generating functions for $\tlangSol$, where $\vec{x} = \dom(\Gamma)$ and $x_i \in \dom(\Gamma)$. $[p]$ denotes the Iverson bracket for predicate $p$, i.e. $[p] = 1$ if $p$ holds and $[p] = 0$.}
  \label{fig:sol-constraints}
\end{figure}

\begin{lemma}[Soundness of Constraint Generators for $\tlangSol$] Let $C$ be set of generated constraints and let $M$ be a satisfying model for $C$. 
Then the following properties hold:
\begin{enumerate}[label=\roman*., leftmargin=*]
  \setlength{\itemsep}{1ex}
\item If $\ctr*{\tlangSol}{\labConst}(\leaf, Q,Q')$ then $Q^M() = Q'^M(\leaf)$
\item If $\ctr*{\tlangSol}{\labConst}(\tree{t}{a}{u},Q,Q')$ then $Q^M(t,u) = Q'^M(\tree{t}{a}{u})$
\item If $C = \ctr*{\tlangSol}{\labMatch}(Q,Q',P,R,(\Gamma,\typed{x}{\Tree}),(\Gamma,\typed{t}{\Tree},\typed{u}{\Tree}))$ then
  \begin{enumerate}
  \item $P^M(\Gamma) = Q^M(\Gamma, \leaf)$,
  \item $R^M(\Gamma, \typed{t}{\Tree}, \typed{u}{\Tree})) = Q^M(\Gamma, \tree{t}{a}{u})$.  
  \end{enumerate}
\item If $C = \ctr*{\tlangSol}{\labLet}(Q,Q',P,P',R,\vec{P},\vec{P'},\Gamma,(\Delta,\typed{x}{\Tree}))$ then
  \begin{enumerate}
  \item $Q^M(\Gamma, \Delta) = P^M(\Gamma) + \Psi_{\Delta}(\Delta) + \Psi_{\text{mix}}(\Gamma,\Delta)$
  \item $R^M(\typed{x}{\alpha},\Delta) = P'^M(e_1) + \Psi_{\Delta}(\Delta) + \Omega(\typed{x}{\alpha},\Delta)$
  \item $(P^{i})^M(\Gamma) \geq (P'^{i})^M(\typed{x}{\alpha}) \Rightarrow \Psi_{\text{mix}}(\Gamma,\Delta) \geq \Omega(\typed{x}{\alpha},\Delta)$
  \end{enumerate}
\end{enumerate}  
\end{lemma}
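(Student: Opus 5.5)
The plan is to reduce each item to the corresponding item of Lemma~\ref{lem:soundness-tlanglog} for $\tlangLog$, and then treat the additional $\phiSol$ terms separately. Since a template $Q(\vec{x})$ over $\tlangSol$ splits as $Q_{\log}(\vec{x}) + \sum_{x\in\vec{x}} q_x\,\phiSol(x)$, every claimed equality or inequality splits in the same way. Each generator for $\tlangSol$ includes the $\tlangLog$ generator as a premise. Using the stated override convention, the $\tlangLog$ constraints still govern every logarithmic coefficient not mentioned explicitly. For the overridden coefficients (such as $r_{(t,u)}$, $r_{(t)}$, $r_{(u)}$ and $p_{(\vec{a}\vec{x},2)}$), the defining equation has the form ``$\tlangLog$ value $+$ a contribution from $q_x$ or $q'_\val$''. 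So after subtracting the $\tlangLog$ part, we only have to check that these contributions match the unfolding of $\phiSol$.

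For (i), unfold $\phiSol(\leaf)=1=\log 2$. The extra summand $q'_\val$ in $q_{(2)}$ is then exactly the potential $q'_\val\phiSol(\leaf)$, with $\log 2$ read as the constant term $\log(0+2)$. For (ii), unfold $\phiSol(\tree{t}{a}{u}) = \phiSol(t) + d\log\abs{t} + e\log\abs{u} + f\log(\abs{t}+\abs{u}) + \phiSol(u)$. Multiplied by $q'_\val$, this is matched by the constraints $q_t=q_u=q'_\val$, $q_{(t)}=d\,q'_\val$, $q_{(u)}=e\,q'_\val$ and the extra $f\,q'_\val$ in $q_{(t,u)}$; here I assume the constraint $q_t = q_u = q'_\val$ is implicit in the figure, and I will make it explicit. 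The remaining logarithmic terms of $\val$ become $\log(\abs{t}+\abs{u}+c)$ via $\abs{\tree{t}{a}{u}} = \abs{t}+\abs{u}$, and these are handled by Lemma~\ref{lem:soundness-tlanglog}(ii). Item (iii) is the same calculation run backwards. In case (a), substituting $x=\leaf$ turns $q_x\phiSol(x)$ into the constant $q_x\log 2$, hence the term $q_x[\vec{a}=\vec{0}]$ in $p_{(\vec 0,2)}$. In case (b), $q_x\phiSol(\tree{t}{a}{u})$ is redistributed to $r_t,r_u,r_{(t)},r_{(u)},r_{(t,u)}$. The context terms $\phiSol(x_i)$ are preserved by $p_{x_i}=r_{x_i}=q_{x_i}$.

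For (iv), I take $\Psi_{\text{mix}}$ and $\Omega$ to be exactly the logarithmic mixed terms from the $\tlangLog$ proof; the $\phiSol$ terms are never mixed. In (a), the coefficients $q_{x_i}\phiSol(x_i)$ for $x_i\in\Gamma$ go into $P^M(\Gamma)$ (via $p_{x_i}=q_{x_i}$), and those for $y_i\in\Delta$ go into $\Psi_\Delta$. In (b), $r_{y_i}=q_{y_i}$ reproduces $\Psi_\Delta$, and $r_x=p'_\val$ makes $R$'s $\phiSol(x)$ term equal to the $\phiSol$ part of $P'^M(e_1)$. Item (c) is then literally the statement proved for $\tlangLog$, using Lemma~\ref{lem:log_cf}.

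I expect the main obstacle to be bookkeeping rather than mathematics. The difficulty is making the override convention precise: I must check that every coefficient constrained both by the $\tlangSol$ premise and by the embedded $\tlangLog$ generator receives the combined value. This applies in particular to $r_{(t,u)}$ and the constant coefficients at index $2$, where $\log 2$ must be identified with the constant $1$ contributed by $\phiSol(\leaf)$. I would handle this by writing each overridden coefficient as its $\tlangLog$ value plus a correction term, and then verifying the equations term by term.
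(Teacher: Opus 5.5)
Your proposal is correct and takes the same route as the paper. The paper's proof invokes Lemma~\ref{lem:soundness-tlanglog} for the logarithmic part and checks the additional $\phiSol$ terms by inspecting the constraints; your coefficient-by-coefficient bookkeeping, with each overridden coefficient written as its $\tlangLog$ value plus a $\phiSol$ correction, is a more explicit version of that inspection. You are right that $q_t = q_u = q'_{\val}$ must be added to the node-construction generator, and the matching generator has the same omission: there you implicitly use $r_t = r_u = q_x$, and the $q_{(t,u)}$ in the constraint for $r_{(t,u)}$ should read $q_{(x)}$, as your reading already assumes.
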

\begin{proof}
  We take the results for templates with logarithmic terms established in Lemma~\ref{lem:soundness-tlanglog} and verify that the properties also hold for the additional terms of $\tlangLog$, by inspecting the given constraints.  
\end{proof}
\subsection{Constraint Generators for Piecewise Template Language}

The constraint generators for $\tlangPw$ are shown in Figure~\ref{fig:piece-constraints} in Section~\ref{sec:case-studies}. The following lemma established the soundness of these constraints.
\begin{lemma}[Soundness of Constraint Generators for $\tlangPw$] Let $C$ be set of generated constraints and let $M$ be a satisfying model for $C$. 
Then the following properties hold:
\begin{enumerate}[label=\roman*., leftmargin=*]
  \setlength{\itemsep}{1ex}
\item If $\ctr*{\tlangPw}{\labConst}(\leaf, Q,Q')$ then $Q^M() = Q'^M(\leaf)$
\item If $\ctr*{\tlangPw}{\labConst}(\tree{t}{a}{u},Q,Q')$ then $Q^M(t,u) = Q'^M(\tree{t}{a}{u})$
\item If $C = \ctr*{\tlangPw}{\labMatch}(Q,Q',P,R,(\Gamma,\typed{x}{\Tree}),(\Gamma,\typed{t}{\Tree},\typed{u}{\Tree}))$ then
  \begin{enumerate}
  \item $P^M(\Gamma) = Q^M(\Gamma, \leaf)$,
  \item $R^M(\Gamma, \typed{t}{\Tree}, \typed{u}{\Tree})) = Q^M(\Gamma, \tree{t}{a}{u})$.  
  \end{enumerate}
\item If $C = \ctr*{\tlangPw}{\labLet}(Q,Q',P,P',R,\vec{P},\vec{P'},\Gamma,(\Delta,\typed{x}{\Tree}))$ then
  \begin{enumerate}
  \item $Q^M(\Gamma, \Delta) = P^M(\Gamma) + \Psi_{\Delta}(\Delta) + \Psi_{\text{mix}}(\Gamma,\Delta)$
  \item $R^M(\typed{x}{\alpha},\Delta) = P'^M(e_1) + \Psi_{\Delta}(\Delta) + \Omega(\typed{x}{\alpha},\Delta)$
  \item $(P^{i})^M(\Gamma) \geq (P'^{i})^M(\typed{x}{\alpha}) \Rightarrow \Psi_{\text{mix}}(\Gamma,\Delta) \geq \Omega(\typed{x}{\alpha},\Delta)$
  \end{enumerate}
\end{enumerate}
\end{lemma}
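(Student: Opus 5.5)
The plan mirrors the soundness proofs for $\tlangLog$ and $\tlangSol$. The template language $\tlangPw$ splits into three disjoint families of terms: logarithmic terms $\tlangLog$, the potential terms $\phiPw(x)$, and Iverson brackets. Every generator in Figure~\ref{fig:piece-constraints} includes the corresponding $\tlangLog$ generator as a premise. So by Lemma~\ref{lem:soundness-tlanglog} the logarithmic parts of each claimed (in)equality already hold. It then suffices to verify, for each claim, the equality (or inequality) restricted to the $\phiPw$ terms and the Iverson-bracket terms, and add the pieces.

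\textbf{Claims (i) and (ii).} Claim (i) is immediate, since $\phiPw(\leaf)=0$ and no bracket involves the unconstructed value. For (ii), unfold $\phiPw(\tree{t}{a}{u}) = \phiPw(t) + [\abs{t}<\abs{u}] + \phiPw(u)$. The constraint $q_t=q_u=q_{[t<u]}=q'_{\val}$ gives exactly $q'_{\val}\phiPw(\val)$ on the right. For the remaining brackets, use $\abs{\tree{t}{a}{u}} = \abs{t}+\abs{u}$: a bracket containing $\val$ on either side becomes the bracket with $t,u$ on that side. This matches the two bracket equalities coefficientwise.

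\textbf{Claim (iii).} For the node case, $R$ is obtained from $Q$ by the same substitution $\abs{x}=\abs{t}+\abs{u}$, together with $r_{x_i}=q_{x_i}$. Here I must still check that the $\phiPw(x)$ term of $Q$ is covered by the $\phiPw(t)$, $\phiPw(u)$ and $[t<u]$ terms of $R$. If the displayed constraints do not state this explicitly, I read it as the intended analogue of the $\tlangSol$ match rule and verify it by unfolding $\phiPw$. For the leaf case, substitute $\abs{x}=1$. A bracket $[\bar{x} + c < \bar{y}]$ with $x$ on the left becomes $[\bar{x} + (c+1) < \bar{y}]$. One with $x$ on the right becomes $[\bar{x} + c < \bar{y} + 1]$, i.e.\ $[\bar{x} + (c-1) < \bar{y}]$, using integrality of sizes. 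Collecting terms gives the three summands in the constraint for $p$. A minor point is that shifted constants may leave $\{-1,0,1,2\}$; I would argue that such coefficients are forced to zero by convention, or else extend the range.

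\textbf{Claim (iv), the main obstacle.} Parts (a) and (b) again follow by choosing $\Psi_{\text{mix}}$ to collect all brackets mentioning both $\Gamma$ and $\Delta$ variables, and $\Omega$ to collect those mentioning $x$ and $\Delta$. For (c), the cost-free typings $\vec{P},\vec{P'}$ are indexed by mixed brackets. Their premise reads $[\vec{x} + c < \abs{X}] \geq [\val + c < \abs{X}]$ for all values of the template parameter $X$, i.e.\ a universally quantified monotonicity of the size through $e_1$. The key step is a bracket lifting lemma, analogous to Lemma~\ref{lem:log_cf}: instantiate $X$ by $\vec{z} - \vec{y}$ (formally, by a tree of that size, handling the case where this is non-positive separately, where both brackets vanish) to obtain $[\vec{x} + \vec{y} + c < \vec{z}] \geq [\val + \vec{y} + c < \vec{z}]$.

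Summing over the split coefficients $p^{\cdot}_{\cdot}$ as in the $\tlangLog$ proof, using the constraint decomposing $q$ into these coefficients and nonnegativity, yields $\Psi_{\text{mix}} \geq \Omega$. The delicate part is justifying this instantiation of $X$: I expect to need that template parameters range over all tree sizes and that the cost-free judgement is sound uniformly in $X$ (Theorem~\ref{t:soundness}).
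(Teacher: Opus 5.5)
Your overall route is the paper's own. The paper's proof is the single sentence ``by inspection of the given constraints'', resting on Lemma~\ref{lem:soundness-tlanglog} for the logarithmic part. Your lifting step for (iv)(c) is exactly the mechanism the paper uses informally in its skew-heap example (``instantiating the template variable $X$ with $t$''): you instantiate $X$ with a tree of size $\abs{\vec z}-\abs{\vec y}$, and the degenerate case is handled correctly because $c\ge -1$ and all sizes are at least $1$. You are also right that the displayed match generator omits the unfolding equations for $\phiPw(x)$ in the node case. Read literally, $q_x$ is never mentioned there and is therefore zero.

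The step that fails is your disposal of out-of-range constants in (iii)(a). The convention zeroes only coefficients that $C$ never mentions.

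\begin{itemize}
\item It does cover the nonexistent indices on the right of the $p$-equations, such as $q_{[\bar x+3<\bar y,x]}$.
\item It does not cover the existing boundary coefficients $q_{[\bar x-1<\bar y,x]}$ and $q_{[\bar x,x+2<\bar y]}$. These occur in the node-case equations $r_{[\bar x+c<\bar y,t,u]}=q_{[\bar x+c<\bar y,x]}$ and $r_{[\bar x,t,u+c<\bar y]}=q_{[\bar x,x+c<\bar y]}$, so a model may make them nonzero.
\item Their leaf images $[\bar x-2<\bar y]$ and $[\bar x+3<\bar y]$ are not terms of $\tlangPw(\Gamma)$, and no $p$-equation picks them up.
\end{itemize}

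Concretely, take $\Gamma=\{y\}$, $M(q_{[y-1<x]})=M(r_{[y-1<t,u]})=1$, and everything else $0$. This is a model of $C$, yet $Q^M(y,\leaf)-P^M(y)=[\abs{y}<2]$, which equals $1$ at $\abs{y}=1$. So (iii)(a) is false as an equality for the displayed generator. Your alternative of enlarging the range to another finite set only moves the boundary.

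To close the argument you must do one of two things:
\begin{itemize}
\item add constraints forcing these boundary coefficients to zero; or
\item prove only $Q^M(\Gamma,\leaf)\ge P^M(\Gamma)$, with the boundary coefficients constrained to be nonnegative. This inequality is all the match case of Theorem~\ref{t:soundness} actually uses.
\end{itemize}

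A similar caveat applies to your justification of (i). Brackets of $Q'$ do involve $\val$: for example $[\abs{\val}<\abs{X}]$ becomes $[1<\abs{X}]$ at a leaf, and no constraint links it to a coefficient of $Q()$. Claim (i) holds only because the leaf generator leaves every bracket coefficient and $\phiPw$ coefficient unmentioned, so the zero convention applies to all of them.
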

\begin{proof}
  Similar to above we obtain the Lemma by inspection of the given constraints. 
\end{proof}
\subsection{Constraint Generators for Rank Template Language}
The constraint generators for $\tlangRank$ are provided in Figure~\ref{fig:rank-constraints} in Section~\ref{sec:case-studies} and their soundness is established in the following Lemma.
\begin{lemma}[Soundness of Constraint Generators for $\tlangRank$] Let $C$ be set of generated constraints and let $M$ be a satisfying model for $C$. 
Then the following properties hold:
\begin{enumerate}[label=\roman*., leftmargin=*]
  \setlength{\itemsep}{1ex}
\item If $\ctr*{\tlangRank}{\labConst}(\leaf, Q,Q')$ then $Q^M() = Q'^M(\leaf)$
\item If $\ctr*{\tlangRank}{\labConst}(\tree{t}{a}{u},Q,Q')$ then $Q^M(t,u) = Q'^M(\tree{t}{a}{u})$
\item If $C = \ctr*{\tlangRank}{\labMatch}(Q,Q',P,R,(\Gamma,\typed{x}{\Tree}),(\Gamma,\typed{t}{\Tree},\typed{u}{\Tree})$ then
  \begin{enumerate}
  \item $P^M(\Gamma) = Q^M(\Gamma, \leaf)$,
  \item $R^M(\Gamma, \typed{t}{\Tree}, \typed{u}{\Tree})) = Q^M(\Gamma, \tree{t}{a}{u})$.  
  \end{enumerate}
\item If $C = \ctr*{\tlangRank}{\labLet}(Q,Q',P,P',R,\vec{P},\vec{P'},\Gamma,(\Delta,\typed{x}{\Tree}))$ then
  \begin{enumerate}
  \item $Q^M(\Gamma, \Delta) = P^M(\Gamma) + \Psi_{\Delta}(\Delta) + \Psi_{\text{mix}}(\Gamma,\Delta)$
  \item $R^M(\typed{x}{\alpha},\Delta) = P'^M(e_1) + \Psi_{\Delta}(\Delta) + \Omega(\typed{x}{\alpha},\Delta)$
  \item $(P^{i})^M(\Gamma) \geq (P'^{i})^M(\typed{x}{\alpha}) \Rightarrow \Psi_{\text{mix}}(\Gamma,\Delta) \geq \Omega(\typed{x}{\alpha},\Delta)$
  \end{enumerate}
\end{enumerate}
\end{lemma}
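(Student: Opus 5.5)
Following the pattern of the previous soundness lemmas, I would start from Lemma~\ref{lem:soundness-tlanglog}, which already handles all logarithmic coefficients. The remaining work is to check the additional terms $\dag x$ of $\tlangRank$ in each case, using the conventions that unconstrained coefficients are zero and that later constraints override the $\tlangLog$ ones. Throughout, the relevant facts are $\dag\leaf = 0$ and $\dag(\tree{t}{a}{u}) = \dag u + 1$. I also use $\log(0\cdot\abs{x}+2)=\log 2 = 1$, which is why the constant coefficient $q_{(2)}$ can absorb one unit of rank.

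For (i), a nullary context has no rank terms, and the $\tlangLog$ constraints give the logarithmic part. The rank term $q'_{\dag\val}\dag\leaf$ vanishes, so $Q^M() = Q'^M(\leaf)$. For (ii), the result has rank
\[
  q'_{\dag\val}\,\dag(\tree{t}{a}{u}) = q'_{\dag\val}\,\dag u + q'_{\dag\val}.
\]
The constraints $q_{\dag t}=0$ and $q_{\dag u}=q'_{\dag\val}$ match the first summand. The constraint $q_{(2)} = q'_{\dag\val}+q'_{(2)}$ pays for the constant $q'_{\dag\val}$ via $\log 2 = 1$. The remaining logarithmic terms match by the $\tlangLog$ part.

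For (iii)(a), substituting $x=\leaf$ kills $q_{\dag x}\dag x$. The rank terms of $\Gamma$ are copied by $p_{\dag x_i}=q_{\dag x_i}$, and the log part follows from Lemma~\ref{lem:soundness-tlanglog}. For (iii)(b), unfolding gives $q_{\dag x}\dag(\tree{t}{a}{u}) = q_{\dag x}\dag u + q_{\dag x}$. These two summands correspond to $r_{\dag u}=q_{\dag x}$ and to the extra $q_{\dag x}$ added to $r_{(2)}$. Here I must be careful that the override convention replaces the $\tlangLog$ constraint on $r_{(2)}$ rather than conflicting with it. Also, $r_{\dag t}$ is zero by convention, and the $\Gamma$ rank terms are copied.

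For (iv), rank terms are unary, so none of them is mixed. I therefore put all rank terms of $\Gamma$ into $P^M(\Gamma)$ via $p_{\dag x_i}=q_{\dag x_i}$, and those of $\Delta$ into $\Psi_\Delta$ via $r_{\dag y_i}=q_{\dag y_i}$. The new variable's term $r_{\dag x}\dag x$ is matched with $p'_{\dag\val}\dag\val$ from $P'^M(e_1)$. Then (a) and (b) follow by adding these sums to the decompositions of Lemma~\ref{lem:soundness-tlanglog}, with $\Psi_{\text{mix}}$ and $\Omega$ unchanged. (c) is then exactly the logarithmic statement proved there, since $\Psi_{\text{mix}}$ and $\Omega$ contain only logarithmic mixed terms.

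The main obstacle is the bookkeeping around the constant $\log 2$ coefficient in (ii) and (iii)(b). It is shared with the $\tlangLog$ constraints, so I would verify that the combined equation for $q_{(2)}$ and $r_{(2)}$ still yields equality of the whole potential once the $\tlangLog$ contribution to the constant is included. The rest is direct inspection.
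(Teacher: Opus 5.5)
Your proposal is correct and follows essentially the paper's approach. The paper only states that the lemma follows from the definition of the constraint generators; for the analogous $\tlangSol$ lemma it reuses Lemma~\ref{lem:soundness-tlanglog} and inspects just the additional terms. Your case analysis makes that inspection explicit, using $\dag\leaf = 0$, $\dag(\tree{t}{a}{u}) = \dag u + 1$, $\log 2 = 1$, and the override convention for $q_{(2)}$ and $r_{(2)}$.

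One point you leave tacit is in (iv)(c). To appeal to Lemma~\ref{lem:soundness-tlanglog}, the cost-free annotations $P^i, P'^i$ must carry no $\dag$ terms, so that their premise coincides with the logarithmic one used there. Your justification only mentions that $\Psi_{\text{mix}}$ and $\Omega$ are logarithmic, which does not cover this. The let generator never constrains those coefficients, so it follows from the zero-by-default convention, and the paper likewise leaves this implicit.
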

\begin{proof}
  Again, the lemma follows straightforwardly from the definition of the constraint generators.
\end{proof}

\section{Supplementary Type Derivation Examples}
\label{sec:additional-case-studies}

In this section we illustrate how the previously introduced template languages and constraint systems are employed in concrete analyses. We motivate the design choices for the template languages and expert knowledge by highlighting the key steps in the type derivations for each data structure. This makes explicit how the structure of the derivations determines the required forms of potential and expert knowledge.

\subsection{Skew Heaps}

\paragraph{Weakening.} Early in the type derivation of \texttt{meld x y} with $\tlangPw$, we encounter a representative application of the weakening rule~\(\rulew\). Concretely, for \(x = \tree{t}{a}{u}\), the application of \(\rulew\) corresponds to the following inequality: 
\begin{align*}
  &\underbrace{\log(\abs{t} + \abs{u}) + \log\abs{y} + \log(\abs{t} + \abs{u} + \abs{y} - 1)}_{\mathcal{A}_{\texttt{meld}}(x,y)} + [\abs{t} < \abs{u}]\\
  \geq &\quad \underbrace{\log\abs{u} + \log\abs{y} + \log(\abs{u} + \abs{y} - 1) + 1}_{\mathcal{A}_{\texttt{meld}}(u,y)} +  [\abs{t} > \abs{u} + \abs{y} - 1]
\end{align*}
It relates the amortised costs of \texttt{meld} with the amortised costs of its recursive call. The additional term $[\abs{t} > \abs{u} + \abs{y} - 1]$ is crucial to provide the potential for the result tree. Essentially we are exploiting two inequalities here. We have $\log(\abs{t} + \abs{u}) + [\abs{t} < \abs{u}] \geq \log\abs{u} + 1$ and $\log(\abs{t} + \abs{u} + \abs{y} - 1) \geq \log(\abs{u} + \abs{y} - 1) + [\abs{t} > \abs{u} + \abs{y} - 1]$. Both can be derived from Lemma~\ref{lem:logLemma2Iver}.

\paragraph{Lifting of Iverson Terms.} The next interesting step lifts the potential $[\abs{t} > \abs{u} + \abs{y} - 1]$ through the let-bound recursive call $\texttt{meld}\ u\ y$ to the binding variable $z$, in order to pay $[\abs{t} > \abs{z}]$ to construct the result node. We remember that $\rulelet$ has the obligation $\Psi_{\text{mix}}(\Gamma,\Delta) \geq \Omega(\typed{x}{\alpha},\Delta)$, concerning terms that mix variables from the binding ($\Gamma$), the body ($\Delta$) and the binding variable $x$. In this case we need to ensure that $[\abs{t} > \abs{u} + \abs{y} - 1] \geq [\abs{t} > \abs{z}]$. 
This follows from the inequality $[\abs{u} + \abs{y} - 1 < \abs{X}] \geq [\abs{z} < \abs{X}]$, which is established by the following cost-free derivation:
  \begin{equation*}
    \small
  \inferrule*[right=\ruleapp]{
    \atypdcl{\typed{x}{\Tree},\typed{y}{\Tree}}{[\abs{x} + \abs{y} - 1 < \abs{X}]}{\Tree}{[\abs{\val} < \abs{X}]} \in \mathcal{F}^{\text{cf}}(\text{\lstinline{meld}})
  }{
    \tjudgenacf{\typed{u}{\Tree},\typed{y}{\Tree}\mid[\abs{u} + \abs{y} - 1 < \abs{X}]}{\text{\lstinline{meld}}\ y\ u}{\Tree}{[\abs{\val} < \abs{X}]}
  }
\end{equation*}
By instantiating the template variable $X$ with $t$, we obtain $[\abs{u} + \abs{y} - 1 < \abs{t}] \geq [\abs{z} < \abs{t}]$. The derivation of the cost-free signature itself involves the same application to $\ruleapp$ but here we obtain $[\abs{t} + \abs{u} + \abs{y} - 1 < \abs{X}] \geq [\abs{z} + \abs{t} < \abs{X}]$ by adding $\abs{t}$ to both sides.

\paragraph{Negative Potential.} Setting $\params = (-\sfrac{1}{2},0,\sfrac{1}{2})$ for $\tlangSol$, as we suggest above, crucially creates a negative term in the potential function. This negative term allows to ``borrow'' potential. This means we can delay the payment of potential, temporarily taking on negative potential and since we restrict the potential returned by the function to be non-negative, we guarantee that potential is payed back eventually.

Concretely we can witness this pattern in the application of the rule $\rulelet$ shown in Figure~\ref{fig:let-borrowing}, which covers the let-bound recursive call to $\texttt{meld}\ u\ y$:
\begin{figure}[t]
  \centering
\begin{equation*}
  \scriptsize
    \inferrule*[lab=\rulelet]{
      \inferrule*[vskip=1ex,right=\ruletick]{
        \inferrule*[right=\ruleshift,rightskip=3.1em]{
          \inferrule*[right=\ruleapp]{
            \dots \quad \atypdcl{\typed{x}{\Tree}, \typed{y}{\Tree}}{-\sfrac{1}{2}\log(\abs{x} + \abs{y} - 1)}{\Tree}{-\sfrac{1}{2}\log\abs{\val}} \in \FScf(\texttt{meld})\\
          }{
            \tjudgeml{\typed{u}{\Tree},\typed{y}{\Tree}\mid \sfrac{1}{2}\log\abs{y}  + \sfrac{1}{2}\log\abs{u} + \phiSol(y) + \phiSol(u)}{\text{\lstinline{meld}}\ y\ u}{\Tree}{\phiSol(\val) - \sfrac{1}{2}\log\abs{\val}}
          }
        }{ }
      }{
        \tjudgeml{\typed{u}{\Tree},\typed{y}{\Tree}\mid \sfrac{1}{2}\log\abs{y}  + \sfrac{1}{2}\log\abs{u} + \phiSol(y) + \phiSol(u) + 1}{\text{\lstinline{\~ meld}}\ y\ u}{\Tree}{\phiSol(\val) - \sfrac{1}{2}\log\abs{\val}}
      }
        \\
        \tjudge{\typed{z}{\Tree},\typed{t}{\Tree}}{\phiSol(z) - \sfrac{1}{2}\log\abs{z} + \dots}{e}{\Tree}{\phiSol(\val)}
      }{
        \tjudgeml{\typed{t}{\Tree}, \typed{u}{\Tree},\typed{y}{\Tree}\mid \sfrac{1}{2}\log\abs{u} + \sfrac{1}{2}\log\abs{y} + 1 + \phiSol(u) + \phiSol(y) + \dots}{\vlet \ z = \text{\lstinline{meld}}\ y\ u\ \vin\ \dots }{\Tree}{\phiSol(\val)}
      }
    \end{equation*}
    \caption{Partial type derivation of \texttt{meld}, showcasing potential ``borrowing''. }
    \label{fig:let-borrowing}
    \end{figure}
The recursive call requires a potential of $\sfrac{1}{2}\log(\abs{x} + \abs{y} - 1) + \sfrac{1}{2}\log\abs{x} + \sfrac{1}{2}\log\abs{y} + \phiSol(x) + \phiSol(y)$, according to the costed signature of \texttt{meld}. Our analysis additionally derives the shown cost-free signature for \texttt{meld}, which is valid since $\abs{\mathtt{meld}\ x\ y} = \abs{x} + \abs{y} - 1$. The rule $\ruleapp$ then combines the two signatures. We are effectively ``borrowing'' the potential $\log\abs{\val}$ from the result of the call to pay the cost $\sfrac{1}{2}\log(\abs{x} + \abs{y} - 1)$.

\subsection{Weight-Biased Leftist Heaps}
\paragraph{Path Sensitivity.} The step that really distinguishes the type derivation for weight-biased leftist heaps from skew heaps is the typing of the balancing function. The following derivation illustrates the branch in which the sub-trees are not rotated. Whereas in the rotation-case we can use the potential released by the rotation to construct the final tree, in this case we instead rely on the predicate $\#{t} > \#{u}$, added to the type context, according to the branch condition. 
\begin{equation*}
  \small
    \inferrule*[right=\ruleite]{
    \dots
    \\
    \inferrule*[right=\rulew]{
      \inferrule*[right=\ruleconst]{ }{
        \tjudgeml{\typed{t}{\WeightTree}, \typed{u}{\WeightTree};\#{t} > \#{u} \mid \sfrac{1}{2}\log(\abs{t} + \abs{u}) - \sfrac{1}{2}\log\abs{t} \\
          + \phiSol(t) + \phiSol(u) }{\tree{u}{a}{t}}{\WeightTree}{\phiSol(\val)}
      }
    }{
      \tjudgeml{\typed{t}{\WeightTree}, \typed{u}{\WeightTree};\#{t} > \#{u} \mid \sfrac{1}{2}\log(\abs{t} + \abs{u}) - \sfrac{1}{2}\log\abs{u} \\
        + \phiSol(t) + \phiSol(u) }{\tree{t}{a}{u}}{\WeightTree}{\phiSol(\val)}
    }
  }{
    \tjudgeml{\typed{t}{\WeightTree}, \typed{u}{\WeightTree} \mid \sfrac{1}{2}\log(\abs{t} + \abs{u}) - \sfrac{1}{2}\log\abs{u} \\
      + \phiSol(t) + \phiSol(u) }{\cif\ t \leq u\ \cthen\ \tree{u}{a}{t}\ \celse\ \tree{t}{a}{u}}{\WeightTree}{\phiSol(\val)}
  }
\end{equation*}
The predicate implies $\abs{t} \geq \abs{u}$, which in turn allows the rule $\rulew$ to weaken $-\log\abs{u}$ to $-\log\abs{t}$ as required by the definition of $\phiSol$.

\paragraph{Worst-Case Costs} To confirm the known worst case costs for this data structure, we simply set the potential function to zero, i.e. establishing $\tjudge{\Gamma; \mathcal{G}}{\Psi}{\texttt{meld}\ x\ y}{\alpha}{0}$. To obtain logarithmic worst-case bounds, we must, unsurprisingly, take the corresponding data structure invariants into account. In particular the invariant $\#{t} \geq \#{u}$, is required to pay for the recursive call in \texttt{meld}, i.e., we obtain $\log(\abs{t} + \abs{u}) \geq \log\abs{u} + 1$ by Corollary~\ref{lem:guarded-favorite-log}.

\section{Soundness Proof}
\label{sec:soundness-proof}
\soundnessTheorem
\begin{proof}
The proof is conducted via main induction on $\Pi: \eval{\sigma}{c}{e}{v}$ and side induction on $\Xi: \tjudge{\Gamma; \mathcal{G}}{\Psi}{e}{\alpha}{\Phi}$. In most cases, the guard predicates $\bigwedge \mathcal{G}$ are not required; hence we omit the implication from the intermediate statements, as it can always be reintroduced when needed. We begin by covering all cases in which $\Xi$ ends with a structural rule and assume in the further cases that it ends with syntax directed rule. 
  \begin{description}
    \setlength\itemsep{0.5cm}
  \item[Case.] $\Pi$ derives $\eval{\sigma}{c}{e}{v}$ and $\Xi$ ends with a application of the rule $\ruleshift$:
    \begin{equation*}
      \inferrule*[right=\rlabel{\ruleshift}]{%
  \tjudge{\Gamma}{\potAltSymb}{e}{\alpha}{\potSymb}\\
  k \geqslant 0
}{
  \tjudge{\Gamma}{\potAltSymb + k}{e}{\alpha}{\potSymb + k}
}

    \end{equation*}
    From the MIH we obtain $\Psi(\Gamma)\sigma \geq c +  \Phi(v)$. The theorem then is trivially shown as follows.
    \begin{align*}
      (\Psi + k)(\Gamma\sigma) &= \Psi(\Gamma\sigma) + k\\
                  & \geq c + \Phi(v) + k & \text{MIH}\\
                  & \geq c + (\Phi + k)(v)
    \end{align*}
  \item[Case.] $\Pi$ derives $\eval{\sigma}{c}{e}{v}$ and $\Xi$ ends with a application of the rule $\rulew$:
    \begin{equation*}
      \inferrule*[right=\rlabel{\rulew}]{
  \tjudge{\Gamma\changed{; \mathcal{G}}}{\potAltSymb'}{e}{\alpha}{\potSymb'}
  \\
  \changed{\bigwedge \mathcal{G} \Rightarrow} \potArg{\potAltSymb'}{\Gamma} \leqslant \potAlt{\Gamma}
  \\
  \potArg{\potSymb'}{\typed{x}{\alpha}} \geqslant \pot{\typed{x}{\alpha}}
}{
  \tjudge{\Gamma\changed{; \mathcal{G}}}{\potAltSymb}{e}{\alpha}{\potSymb}
}

    \end{equation*}
    We conclude the case as follows.
    \begin{align*}
      \bigwedge \mathcal{G} \Rightarrow \Psi(\Gamma\sigma) &\geq \Psi'(\Gamma\sigma) &\text{Premise II}\\
                  & \geq c +  \Phi'(v) & \text{MIH}\\
                  & \geq c +  \Phi(v) & \text{Premise III}
    \end{align*}
  \item[Case.] $\Pi$ derives $\eval{\sigma}{c}{e}{v}$ and $\Xi$ ends with a application of the rule $\ruleshare$:
    \begin{equation*}
      \inferrule*[right=\rlabel{\ruleshare}]{
  \Psi'(\Gamma,\typed{x}{\alpha},\typed{y}{\alpha}) = \Psi(\Gamma,\typed{z}{\alpha})
  \\
  \tjudge{\Gamma, \typed{x}{\alpha}, \typed{y}{\alpha}}{\Psi'}{e[x,y]}{\beta}{\Phi}
}{
  \tjudge{\Gamma, \typed{z}{\alpha}}{\potAltSymb}{e[z,z]}{\beta}{\potSymb}
}

    \end{equation*}
    We then have
    \begin{align*}
      \Psi((\Gamma, \typed{z}{\alpha})\sigma) & = \Psi'((\Gamma, \typed{x}{\alpha}, \typed{y}{\alpha})\sigma) & \text{Premise I}\\
                          & \ge c + \Phi(v) & \text{MIH}
    \end{align*}

  \item[Case.] Consider $\Xi$ ending in the final structural rule $\rulewvar$ and let $\Pi$ derive $\eval{\sigma}{c}{e}{v}$.
    \begin{equation*}
      \inferrule*[right=\rlabel{\rulewvar}]{
  \changed{\Psi(\Gamma, \typed{x}{\alpha}) \geq \Psi'(\Gamma)}
  \\\\
  \tjudge{\Gamma}{\potAltSymb'}{e}{\beta}{\potSymb}
}{
  \tjudge{\Gamma, \typed{x}{\alpha}}{\potAltSymb}{e}{\beta}{\potSymb}
}

    \end{equation*}
    From the premises we, we obtain the theorem as follows. 
    \begin{align*}
      \Psi(\Gamma, \typed{x}{\alpha})\sigma &\geq \Psi(\Gamma\sigma) & \text{Premise II}\\
                          &= \Psi'(\Gamma\sigma) & \text{Premise I}\\
                          & \ge c + \Phi'(v) & \text{MIH}
    \end{align*}
  \item[Case.] $\Pi$ derives $\eval{\sigma}{0}{x}{v}$ and we have $x\sigma = v$. Then $\Xi$ consists of a single application of the rule $\rulevar$:
    \begin{equation*}
      \inferrule*[right=\rlabel{\rulevar}]{
}{
  \tjudge{\typed{x}{\alpha}}{\Phi}{x}{\alpha}{\Phi}
}

    \end{equation*}
    By applying $\sigma$ we trivially obtain the theorem, $\Phi((\typed{x}{\alpha})\sigma) = \Phi(v)$.
  \item[Case.] $\Pi$ derives $\eval{\sigma}{0}{\const x_1\,\dots\,x_n}{\const v_1\,\dots\,v_n}$ where $\sigma$ satisfies $x_1\sigma = v_1, \dots, x_n\sigma = v_n$. In this case $\Xi$ consists of a single application of the rule $\ruleconst$ or an application of the rule $\ruleunfold$.
    \begin{description}
        \item[Case.] The rule $\ruleconst$ was applied.
          \begin{equation*}
            \inferrule*[right=\rlabel{\ruleconst}]{
 \changed{p_{\alpha,\const}(\vec{x}) \Rightarrow} \ \Psi(\vec{x}) = \Phi(\typed{\const\ \vec{x}}{\alpha})}
{
  \tjudge{\typed{\vec{x}}{\vec{\alpha}}\changed{;p_{\alpha,\const}(\vec{x})}}{\Psi}{\const\ \vec{x}}{\alpha}{\Phi}
}

          \end{equation*}
          We instantiate the premise with the substitution $\sigma$ to obtain the theorem:
          \begin{align*}
            p_{\alpha,\const}(\vec{x}) \Rightarrow \Psi(\vec{x}\sigma) = \Phi(\typed{\const\ \vec{x}\sigma}{\alpha}) = \Phi({\const v_1\,\dots\,v_n}) = \Phi(v)
          \end{align*}
        \item[Case.] The rule $\ruleunfold$ was applied.
          \begin{equation*}
            \inferrule*[Right=\rlabel{\ruleunfold}]{
  \Psi'(\vec{x}) = \Psi(\vec{x}, \typed{\val}{\alpha})\\
  \tjudge{\vec{x}}{\Psi'}{\const\ \vec{x}}{\alpha}{\Phi}
}{
  \tjudge{\vec{x},\typed{\val}{\alpha}}{\Psi}{\const\ \vec{x}}{\alpha}{\Phi}
}

          \end{equation*}
          With an application of the MIH we obtain:
          \begin{align*}
            \Psi((\vec{x}, \typed{\val}{\alpha})\sigma) & = \Psi'(\vec{x}\sigma) & \text{Premise I}\\
            & \ge \Phi(v) & \text{MIH}\\
          \end{align*}
    \end{description}
  \item[Case.] Assume $\Pi$ ends as follows
    \begin{equation*}
      \inferrule*[right=\rlabel{\erule{Match}}]{
  x\sigma = \const v_1\,\dots\,v_n \\ \eval{\sigma''}{c}{e}{v}}
  {\eval{\sigma}{c}{\flstk{match }x\flstk{ with} \begin{array}[t]{l}%
    \ \dots \quad \flst{| } (\const x_1\,\dots\,x_n) \flst{ -> } e \quad \dots
  \end{array}}{v}}

    \end{equation*}
    Wlog.\ we may assume that $\Xi$ ends in an application of the rule $\rulematch$:
    \begin{equation*}
      \inferrule*[right=\rlabel{\rulematch}]{
\changed{p_{\alpha,\const_i}(\vec{x}_i) \Rightarrow} \Psi_i(\Gamma, \vec{x}_i) = \potAlt{\Gamma, \const_i \vec{x}_i} \\
 \tjudge{\Gamma, \typed{\vec{x_i}}{\vec{\alpha}_i}\changed{;p_{\alpha,\const_i}(\vec{x}_i)}}{\Psi_i}{e_i}{\beta}{\Phi}
 }{
   \tjudge{\Gamma, \typed{x}{\alpha}}{\Psi}{
     {\begin{array}[t]{l}
       \flstk{match }x\flstk{ with }
       \flst{| }\const_1\vec{x}_1\flst{ -> }e_1
       \flst{| }\dots
       \flst{ | }\const_n \vec{x}_n\flst{ -> }e_n
     \end{array} } }{\beta}{\Phi}
}

    \end{equation*}
    Wlog. we assume that $x\sigma = \const_i \vec{v}$. We then conclude the case by applying the first premise of rule, followed by an application of the MIH:
    \begin{align*}
      p_{\alpha,\const_i}(\vec{x}_i) \Rightarrow  \Psi((\Gamma,\typed{x}{\alpha})\sigma) &= \Psi(\Gamma\sigma,\const_i \vec{x}_i)\\
                           &= \Psi_i((\Gamma,\typed{\vec{x}_i}{\vec{\alpha}_i})\sigma)& \text{Premise}\\
                           &\ge c + \Phi(v) & \text{MIH}\\
    \end{align*}
  \item[Case.] Assume $\Pi$ ends as follows
    \begin{equation*}
      \inferrule*[lab=\rlabel{\erule{True}}]{
\eval{\sigma}{c}{e_1}{\flstk{true}} \\
\eval{\sigma}{d}{e_2}{v}}
{\eval{\sigma}{c + d}{\flstk{if }e_1\flstk{ then }e_2\flstk{ else }e_3}{v}}

    \end{equation*}
    We may assume that $\Xi$ ends in an application of the rule $\ruleite$,
    \begin{equation*}
      \inferrule*[right=\rlabel{\ruleite}]{
  \changed{\tjudge{\Gamma}{\Psi_1}{e_1}{\Bool}{0}} \\
  \tjudge{\Gamma\changed{; \wfguard{e_1}}}{\Psi_2}{e_2}{\alpha}{\Phi} \\
  \tjudge{\Gamma\changed{; \wfguard{\neg e_1}}}{\Psi_2}{e_3}{\alpha}{\Phi}
}{
  \tjudge{\Gamma}{\Psi_1 + \Psi_2}{\flstk{if }e_1\flstk{ then }e_2\flstk{ else }e_3}{\alpha}{\Phi}
}

    \end{equation*}
    First we apply the MIH for $\tjudge{\Gamma}{\Psi_1}{e_1}{\Bool}{\Phi_{\downarrow}}$. Then since $e_1$ evaluates to $\true$, the predicate $e_1$ holds, which allows us to apply the MIH a second time for $\tjudge{\Gamma; e_1}{\Psi_2}{e_2}{\alpha}{\Phi}$ 
    \begin{align*}
      (\Psi_1 + \Psi_2)(\Gamma\sigma) &= \Psi_1(\Gamma\sigma) + \Psi_2(\Gamma\sigma) \\
                      &\geq \Psi_2(\Gamma\sigma)\\
                      &\geq c +\Phi(v) & \text{MIH}
    \end{align*}
    The symmetric case with $\eval{\sigma}{c}{e_1}{\flstk{false}}$, follows from the same argument.
    
    \item[Case.] Consider $e = \flstk{let }x\flst{ = }e_1\flstk{ in }e_2$, that is, $\Pi$ ends in the following rule:
  \begin{equation*}
    \inferrule*[lab=\rlabel{\erule{Let}}]{
        \eval{\sigma}{c_1}{e_1}{w}\\
        \eval{\sigma[x \mapsto w]}{c_2}{e_2}{v}       
}{
        \eval{\sigma}{c_1 + c_2}{\flstk{let }x\flst{ = }e_1\flstk{ in }e_2}{v}
}

  \end{equation*}
  where $c = c_1 + c_2$. In this case $\Xi$ ends in an application of the $\rulelet$ rule.
  \begin{equation*}
\inferrule*[right=\rlabel{\rulelet}]{
  \Psi(\Gamma,\Delta) = \Psi_1(\Gamma) + \Psi_{\Delta}(\Delta) + \Psi_{\text{mix}}(\Gamma,\Delta)\\
  \tjudge{\Gamma}{\Psi_1}{e_1}{\alpha}{\Phi_1} \\\\
  \Psi_2(\typed{x}{\alpha},\Delta) = \Phi_1(e_1) + \Psi_{\Delta}(\Delta) + \Omega(\typed{x}{\alpha},\Delta) \\
  \tjudge{\Delta, \typed{x}{\alpha}}{\Psi_2}{e_2}{\beta}{\Phi}\\\\
  \bigwedge \Psi^i(\Gamma) \geq \Phi^i(\typed{x}{\alpha}) \Rightarrow  \Psi_{\text{mix}}(\Gamma,\Delta) \geq \Omega(\typed{x}{\alpha},\Delta)\\
  \tjudgenacf{\Gamma{\mid}\Psi^i}{e_1}{\alpha}{\Phi^i} \\
  }{
  \tjudge{\Gamma, \Delta}{\Psi}{\flstk{let }x\flst{ = }e_1\flstk{ in }e_2}{\beta}{\Phi}
}

  \end{equation*}
  
  By MIH we have
  \begin{equation*}
    \Psi_2((\typed{x}{\alpha}, \Delta)\sigma[x \mapsto w]) \geq c_2 + \Phi(v) \tpkt
  \end{equation*}
  So in order to prove the theorem, i.e.
  \begin{equation*}
    \Psi((\Gamma, \Delta)\sigma) \geq c_1 + c_2 +  \pot{v} \tkom
  \end{equation*}
  we are left to show
  \begin{equation*}
    \Psi(\Gamma, \Delta)\sigma \geq c_1 + \Psi_2((\typed{x}{\alpha}, \Delta)\sigma[x \mapsto w]) \tpkt
  \end{equation*}
  Instatiating the MIH as $\Psi_1(\Gamma) \geq c_1 + \Phi_1(w)$ and applying it to the first premise of the rule we get:
  \begin{align*}
     \Psi((\Gamma,\Delta)\sigma) &\geq c_1 + \Phi_1(w) + \Psi_{\Delta}(\Delta\sigma) + \Psi_{\text{mix}}((\Gamma,\Delta)\sigma)
  \end{align*}
  Applying the MIH to the cost-free typings in the premise, we obtain the key inequality
  \begin{equation*}
    \Psi_{\text{mix}}(\Gamma,\Delta) \geq \Omega(\typed{x}{\alpha},\Delta) \tkom
  \end{equation*}
  which we instantiate, as
  \begin{equation*}
    \Psi_{\text{mix}}((\Gamma,\Delta)\sigma) \geq \Omega((\typed{x}{\alpha},\Delta)\sigma[x \mapsto w]) \tpkt
  \end{equation*}
  Applying this inequality to previous one concludes the case. 
\item[Case.] Next we assume that $\Xi$ ends in an application of the rule $\ruleapp$ and $\Pi$ with $\eval{\sigma}{c}{f x_1\, \dots\, x_k}{v}$. We then have $f(y_1,\dots, y_k) = e \in P$ and since $P$ is well-typed, we have
  \begin{mathpar}
    \tjudge{\typed{\vec{y}}{\vec{\alpha}}}{\Psi}{e}{\beta}{\Phi} \and \text{and} \and \tjudge{\typed{\vec{y}}{\vec{\alpha}}}{\Psi_\text{cf}}{e}{\beta}{\Phi_\text{cf}}
  \end{mathpar}
  Applying the MIH with respect to $\Pi' = \eval{\sigma'}{c}{e}{v}$, we get $\Psi((\typed{\vec{y}}{\vec{\alpha}})\sigma') \geq c + \Phi(v)$, and $\Psi_\text{cf}((\typed{\vec{y}}{\vec{\alpha}})\sigma') \geq \Phi_\text{cf}(v)$. Let $k \in \Qplus$. We then obtain the theorem as follows.
  \begin{align*}
    (\Psi + k \cdot \Psi_\text{cf})((\typed{\vec{x}}{\vec{\alpha}})\sigma)
    &= \Psi((\typed{\vec{x}}{\vec{\alpha}})\sigma) + k \cdot \Psi_\text{cf}((\typed{\vec{x}}{\vec{\alpha}})\sigma)\\
    &= \Psi((\typed{\vec{y}}{\vec{\alpha}})\sigma') + k \cdot \Psi_\text{cf}((\typed{\vec{y}}{\vec{\alpha}})\sigma')\\
    &\geq c + \Phi(v)+ k \cdot \Phi_\text{cf}(v)\\
    &= c + \Phi(v) +  k \cdot \Phi_\text{cf}(v) \\
    &= c + (\Phi +  k \cdot \Phi_\text{cf})(v)
  \end{align*}
\item[Case.] Finally consider $\Xi$ ending in an application of the rule $\ruletick$ and let $\Pi$ derive $\eval{\sigma}{c + a}{\tick[a]{e}}{v}$. We can then apply the MIH with respect to $\eval{\sigma}{c}{e}{v}$ and the premise of the rule and obtain the theorem as follows.
  \begin{align*}
    \Psi(\Gamma\sigma) & \ge c + \Phi(v) & \text{MIH}\\
          & = c + \Phi(v) - a + a & \\
          & = c + a + \Phi(v) - a &\\
          & = c + a + (\Phi - a)(v)&
  \end{align*}
  \end{description}
\end{proof}

\section{Implementation}
\label{sec:implementation}
In this section, we briefly sketch the implementation of our prototype. At its core is a type inference algorithm for the algorithmic type system introduced above. The tool is written in Haskell and follows a modular design, enabling the analysis of different data types and potential functions. Its workflow can be divided into two main stages:
\begin{enumerate}
\item \emph{Frontend}: Parses the input program into an abstract syntax tree (AST), performs type inference, applies program transformations, and contextualises the program.
\item \emph{Resource Analysis}: Derives types with resource annotations, generates the corresponding constraint system, solves it optimally, and computes amortised bounds.
\end{enumerate}

\subsubsection{Frontend.} The abstract grammar from Figure~\ref{fig:syntax} is implemented using the parser combinator library \texttt{megaparsec}\footnote{\url{https://github.com/mrkkrp/megaparsec}}. After parsing, we apply a variant of Hindley–Milner type inference to assign plain types to every expression. These types are stored in the AST and later guide the inference of resource templates.

The program is then transformed into let-normal form, as required by the semantics and  type system. Finally the AST is supplemented with context clues that guide tactic generation for the main type derivation algorithm (see the discussion of heuristics below).

\subsubsection{Resource Analysis.} To derive type annotations, we apply Algorithm~\ref{alg:inference}. The subroutine $(*)$ below, which implements the algorithmic type system described in Section~\ref{sec:inference}, constitutes the core of this algorithm. It generates constraints by applying the rules of the type system according to a combination of syntax-directed and heuristically generated tactics. In line with the main soundness theorem, there exists a cost-free variant of this routine, which applies the corresponding cost-free versions of the type rules, when deriving a cost-free signature.

\begin{algorithm}[H]
  \caption{Algorithm for Amortised Resource Analysis}
  \label{alg:inference}
  \begin{algorithmic}[1]
    \State \textbf{Input:} program $\mathcal{P}$, analysis mode
    \State \textbf{Output:} proof tree and coefficients or proof tree and unsat core
    \State add signature constraints based on the analysis mode
    \State create a fresh resource template representing a potential function $\Phi_\alpha$ for type $\alpha$.
    \For{each function $f$ in program $\mathcal{P}$}
      \State create a fresh template $\Psi$ as LHS for $f$'s signature, 
      \State \hspace{1.2cm} and set RHS to $\Phi_\alpha$ or 0.
      \State create fresh cost-free signatures for $f$.
      \For{each signature of $f$}
        \State generate constraints and proof tree via syntax-directed type derivation (*)
      \EndFor
    \EndFor
    \State $\text{coefficients} \gets \text{solve and optimize the constraints}$
    \If{the constraints are satisfiable}
      \State \Return proof tree, coefficients, and compute bounds
    \Else
      \State \Return proof tree and unsat core
    \EndIf
  \end{algorithmic}
\end{algorithm}

We allow the annotation of each function with a mode pragma, where default is to generate one signature with costs where the RHS of the signature is set the potential function for the corresponding data type. Additionally we support two alternative modes. With \lstinline|{-# MODE worst_case #-}|, the potential function is set to zero, thereby performing a worst-case analysis and \lstinline|{-# MODE hybrid #-}|, creates two costed signatures for the function --- one with potential and one worst-case signature. This special mode is required to perform the amortised analysis of \texttt{meld} for rank-biased leftist heaps. Here the rule $\ruleapp$ combines the results of a worst-case analysis with a cost-free typing to obtain lowest amortised cost.

When it comes to cost-free signatures the default behaviour is to generate a single cost-free signature, which is sufficient for the majority of examples, but we allow to set the number of cost-free signatures per function with another pragma, e.g.~\lstinline|{-# NUM_CF_SIGS 2 #-}|. For the analysis of skew heaps with the sums-of-logs potential, for example, two cost-free signatures are required for both \texttt{meld} and \texttt{bal}.

\subsubsection{Heuristics} As mentioned earlier the type inference algorithm (*) needs to apply the structural rules in syntax directed fashion. A straightforward solution would be to apply the structural rules before every syntax-directed rule, but this leads to an intractable number of constraints~\cite{leutgeb2021atlas}. Instead, we apply them heuristically only when needed and restrict expert knowledge to parametrisations of $\ctrule{W}$, thereby reducing the number of generated constraints. To support this approach, the AST is annotated with contextual cues that trigger applications of structural rules, after which inference proceeds syntax-directed.

For example, the cue \texttt{pseudo leaf} marks an expression that corresponds to a leaf in the resulting proof tree but is not bound by a let expression. Encountering this cue triggers an application of $\ctrule{W}$ with monotonicity expert knowledge. Similarly, $\ruleshift$ is applied to ticked function applications, and $\rulewvar$ is applied before $\ctrule{Const}$ and $\ctrule{Var}$ to eliminate superfluous variables from the context. Most of these heuristics originate from Leutgeb et al.~\cite{leutgeb2021atlas}; we adapted them where necessary for the new benchmarks.

\paragraph{Template Languages.} Different template languages are implemented as Haskell modules and can be selected by the user on a file-by-file basis. For example, the pragma \lstinline|{-# POTENTIAL (Tree Base: logr) #-}| specifies that the data type \texttt{Tree Base} is to be analysed using the template language \texttt{logr}, which corresponds to $\tlangSol$ instantiated with $\params = (0,\sfrac{1}{2},0)$. This information is used by the inference algorithm to generate the appropriate resource templates.

\paragraph{SMT Interface.}
During type inference we collect template constraints represented by a custom data type, that is subsequently translated into a solver-API specific representation. In addition to those obligations, we add signature constraints, e.g. we ensure potential functions are positive and when performing type checking we constrain the signatures to the given values. Currently our implementation only supports the popular solver Z3~\cite{de2008z3}, which provides an option for optimisation, a crucial feature for our inference approach. In addition to the actual result, we produce a proof tree in HTML format. This facilitates debugging, as we can track the constraints identified by Z3’s unsat core and trace them back to their origin in the type derivation.

\paragraph{Optimisation.}
To derive the optimal resource types, we make use of Z3’s optimisation feature, guided by a cost function that models the amortised costs. For this cost-function we express the amortised costs, $\Psi(\vec{x}) - \Phi(\vec{x})$, where $\vec{x}$ denotes the function arguments, purely symbolically, i.e.~in terms of LRA coefficients. We define the symbolic cost $\symcost(\Psi, \Phi)$ of a signature, in this example for the $\tlangSol$, as 
\begin{align*}
  &\symcost(\Psi, \Phi)_x = q_{x} - p_{\val}\\
  &\symcost(\Psi, \Phi)_{(ax + c)} = q_{(ax + c)} - p_{(a\val + c)}\\
  &\symcost(\Psi, \Phi)_{i} = q_{i} - p_{i}
\end{align*}
where $\coeff(\Psi) = [q_i]$ and $\coeff(\Phi) = [p_i]$. To apply this approach to multi-argument potential functions, we follow~\cite{sleator1986self} and set $\phi(x_1, \dots, x_n) := \phi(x_1) + \dots + \phi(x_n)$. This is required, for example, when determining the costs of \texttt{meld\ x\ y}. Even-though we only ever derive $\Phi(\val)$, a single argument function, we can easily define $\Phi(\typed{x}{\Tree},\typed{y}{\Tree})$ as $\Phi(\typed{x}{\Tree}) + \Phi(\typed{y}{\Tree})$.
We guide optimisation by introducing a weight function that penalises larger terms:
\begin{align*}
  &w((c)) = 1\\
  &w(x) = 100\\
  &w((x)) = 1\\
  &w((\vec{a}\vec{x} + c)) = (1 + 1^T\vec{a} + 2 * c)^2
\end{align*}
This leads us to the definition of the cost function $c(\Psi, \Phi)$ as the weighted sum of the symbolic costs for every term in the resource templates of a given signature.
\begin{equation*}
  c(\Psi, \Phi) = \sum_i w(i) \cdot \symcost(\Psi, \Phi)_i 
\end{equation*}
\begin{example}
Recall that for \texttt{swap} we have amortised costs $\Psi(\typed{x}{\Tree}) - \Phi(\typed{x}{\Tree}) = \log\abs{x} + \phi(x) - \phi(x) = \log\abs{x}$. This solution has the following costs:
\begin{align*}
  c(\Psi, \Phi) & = w((x)) \cdot \symcost(\Psi, \Phi)_{\log\abs{x}} + w(x) \cdot \symcost(\Psi, \Phi)_{\phi(x)}\\
          & = 4 \cdot \symcost(\Psi, \Phi)_{\log\abs{x}}\\
            & = 4
\end{align*}
\end{example}

\section{Value Variables}
\label{ap:value-variables}

In the following we describe a technique that allows to include references to the result expression in the cost terms, e.g. $\log(\abs{\texttt{meld}\ x\ y})$. With our type system it is not possible to unfold such terms with their definition immediately, as one would do in a manual proof. Instead we extend the typing context by value variables such that for any typing judgement $\tjudge{\Gamma, \val}{\Psi}{e}{\alpha}{\Phi}$, we define $\val := e$. By doing so we allow the analysis to treat references to the expression purely symbolically until unfolding becomes trivial, i.e.~when $\val$ refers to a value. In principle, we could conclude the derivation with a rule analogous to $\rulevar$. Instead, we introduce the rule $\ruleunfold$, which leaves the derivation open and thereby enables further applications of structural rules—most notably the rule $\rulew$.
\begin{equation*}
  
\end{equation*}
To demonstrate that $\Psi'(\vec{x}) = \Psi(\vec{x}, \typed{\val}{\alpha})$ can be discharged in an algorithmic fashion, we give the following intuition. We can split up $\Psi(\vec{x}, \typed{\val}{\alpha})$ based on the occurring variables:
\begin{equation*}
  \Psi(\vec{x}, \typed{\val}{\alpha}) = \Psi_{\vec{x}}(\vec{x}) + \Psi_{\val}(\val)
\end{equation*}
Of course this only works if there are no terms in $\Psi$ that mix variables from $\vec{x}$ and $\val$, a requirement that we enforce in our resource templates. We can than for a resource function $\Psi_1$ with an application of $\ruleconst$ verify that $\Psi_1(\vec{x}) = \Psi_{\val}(\val)$. Finally we need to ensure $\Psi'(\vec{x}) = \Psi_{\vec{x}} + \Psi_1(\vec{x})$ through suitable constraints that model coefficient wise addition. 
\begin{example}
  \label{ex:unfold}
  The type derivation of \texttt{meld} with value variables requires the following application of $\ruleunfold$, in which the analysis defers reasoning about $\log\abs{\val}$ until its value is known.
  \begin{equation*}
    \inferrule*[lab=\scriptsize\ruleunfold]{
      \inferrule*[right=\scriptsize\ruleconst]{ }{
          \tjudgeml{\typed{t}{\Tree}, \typed{z}{\Tree} \mid \log(\abs{z} + \abs{t})}{\tree{z}{a}{t}}{\Tree}{\log\abs{\val}}
      }
      \inferrule*[right=\scriptsize\rulew]{
        \inferrule*[right=\scriptsize\ruleconst]{ }{
          \tjudgeml{\typed{t}{\Tree}, \typed{z}{\Tree} \mid \phiPw(z) + [\abs{z} < \abs{t}] \\ + \phiPw(t)}{\tree{z}{a}{t}}{\Tree}{\phiPw(\val)}
        }
      }{
          \tjudgeml{\typed{t}{\Tree}, \typed{z}{\Tree} \mid \log(\abs{z} + \abs{t}) - \log\abs{z} \\+ \phiPw(t) + \phiPw(z)}{\tree{z}{a}{t}}{\Tree}{\phiPw(\val)}
      } 
    }{
      \tjudgeml{\typed{\val}{\Tree}, \typed{t}{\Tree}, \typed{z}{\Tree} \mid \log\abs{\val} - \log\abs{z} + \phiPw(t) + \phiPw(z)}{\tree{z}{a}{t}}{\Tree}{\phiPw(\val)}
    }    
  \end{equation*}

\end{example}

\subsection{Optimisation for Value Variables}

With the introduction of value variables, the optimization approach introduced in Section~\ref{sec:implementation}, encounters a fundamental issue: multiple terms may represent the same value. In other words, distinct terms can be equivalent and cancel each other out in the cost function. This leads to an unbounded objective, which makes optimisation infeasible. For instance, in the case of skew heaps we have the identity $\log(\abs{\val} + 1) = \log(\abs{x} + \abs{y})$ so the solver could freely assign arbitrarily large positive or negative coefficients to these terms without affecting correctness.

To address this issue, we make use of a standard trick from optimisation: instead of allowing equivalent terms to offset each other directly, we introduce absolute values into the cost function. Since SMT solvers cannot handle non-linear absolute values natively, we encode them in a linear way using epigraph variables.

Concretely, for each symbolic cost term $t$ we introduce a fresh variable $u_t \geq 0$ that represents $\abs{t}$. This is enforced by the pair of linear constraints $u_t \geq t$, $u_t \geq -t$.
Thus $u_t$ always bounds the absolute value of $t$. In the objective function we then minimise (or weight) $u_t$ in addition to $t$. This retains the original cost semantics while ensuring the optimisation remains bounded.

\subsection{Experimental Comparison}

As mentioned earlier, using value variables yields simpler proofs. In particular, for $\texttt{meld}$ no cost-free typings are required, which makes the derivation computationally less expensive. Table~\ref{tab:compare-value} compares inference times for skew heaps under our two variants of potential functions, as well as for rank-biased heaps. The results show a significant difference in inference time, which correlates with the number of cost-free typings. This highlights a trade-off: with value variables, the additional reasoning steps captured by cost-free typings are shifted outside the analysis. Interestingly, the number of constraints alone does not determine performance. 

\begin{table}[h]
  \centering
  \caption{Comparison of type inference for \texttt{meld} with and without value variables. Inference time is the median of five runs. Here \#cf denotes the number of required cost-free signatures and \#constraints the number of generated constraints.}
  \label{tab:compare-value}
  \begin{tabularx}{\textwidth}{Xlcc@{\hspace{8ex}}lcc}
    \toprule
    Benchmark & \multicolumn{3}{c}{Plain} & \multicolumn{3}{c}{Value Variables} \\
              & Time & \#cf & \#constraints & Time & \#cf & \#constraints\\
    \midrule
    piecewise & 32s & 1 & 9895 & 14s & 0 & 18461 \\
    sum-of-logs & 12m13s & 2 & 8729 & 1m49s  & 0 & 14152\\
    rank-biased & 1h08m43s & 1& 11399 & 21s & 0 & 10470\\
    \bottomrule
  \end{tabularx}
\end{table}

\section{Ordering of Logarithmic Terms respecting Guard Predicates}
\label{sec:guard-mono}

To understand our approach of incorporating both monotonicity and guard predicates into expert knowledge instantiation, we take a close look at how logarithmic terms are compared. First, we define a partial order over the set of logarithmic template terms of a particular resource template $Q$. Let the variables $\vec{x}$ represent tree sizes, so we have $\vec{x} \geq \vec{1}$.
To define an order relation $\le$ between two terms $\log(\vec{a} \cdot \vec{x} + c)$ and $\log(\vec{b}\cdot \vec{x} + d)$, we must produce a certificate for
\[
(\vec{a} - \vec{b})^T\vec{x} \leq d - c
\]
holding for all feasible $\vec{x}$. 

If the only constraints on $\vec{x}$ are $\vec{x} \ge \vec{1}$, the expression $(\vec{a} - \vec{b})^T\vec{x}$ is bounded above iff all coefficients are non-positive, and the maximum is attained at $\vec{x} = \vec{1}$. This yields the simple condition:
\begin{equation*}
  \log(\vec{a} \cdot \vec{x} + c) \leq \log(\vec{b}\cdot \vec{x} + d) \quad\Leftrightarrow\quad \vec{a} - \vec{b} \leq \vec{0}, \quad (\vec{a} - \vec{b})^T\vec{1} \leq d - c.
\end{equation*}

However, in our new setting we also have additional linear constraints $(C-D)\vec{x} \le \vec{0}$ arising from guard predicates involving tree sizes. In this case, the above simplification is no longer sufficient. A complete $\vec{x}$-free certificate can be derived from linear programming duality:

\begin{lemma}
Let $\vec{p} := \vec{a} - \vec{b}$ and $\beta := d - c$. Then 
\begin{equation}
  \label{eq:logMonoDual1}
  \vec{p}^T \vec{x} \le \beta \quad \forall\, \vec{x} \ge \vec{1},\ (C-D)\vec{x} \le 0  
\end{equation}
holds if and only if there exist $\vec{v} \ge 0$ such that
\begin{equation}
  \label{eq:logMonoDual3}
  (C-D)^T \vec{v} \geq \vec{p}, \quad \vec{p}^T\mathbf{1} - \vec{v}^T(C-D)\mathbf{1} \le \beta.  
\end{equation}
\end{lemma}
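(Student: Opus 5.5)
The plan is to reduce the statement to Farkas' Lemma (Lemma~\ref{lem:farkas}) by shifting variables so that the constraint $\vec{x} \ge \vec{1}$ becomes the non-negativity constraint that Lemma~\ref{lem:farkas} expects. Write $A := C - D$ and set $\vec{y} := \vec{x} - \vec{1}$. Then $\vec{x} \ge \vec{1}$ becomes $\vec{y} \ge \vec{0}$, and $A\vec{x} \le \vec{0}$ becomes $A\vec{y} \le -A\vec{1} =: \vec{b}$. The target inequality $\vec{p}^T\vec{x} \le \beta$ becomes $\vec{p}^T\vec{y} \le \beta - \vec{p}^T\vec{1} =: \lambda$. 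Hence \eqref{eq:logMonoDual1} is exactly assertion (i) of Lemma~\ref{lem:farkas}, taking $\vec{u} := \vec{p}$ together with this $A$, $\vec{b}$ and $\lambda$.

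For the direction \eqref{eq:logMonoDual3} $\Rightarrow$ \eqref{eq:logMonoDual1} I would not invoke Farkas at all, since a direct chain is simpler. Take any $\vec{x} \ge \vec{1}$ with $A\vec{x} \le \vec{0}$, and let $\vec{v} \ge 0$ be the certificate from \eqref{eq:logMonoDual3}. Since $\vec{x} - \vec{1} \ge \vec{0}$ and $A^T\vec{v} \ge \vec{p}$, we get $\vec{p}^T(\vec{x} - \vec{1}) \le \vec{v}^T A(\vec{x} - \vec{1})$. Since $\vec{v} \ge 0$ and $A\vec{x} \le \vec{0}$, we get $\vec{v}^T A\vec{x} \le 0$. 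Combining these,
\[
  \vec{p}^T\vec{x} \le \vec{p}^T\vec{1} - \vec{v}^T A\vec{1} \le \beta ,
\]
where the last step is the second condition in \eqref{eq:logMonoDual3}.

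For the direction \eqref{eq:logMonoDual1} $\Rightarrow$ \eqref{eq:logMonoDual3} I would apply (i)$\Rightarrow$(ii) of Lemma~\ref{lem:farkas} in the shifted coordinates. This yields $\vec{f} \ge 0$ with $\vec{p}^T \le \vec{f}^T A$ and $\vec{f}^T\vec{b} \le \lambda$. Unfolding these gives $-\vec{f}^T A\vec{1} \le \beta - \vec{p}^T\vec{1}$, so $\vec{v} := \vec{f}$ is the required certificate.

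The main obstacle is the solvability hypothesis of Lemma~\ref{lem:farkas}: the shifted system $A\vec{y} \le \vec{b}$, $\vec{y} \ge 0$ must be solvable, i.e.\ the guard predicates must admit some feasible $\vec{x} \ge \vec{1}$. I would split on this. In the feasible case the argument above applies. In the infeasible case \eqref{eq:logMonoDual1} holds vacuously, so I must still produce a $\vec{v}$. Here I would use the alternative form of Farkas' Lemma: infeasibility gives $\vec{w} \ge 0$ with $\vec{w}^T A \ge \vec{0}^T$ and $\vec{w}^T\vec{b} < 0$. Starting from any $\vec{v}_0 \ge 0$ with $A^T\vec{v}_0 \ge \vec{p}$, the vector $\vec{v} := \vec{v}_0 + t\vec{w}$ satisfies the second condition of \eqref{eq:logMonoDual3} for $t$ large. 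The existence of such a $\vec{v}_0$ is the delicate point and may need an extra argument. I would first try to argue that, for the guard predicates actually arising (comparisons $\#t \ge \#u$ between distinct trees, realised for instance by sufficiently large $\#t$), feasibility always holds, so this case can be excluded.
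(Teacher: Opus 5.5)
Your reduction is the paper's proof. The paper also sets $\vec{y}=\vec{x}-\vec{1}$ and instantiates Lemma~\ref{lem:farkas} with $A:=C-D$, $\vec{b}:=-(C-D)\vec{1}$, $\vec{u}:=\vec{p}$, $\lambda:=\beta-\vec{p}^T\vec{1}$ and $\vec{f}:=\vec{v}$. Your direct argument that a certificate implies the bound is correct and needs no feasibility. The problem is your treatment of the infeasible case.

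You are right that the paper's one-line proof silently drops the solvability hypothesis of Lemma~\ref{lem:farkas}. However, your repair of the infeasible case cannot be completed, because the equivalence is false there, for arbitrary $C,D$ as stated. Take two variables and the single row $C-D=(1\ \ 0)$, i.e.\ the constraint $x_1\le 0$. No $\vec{x}\ge\vec{1}$ satisfies it, so the quantified inequality holds vacuously for $\vec{p}=(0,1)^T$ and every $\beta$. But $(C-D)^T\vec{v}=(v,0)^T$ never dominates $(0,1)^T$, so no certificate exists. Hence the $\vec{v}_0\ge 0$ with $(C-D)^T\vec{v}_0\ge\vec{p}$ that your construction starts from need not exist. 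The $\vec{v}_0+t\vec{w}$ step is fine in itself, but it has nothing to act on.

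The correct fix is your fallback, made an explicit hypothesis: assume $\{\vec{x}\ge\vec{1} \mid (C-D)\vec{x}\le\vec{0}\}$ is nonempty, exactly as Lemma~\ref{lem:farkas} does, and then verify it for the paper's use. There the right witness is not a ``sufficiently large $\#t$'', which breaks down once several guards interact (e.g.\ $\#t\ge\#u$ together with $\#u\ge\#t$). The witness is simply $\vec{x}=\vec{1}$. Each guard compares two single sizes ($\#t\ge\#u$ iff $\abs{t}\ge\abs{u}$), and a strict guard can only enter the homogeneous system in non-strict form. So every row of $C-D$ has one entry $+1$, one entry $-1$, and zeros elsewhere, giving $(C-D)\vec{1}=\vec{0}$. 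With that hypothesis added, your argument is complete and coincides with the paper's.
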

\begin{proof}
Put $\vec{y}=\vec{x}-\vec{1}$. Then the result is an instantiation of Farkas' Lemma (Lemma~\ref{lem:farkas}), where $A:=C-D$, $\vec{x}:=\vec{y}$, $\vec{b}:=-(C-D)\vec{1}$, $\vec{u}:=\vec{p}$, $\lambda:=\beta-\vec{p}^T \vec{1}$ , and $\vec{f}:=\vec{v}$.
\end{proof}
The simple rule above is recovered as the special case $\vec{v} = \vec{0}$ when no extra constraints are present. We note that in our actual implementation, the certificate is restricted to $\vec{v} = \vec{1}\lambda$ with $\lambda \in \{0,1\}$, to simplify computation, which still yields all required inequalities in our context.


\fi
\end{document}